\documentclass[pre, twocolumn, superscriptaddress, amsmath, amssymb, longbibliography, floatfix]{revtex4-2}
\usepackage[utf8]{inputenc}
\usepackage[T1]{fontenc}
\usepackage{mathtools}
\usepackage{hyperref}
\usepackage{booktabs}
\usepackage{bm}
\usepackage{xcolor}
\usepackage{tikz}
\usetikzlibrary{arrows.meta,positioning,shapes.geometric}

\newcommand{\W}{\mathcal{W}}
\newcommand{\D}{\mathcal{D}}
\newcommand{\Lg}{\mathcal{L}}
\newcommand{\C}{\mathcal{C}}            
\newcommand{\Cd}{\mathcal{C}_\delta}    
\newcommand{\J}{\mathcal{J}}            
\newcommand{\Mr}{M}                     
\newcommand{\Lop}{\mathcal{L}}           
\newcommand{\F}{\mathcal{F}}            
\newcommand{\Da}{\mathrm{Da}}
\newcommand{\Kn}{\mathrm{Kn}}

\newcommand{\pder}[2]{\frac{\partial #1}{\partial #2}}
\newcommand{\eq}{\mathrm{eq}}

\providecommand{\A}{\mathcal{A}}

\newtheorem{prop}{Proposition}
\newtheorem{rem}{Remark}

\begin{document}

\title{Richards' equation as a hydrodynamic limit:\\
Chapman--Enskog reduction of the continuum kinetic equation\\
for unsaturated soil water}

\author{Riccardo Rigon}
\email{riccardo.rigon@unitn.it}
\affiliation{Centro Agricoltura Alimenti Ambiente (C3A), University of Trento, 38098 San Michele all'Adige (TN), Italy}

\date{September 3, 2026}

\begin{abstract}
Richards' equation for unsaturated water flow is derived from the kinetic
theory of the pore-filling distribution $g(r,\mathbf{x},t)\in[0,1]$
developed in a companion paper~\cite{Rigon2026PRE}.  The derivation
separates two limits that the macroscopic theory ordinarily conflates.
The \emph{spatial} limit $\varepsilon=\bar L/\Lambda\to0$ contracts the
representative elementary volume (REV) to a point and is purely kinematic:
it produces the closed \emph{continuum kinetic equation} (CKE)
$\partial_t g+\nabla\!\cdot\!\mathbf{F}=\C[g]$, with $\mathbf{F}$ an
entangled, pre-closure pore-resolved flux and $\C[g]$ the occupancy-gated
intra-REV redistribution operator; this limit is treated
in~\cite{Rigon2026PRE}.  The dynamics is hidden in the \emph{temporal}
limit, which is the subject of this paper: a Chapman--Enskog (CE)
reduction of the CKE, controlled by the Damk\"ohler number
$\Da=\tau_{\rm redis}/\tau_{\rm forcing}$, the ratio of the pore-scale
redistribution time to the macroscopic forcing time.  The small-$\Da$
limit is the structural analogue of the passage from the Boltzmann
equation to Navier--Stokes.  Following Cercignani's treatment of the
linearized collision operator, we show that the linearized redistribution
operator $\J=\C'|_{g_\eq}$ is self-adjoint and negative semidefinite in
the mass inner product $\langle u,v\rangle_f=\int u\,v\,f\,dr$, with a
one-dimensional kernel that fixes a single invariant, water, and hence a
single macroscopic equation; the CE hierarchy inverts the same operator
$\J$ at every order, only the source changing.  Four results follow:
(1)~the equilibrium step $g^{(0)}=H(r^*\!-r)$ defines the retention curve;
(2)~the linearized water budget plus the inter-REV source set the
first-order equation; (3)~its solvability \emph{is} mass conservation,
Richards' equation; (4)~the response function $\bm{\chi}$ yields the
macroscopic flux and identifies the hydraulic conductivity
$K=\phi\,\langle\kappa\,|\,\J^{-1}|\,S\rangle_f$ as a first-order transport
coefficient, the structural counterpart of viscosity in the kinetic theory
of gases.  The mean-field reduction recovers the standard integral formula;
the serial-path correction produces Mualem's heterogeneity penalty
$\exp(-4\sigma^2)$.  A bimodal pore-size distribution opens a gap in the
relaxation spectrum; projecting the CKE onto the resulting bands and
applying CE within each, while the cross-band relaxation survives at
$O(1)$, derives the dual- and multiple-permeability models (Gerke--van
Genuchten, Weiler's IN3M, mobile--immobile) from first principles, with an
exchange coefficient fixed by the inter-band connectivity.  When $\Da$ is not small the one-mode expansion breaks
down; retaining the slow spectral modes explicitly yields a hierarchy
of finite multi-band models that interpolates between Richards'
equation and the full CKE, which is recovered as the endpoint of the
hierarchy.
\end{abstract}

\maketitle

\section{Introduction}
\label{sec:intro}
Richards' equation~\cite{Richards1931,TubiniRigon2022},
\begin{equation}
\pder{\theta}{t}=\nabla\!\cdot\!\big[K(\psi)\,\nabla\psi\big]
  +\pder{K(\psi)}{z},
\label{eq:richards_intro}
\end{equation}
has been the standard model for
unsaturated water flow for nearly a century.  It evolves the water
content $\theta(\mathbf{x},t)$ [-] through a matric potential $\psi$ [m] and a
hydraulic conductivity $K(\psi)$ [m/s] (with $z$ the vertical coordinate,
positive upward), both treated as material properties
fixed by empirical constitutive
relations~\cite{vanGenuchten1980,Brooks1964,Mualem1976}.  Yet the
equation is \emph{postulated} from macroscopic mass conservation and the
Darcy--Buckingham flux law~\cite{Darcy1856,Buckingham1907}, not derived
from a microscopic theory, and its domain of validity is left
unspecified.

Because the derivation below moves between scales, we fix the vocabulary
at the outset.  The \emph{microscale} is that of individual pores and
grains (radius $r$, typically $\mu$m to mm): here water either occupies
a pore or does not, and motion follows the Stokes/Hagen--Poiseuille laws
of viscous flow.  The \emph{mesoscale}, or representative elementary
volume (REV)~\cite{Bear1972}, is a volume large enough, in practice several hundreds of pore
diameters (side
$\bar L\sim$\,mm--cm) to contain a statistically meaningful population of
pores, yet small compared with the macroscopic flow domain.  Over the
REV the pore-by-pore description is replaced by a \emph{distribution}:
instead of asking which individual pore is wet, one asks what
\emph{fraction} of pores of each radius is wet.  The \emph{continuum}
(or macroscale, side $\Lambda\sim$\,cm--m) is that of the field
problem, where $\theta$, $\psi$ and $K$ are smooth fields.  Classical
soil physics works directly at the macroscale; the kinetic theory works
at the mesoscale and reaches the macroscale by a controlled limit
$\bar L/\Lambda\to0$.  The two small parameters that organize the whole
construction, the spatial ratio $\varepsilon=\bar L/\Lambda$ and the
temporal (Damk\"ohler) ratio $\Da$,are defined precisely in
Secs.~\ref{sec:continuum} and~\ref{sec:Da}.

Richards' equation can fail. Under intense rainfall water bypasses fine pores and
follows preferential pathways~\cite{Beven1982,Germann2018}; laboratory
and field conductivities differ by factors of
$10$--$100$~\cite{Diamantopoulos2012}; capillary hysteresis requires ad
hoc closure~\cite{Hilfer2006}; dynamic non-equilibrium
effects~\cite{Hassanizadeh2002} violate the assumption that the
capillary-pressure--saturation relation is unique and instantaneous.
All these failures share one origin: Richards' equation assumes the
internal pore-scale configuration is permanently at local equilibrium,
so that $\theta$ (or $\psi$) is a sufficient state variable.

In a companion paper~\cite{Rigon2026PRE} we develop a kinetic theory
whose state variable is the \textbf{filling distribution}
$g(r,\mathbf{x},t)\in[0,1]$, the fraction of pores of radius $r$ filled
with water at $(\mathbf{x},t)$.  That paper carries the construction up
to the closed \emph{continuum kinetic equation} (CKE)
\begin{equation}
\partial_t g(r,\mathbf{x},t)+\nabla\!\cdot\!\mathbf{F}(r,\mathbf{x},t)
  =\C[g],
\label{eq:contKE}
\end{equation}
Here $g(r,\mathbf{x},t)\in[0,1]$ is the (dimensionless) filling
distribution introduced above; $\C[g]$ is the \emph{redistribution
operator} [s$^{-1}$], which moves water between pore classes inside a REV
and is defined in Sec.~\ref{sec:continuum}; and $\mathbf{F}(r,\mathbf{x},t)$
[m\,s$^{-1}$] is the pore-resolved water flux of class $r$, a velocity
(volume of water of class $r$ crossing unit area per unit time, per unit
of the measure $f\,dr$).  At this stage $\mathbf{F}$ is not yet Darcy's
law: it is the \emph{entangled}, pre-closure flux
\begin{equation}
\mathbf{F}(r,\mathbf{x},t)=-\int_0^\infty\Gamma(r,r';g)\,
  \nabla g(r',\mathbf{x},t)\,dr',
\label{eq:F_intro}
\end{equation}
in which the flux of class $r$ is driven by the gradients of \emph{all}
connected classes $r'$ through the kernel $\Gamma(r,r';g)$, so that
transport coefficient and driving gradient are not yet separated
(the kernel is constructed in Sec.~\ref{sec:gradflux},
Eq.~\eqref{eq:Gamma}).  Equation~\eqref{eq:contKE} is closed by boundary
and initial data: an initial filling $g(r,\mathbf{x},0)$, and on the
domain boundary $\partial\Omega$ either a prescribed flux
$\mathbf{q}\!\cdot\!\hat n=-(\text{rainfall}-\text{evaporation})$
(Neumann, the usual atmospheric forcing) or a prescribed potential
(Dirichlet, e.g.\ a water table), with the incoming surface flux
apportioned among classes in proportion to their abundance
$f(r)$~\cite{Rigon2026PRE}.  These conditions are inherited unchanged by
Richards' equation at the end of the reduction.

The companion paper~\cite{Rigon2026PRE} develops Eq. \ref{eq:contKE}
the soil-water analogue of the Boltzmann equation, and states, without
the supporting algebra, that Richards' equation is its quasi-static
limit.  The present paper supplies that algebra and answers the
question the companion raises.

The route from~\eqref{eq:contKE} to Richards' equation passes through
two \emph{independent} limits that the  literature routinely
merges.

\textit{(i) The spatial limit} $\varepsilon=\bar L/\Lambda\to0$
contracts the REV to a point.  It is purely kinematic: it does not
assume anything about the processes timing. 
In the companion paper this limit is taken at the level of
the result; its gradient-expansion algebra is supplied here
(Sec.~\ref{sec:continuum}, App.~\ref{app:gradient}).

\textit{(ii) The temporal limit} $\Da\to0$ is the CE
reduction of Eq.~\eqref{eq:contKE}.  It assumes the intra-REV (local)
redistribution is fast compared to the macroscopic forcing, slaves
$g$ to a one-parameter equilibrium manifold, and separates
$\mathbf{F}$ into a scalar conductivity times a potential gradient.
This is the soil-water counterpart of the Boltzmann to
Navier--Stokes passage~\cite{ChapmanCowling1970,Hilbert1912,Enskog1917,
Cercignani1988}.

The CE reduction rests entirely on the spectral properties of a
linearized redistribution operator $\J=\C'|_{g_\eq}$ where  $\C$ is the
redistribution operator of Eq.~\eqref{eq:contKE}, the (nonlinear)
integral operator that moves water between pore classes within a REV and
relaxes the filling distribution toward equilibrium (its explicit
gain--loss form is Eqs.~\eqref{eq:master}--\eqref{eq:L} below); $\J$ it has
dimensions of an inverse time, a rate of change of the dimensionless
$g$.  The prime denotes the \emph{functional} (Fr\'echet) derivative of
$\C$ with respect to $g$, evaluated at the local-equilibrium
distribution $g_\eq$: writing $g=g_\eq+\delta g$, the leading response
of $\C$ to a small departure $\delta g$ is 
\begin{equation} 
\J[\delta g]=\C'|_{g_\eq}
[\delta g],
\end{equation}
 so $\J$ is the linear operator governing how fast small
deviations from equilibrium relax.  
We establish
these by direct analogy with the linearized Boltzmann collision
operator $L$~\cite[Ch.~IV]{Cercignani1988}: $\J$ is self-adjoint and
non-positive in the mass inner product, its kernel is spanned by the
single redistribution invariant (water mass), and the CE hierarchy
reduces, at every order, to inverting the \emph{same} operator against a
source built from lower orders~\cite[Eqs.~(IV.2.4),~(V.3.10)]{Cercignani1988}.
Where the gas has five invariants and hence five macroscopic equations,
viscous pore flow has one, water mass, and hence the one macroscopic
equation obtained, as the Fredholm solvability condition, in this
paper.

Figure~\ref{fig:flowchart} charts the derivation, with the two limits
kept separate.
Section~\ref{sec:continuum} recalls the kinetic equation and carries out
the spatial limit, deriving the entangled flux and
Eq.~\eqref{eq:contKE}.  Section~\ref{sec:scale} establishes the scale
separation and the Hilbert-versus-CE distinction.
Section~\ref{sec:operator} develops the linearized operator $\J$ and its
spectrum.  Sections~\ref{sec:hierarchy}--\ref{sec:K} carry out the CE
hierarchy: equilibrium, first order, solvability${}={}$mass
conservation, and the response function that defines $K$.
Section~\ref{sec:forces} extends $\Phi$, the driving potential, to gravity, adsorption and
osmosis.  Section~\ref{sec:Richards} assembles Richards' equation.
Section~\ref{sec:classical} records the classical-limit recovery
(retention curves, $K_{\rm sat}$, $K(\theta)$, percolation exponents,
field capacity).  Section~\ref{sec:multiperm} derives the dual- and
multiple-permeability models as a multi-band reduction.
Section~\ref{sec:breakdown} characterizes the breakdown,
Section~\ref{sec:geometry} the geometric picture, and
Section~\ref{sec:synthesis} synthesizes.  Appendices collect the
gradient-expansion algebra (App.~\ref{app:gradient}), the spectral
proofs (App.~\ref{app:spectral}), the mean-field/serial-path
resummation (App.~\ref{app:chi}), the variational principle for $K$
the band-projection algebra
(App.~\ref{app:bands}), the extensions of $\Phi$ (gravitational, osmotic and adsorptive terms;
companion paper~\cite{Rigon2026PRE}) and the $\varepsilon$--$\Da$ relation
(Sec.~\ref{sec:scale}).

The Supplemental Material~\cite{SM_hilbert} reports two parameter-free
numerical tests of the redistribution spectrum on which the reduction
rests: a direct diagonalization of the pore-class operator $\J$,
confirming the radius labeling of its eigenmodes and the spectral sum
rule, and a three-dimensional pore-network experiment showing that the
spectral gap and the Stokes permeability vanish together at the
percolation threshold, where the expansion fails.
\begin{figure*}[t]
\centering
\begin{tikzpicture}[
  cebox/.style={draw, rounded corners, minimum width=6.6cm,
    text width=6.2cm, minimum height=0.95cm, align=center, font=\small, fill=blue!8},
  spbox/.style={draw, rounded corners, minimum width=6.6cm,
    text width=6.2cm, minimum height=0.95cm, align=center, font=\small, fill=cyan!8},
  resbox/.style={draw, rounded corners, minimum width=6.6cm,
    text width=6.2cm, minimum height=0.95cm, align=center, font=\small, fill=green!8},
  supbox/.style={draw, rounded corners, minimum width=4.4cm,
    minimum height=0.8cm, align=center, font=\small, fill=orange!8},
  arr/.style={-{Stealth[length=2.5mm]}, thick},
  darr/.style={-{Stealth[length=2.5mm]}, thick, dashed, gray},
  node distance=0.5cm]
\node[cebox] (KE) {Master equation~\eqref{eq:master}:\;
  $\partial_t g=\C[g]$ at one REV};
\node[spbox, below=of KE] (CL) {\textbf{Spatial limit} $\varepsilon\to0$
  (kinematic):\\
  gradient expansion $\Rightarrow$
  $\partial_t g+\nabla\!\cdot\!\mathbf{F}=\C[g]$ \;[\S\,\ref{sec:continuum}]};
\node[cebox, below=of CL] (S1) {\textbf{CE 0:} equilibrium
  $g^{(0)}=H(r^*\!-r)$, $\psi(\theta)$, $\mathcal{S}_\eq$
  \;[\S\,\ref{sec:step1}]};
\node[cebox, below=of S1] (S2) {\textbf{CE 1:} budget $\J$ + source $S$\\
  $\J[g^{(1)}]=\partial_t g^{(0)}+\nabla\!\cdot\!\mathbf{F}[g^{(0)}]$
  \;[\S\,\ref{sec:step2}]};
\node[cebox, below=of S2] (S3) {\textbf{Solvability} = mass conservation\\
  $\partial_t\theta+\nabla\!\cdot\!\mathbf{q}=-S_{\rm e}-S_{\rm r}$
  \;[\S\,\ref{sec:solvability}]};
\node[cebox, below=of S3] (S4) {\textbf{Response} $\bm\chi$ $\to$
  recognize $\mathbf{q}$ $\to$ $K$\\
  $K=\phi\langle\kappa|\J^{-1}|S\rangle_f$ \;[\S\,\ref{sec:K}]};
\node[resbox, below=of S4] (RE) {\textbf{Richards' equation}\\
  $\partial_t\theta=\nabla\!\cdot\![K(\nabla\psi+\hat z)]-S_{\rm e}-S_{\rm r}$};
\node[supbox, right=1.5cm of CL] (GE) {Gradient algebra\\
  \;[App.~\ref{app:gradient}]};
\node[supbox, right=1.5cm of S2] (SP) {Spectral analysis\\
  self-adjoint $\J$ \;[App.~\ref{app:spectral}]};
\node[supbox, right=1.5cm of RE] (MB) {Bands $\Rightarrow$
  dual-perm.\\ \;[\S\,\ref{sec:multiperm}, App.~\ref{app:bands}]};
\draw[arr] (KE)--(CL) node[midway,right,font=\scriptsize]{$\varepsilon\to0$};
\draw[arr] (CL)--(S1) node[midway,right,font=\scriptsize]{$\Da\ll1$, $O(\Da^{-1})$};
\draw[arr] (S1)--(S2) node[midway,right,font=\scriptsize]{$O(\Da^0)$};
\draw[arr] (S2)--(S3) node[midway,right,font=\scriptsize]{$\int(\cdot)f\,dr=0$};
\draw[arr] (S3)--(S4) node[midway,right,font=\scriptsize]{invert $\J$};
\draw[arr] (S4)--(RE) node[midway,right,font=\scriptsize]{substitute};
\draw[darr] (GE)--(CL);
\draw[darr] (SP)--(S2);
\draw[darr] (MB)--(RE);
\end{tikzpicture}
\caption{The derivation, with the two limits kept separate.  The
\emph{spatial} limit (cyan) is kinematic and produces the continuum
kinetic equation; the \emph{Chapman--Enskog} steps (blue) reduce it to
Richards' equation (green).  Solid arrows: logical chain; dashed:
supporting analysis.  Finite-$\Da$ band structure branches to
dual-/multiple-permeability.}
\label{fig:flowchart}
\end{figure*}

\section{From the kinetic equation to the continuum balance:
the spatial limit}
\label{sec:continuum}

Equation~\eqref{eq:F_intro} was derived thoroughly in the companion
paper~\cite{Rigon2026PRE}; here we summarize the passage from the
mesoscale to the continuum for the reader's sake, as a preliminary to
the expansion that follows.

\subsection{The kinetic equation at a single REV}

At each REV~\cite{Bear1972} the state is the filling distribution
$g(r,\mathbf{x},t)\in[0,1]$, and the water content is its zeroth moment,
\begin{equation}
\theta(\mathbf{x},t)=\phi\int_0^\infty g(r,\mathbf{x},t)\,f(r)\,dr,
\label{eq:theta}
\end{equation}
with $f(r)$ the pore-size distribution and $\phi$ the porosity.
Within one REV the occupancy obeys the master equation of the companion
paper~\cite{Rigon2026PRE},
\begin{equation}
\pder{g}{t}=\underbrace{\C[g]}_{\text{intra-REV}}-\mathcal{E}-\mathcal{T},
\qquad \C[g]:=\mathcal{G}_{\rm int}-\Lg_{\rm int},
\label{eq:master}
\end{equation}
with the gain and loss operators
\begin{widetext}
\begin{align}
\mathcal{G}_{\rm int}(r)&=\kappa(r,g)\,[1-g(r)]
  \!\int_0^\infty\! C(r,r')\,\rho_w\mathrm{g}\,\Phi(r,r')\,g(r')\,f(r')\,dr',
\label{eq:G}\\
\Lg_{\rm int}(r)&=\kappa(r,g)\,g(r)
  \!\int_0^\infty\! C(r,r')\,\rho_w\mathrm{g}\,\Phi(r',r)\,[1-g(r')]\,f(r')\,dr',
\label{eq:L}
\end{align}
\end{widetext}
where $C(r,r')\ge0$ is the connectivity, $\Phi(r,r')$ the antisymmetric
inter-pore driving potential, and $\kappa(r,g)$ the
state-dependent Hagen--Poiseuille/Lucas--Washburn
rate~\cite{Hagen1839,Poiseuille1840,Lucas1918,Washburn1921}.

The pair weight in Eqs.~\eqref{eq:G},~\eqref{eq:L}, $g(r')[1-g(r)]$ is
the exclusion rule (donor full, receiver empty).  The driving potential deserves emphasis, as it is the single channel
through which all physics other than connectivity enters.  It is the
(dimensionless) difference in water chemical potential between two pore
classes,
\begin{equation}
\Phi(r,r')=\frac{\mu_w(r')-\mu_w(r)}{\rho_w\mathrm{g}},
\label{eq:Phi_def}
\end{equation}
measured as an equivalent head (length), with $\rho_w$ the water
density, $\mathrm{g}$ gravity, and $\mu_w$ the chemical potential per
unit volume.  It is antisymmetric, $\Phi(r',r)=-\Phi(r,r')$, so water
moves from high to low potential.  For pure capillarity $\mu_w(r)$ is
the Young--Laplace term $-2\gamma\cos\theta_c/r$, so smaller pores
(more negative potential) pull water from larger ones.  Crucially, the
\emph{form} of $\mu_w$ can be enriched, by adding gravitational,
adsorptive or osmotic contributions (Sec.~\ref{sec:forces}), without
altering any of the structure that follows: every such force enters
only through $\Phi$, and the operator algebra below is indifferent to
which terms $\mu_w$ contains: adding gravitational, osmotic or adsorptive
contributions changes only the effective equilibrium distribution
$g_\eq$, leaving the operator structure of the reduction intact.

In the
quasi-static regime relevant below the contributing classes sit at
$g\!\approx\!1$, where 
\begin{equation}
\kappa(r,g)\to\kappa_{\rm HP}(r)=
r^2/(8\mu\bar L^2\bar\tau^2)\quad[\mathrm{Pa^{-1}s^{-1}}]
\end{equation}
~\cite{Hagen1839,Poiseuille1840}; we write
$\kappa(r)$ for this limiting value.  

The bilinear pair~\eqref{eq:G}--\eqref{eq:L} with signed $\Phi$ is the
phenomenological form; the operator we analyze is its equilibrium-consistent
completion, the
\emph{Onsager gradient flow} of the Gibbs free energy $\F[g]$~\cite{Rigon2026PRE}, which $\C[g]$ is taken to be from here on:  
\begin{equation}
\C[g](r)=\int_0^\infty \Mr(r,r';g)\,\big[\mu_w(r')-\mu_w(r)\big]\,
   f(r')\,dr',
\label{eq:onsager}
\end{equation}
carried by the symmetric, non-negative \emph{mobility kernel}
Explicitly, in terms of the symmetric harmonic-mean pair conductance
$\kappa_s$ [$\mathrm{Pa^{-1}s^{-1}}$], the connectivity $C$ [--] and the occupancy gates, functions of $g$ that record the status of filling of pores
\begin{widetext}
\begin{equation}
\Mr(r,r';g)=\kappa_s(r,r')\,C(r,r')\,
   \sqrt{g(r)\big[1-g(r)\big]\,g(r')\big[1-g(r')\big]}\,,
\label{eq:Mgeneral}
\end{equation}
\end{widetext}
with $\Mr$ in [Pa$^{-1}$s$^{-1}$]; its equilibrium value
$\Mr(r,r';g_\eq)$ is Eq.~\eqref{eq:Mexplicit}.
The \emph{geometric} mean of the two occupancy factors, rather than
their product, is what Arrhenius (heat-bath) exclusion rates deliver on
linearization; the double product would count each gate twice, once as
exclusion, once inside the intrinsic conductance, and, as
Sec.~\ref{sec:gap} shows, would hand the slow end of the relaxation spectrum
to the edges of the active band instead of to the frontier.
$\Mr(r,r';g)$,the central operator of the companion
paper~\cite{Rigon2026PRE} (the principal symbols and their units are
collected in Table~\ref{tab:symbols}).  
Two things
are worth keeping in view.  First, every force enters only through
$\mu_w$ (equivalently $\Phi$); the mobility $\Mr$ carries the
constitutive content, and the linearized operator of
Sec.~\ref{sec:operator} is precisely $\Mr$ evaluated at the equilibrium
step.  Second, \eqref{eq:onsager} is the \emph{conservative form},a
current divergence in pore-class space, which is what lets the whole
reduction be restarted from the continuum equation alone
(App.~\ref{app:schur}).
The sinks $\mathcal{E}$ (evaporation) and $\mathcal{T}$ (root uptake)
act within the REV.

\subsection{Intra- and inter-REV operators}

Making the continuum coordinate explicit, $g=g(r,\mathbf{x},t)$, the
redistribution operator $\C$ of~\eqref{eq:master} moves water between
\emph{pore classes} within the REV at $\mathbf{x}$ and conserves the
local water content, $\int\C[g]\,f\,dr=0$.  Alongside it the companion
paper defines the \emph{inter-REV} (boundary) operator
\begin{equation}
\Cd:=\mathcal{G}_\partial-\Lg_\partial,
\label{eq:Cd_def}
\end{equation}
built from the same bilinear gain/loss but with donor and receiver
populations evaluated at neighbouring REVs separated by the spacing
$\ell\sim\bar L$.  For a neighbour at $\mathbf{x}_+=\mathbf{x}+\ell\hat n$,
\begin{align}
\mathcal{G}_\partial(r)&=\kappa(r)\,[1-g(r,\mathbf{x})]\nonumber\\
  &\quad\times\!\int_0^\infty\! C_\partial(r,r')\,\rho_w\mathrm{g}\,\Phi(r,r')\,
  g(r',\mathbf{x}_+)\,f(r')\,dr',
\label{eq:Gbnd}\\
\Lg_\partial(r)&=\kappa(r)\,g(r,\mathbf{x})\nonumber\\
  &\quad\times\!\int_0^\infty\! C_\partial(r,r')\,\rho_w\mathrm{g}\,\Phi(r',r)\,
  [1-g(r',\mathbf{x}_+)]\,f(r')\,dr'.
\label{eq:Lbnd}
\end{align}
The boundary connectivity $C_\partial$ need not equal the internal $C$.
Only the spatial offset $\mathbf{x}_+\!\neq\!\mathbf{x}$ distinguishes
$\Cd$ from $\C$: it is this offset that makes $\Cd$ \emph{transport}
water between REVs rather than redistribute it within one.

\subsection{The gradient expansion and the entangled flux}
\label{sec:gradflux}

Contracting the REV to a point is the spatial homogenisation
$\varepsilon=\ell/L_{\rm mac}\to0$.  Expanding the neighbour's
distribution 
\begin{equation}
g(r',\mathbf{x}_+)=g(r',\mathbf{x})+\ell\,\hat n\!\cdot\!
\nabla g(r',\mathbf{x})+O(\ell^2)
\end{equation}
 in~\eqref{eq:Gbnd}--\eqref{eq:Lbnd}
and using the antisymmetry $\Phi(r',r)=-\Phi(r,r')$, the
\emph{zeroth}-order terms reproduce the internal redistribution already
contained in $\C$ and cancel from $\Cd$; the \emph{first}-order terms
give a directional current linear in $\nabla g$ (full algebra in
App.~\ref{app:gradient}):
\begin{widetext}
\begin{equation}
\Cd=\ell\,\hat n\!\cdot\!\kappa(r)\,[1-2g(r)]
  \!\int_0^\infty\! C_\partial(r,r')\,\Phi(r,r')\,
  \nabla g(r',\mathbf{x})\,f(r')\,dr'+O(\ell^2).
\label{eq:Cd}
\end{equation}
\end{widetext}
Reading the kernel off~\eqref{eq:Cd}, the inter-REV current carried by
class $r$ is
\begin{equation}
\mathbf{F}(r,\mathbf{x},t):=-\int_0^\infty
  \Gamma(r,r';g)\,\nabla g(r',\mathbf{x},t)\,dr',
\label{eq:Fentangled}
\end{equation}
with the \emph{entangled} kernel
\begin{equation}
\Gamma(r,r';g):=\kappa(r)\,[1-2g(r)]\,C_\partial(r,r')\,\rho_w\mathrm{g}\,\Phi(r,r')\,f(r'),
\label{eq:Gamma}
\end{equation}
The occupancy weight $[1-2g]$ is the linearization in $g'$ of the bilinear
gain--loss pair~\eqref{eq:G}--\eqref{eq:L} with signed $\Phi$; for the
rectified rates, or for the Onsager form~\eqref{eq:onsager}, the weight is
the gate derivative of the active direction and differs away from the
frontier.  The reduction below uses only the localization of $\Gamma$ on the
frontier and the magnitude of its driving, which are the same for all three
forms.
so that $\Cd=-\nabla\!\cdot\!\mathbf{F}+O(\ell^2)$, the divergence being
made precise by the limit~\eqref{eq:Flimit} below.  Two features carry
over from the boundary expansion and are essential below: (i)~the flux
of class $r$ is a sum over \emph{all} connected classes $r'$, weighted
by $C_\partial\Phi f'$,a nonlocality in $r$ that no spatial limit can
remove; and (ii)~the kernel $\kappa(1-2g)C_\partial\Phi$ \emph{entangles}
the transport coefficient with the driving force $\nabla g$.  

\subsection{The closed continuum kinetic equation}

Inter-REV transport conserves water and can only cross $\partial V$, the boundaries of the REV volumes, so
the boundary exchange contracts to a divergence,
\begin{equation}
\frac{1}{|V|}\oint_{\partial V}\mathbf{F}\!\cdot\!\hat n\,dA
  \xrightarrow{\ell\to0}\nabla\!\cdot\!\mathbf{F}.
  \label{eq:Flimit}
\end{equation}
Dividing the class-$r$ budget by $|V|$ and letting $\ell\to0$ gives the
local continuum balance, valid \emph{at any timescale separation}.  It
is Eq.~\eqref{eq:contKE} of the Introduction, quoted there as the
starting point, derived here, now with the interior sinks displayed
explicitly:
\begin{equation}
\boxed{\;
\partial_t g(r,\mathbf{x},t)+\nabla\!\cdot\!\mathbf{F}(r,\mathbf{x},t)
  =\C[g]-\mathcal{E}(r)-\mathcal{T}(r)\;}
\label{eq:contKE2}
\end{equation}
The continuum limit adds only the transport divergence
$\nabla\!\cdot\!\mathbf{F}$ to the intra-REV dynamics; the
redistribution operator $\C$, the $H$-theorem $d\F/dt\le0$
(companion paper~\cite{Rigon2026PRE}), and the
well-posedness of the master equation carry over unchanged, now
evaluated at each $\mathbf{x}$.  The REV has vanished as a geometric
object and survives only through the local PSD $f(r,\mathbf{x})$ carried
by the operators.  Equation~\eqref{eq:contKE2} is the
\textbf{starting point} of the CE reduction; the bulk
water budget
\begin{equation}
\partial_t\theta+\nabla\!\cdot\!\mathbf{q}=0,
\qquad \mathbf{q}=\int\mathbf{F}\,f\,dr,
\label{eq:bulk}
\end{equation}
is its $f$-weighted zeroth moment (using $\int\C[g]f\,dr=0$) and is
\emph{not} closed: $\mathbf{q}$ is undetermined by $\theta$ alone.  The
missing closure $\mathbf{q}[g]$ is the entire non-equilibrium content; it
is what the CE expansion supplies.

\begin{table*}[t]
\caption{Principal symbols, their meaning, the relation that ties each to
the others, and SI units.  The pore-class current $j$, the mobility
$\Mr$, and the spatial flux $\mathbf{F}$ are the moving parts of the
double conservation law
$\partial_t g+\nabla\!\cdot\!\mathbf{F}+\mathrm{div}_r j
=-\mathcal{E}-\mathcal{T}$: $\Mr$ is the constitutive coefficient, $j$ the
current it drives through pore-size space, $\mathbf{F}$ the analogous
current through physical space.}
\label{tab:symbols}
\renewcommand{\arraystretch}{1.22}
\begin{ruledtabular}
\begin{tabular}{l p{0.34\textwidth} p{0.31\textwidth} l}
symbol & meaning & relation (hint) & units \\
\hline
$g(r,\mathbf{x},t)$ & filled fraction of pores of radius $r$ &
   state variable; $\theta=\phi\!\int g f\,dr$ & -- \\
$f(r)$ & pore-size density (PSD) & measure on classes, $\int f\,dr=1$ &
   $\mathrm{m^{-1}}$ \\
$\theta,\ \phi$ & water content; porosity & $\theta=\phi\!\int g f\,dr$ &, \\
$g_\eq(r;\theta)$ & equilibrium step $H(r^*-r)$ &
   minimizes $\mathcal{F}$; $\J:=\C'|_{g_\eq}$ & -- \\
$\mu_w(r)$ & chemical potential (suction) &
   lattice gas $\mu_0+\psi_T\ln\frac{g}{1-g}$ & Pa \\
$\Phi(r,r')$ & driving head & $\Phi=[\mu_w(r'){-}\mu_w(r)]/\rho_w\mathrm{g}$
   & m \\
$\psi_T$ & configurational temperature &
   stiffness $\partial_g\mu_w=\psi_T/[g(1{-}g)]$ & Pa \\
$\kappa_s(r,r')$ & symmetric pair rate & building block of $\Mr,\Gamma$ &
   $\mathrm{Pa^{-1}s^{-1}}$ \\
$\omega(r,r')$ & equilibrium exchange conductance & $\omega=\psi_T\Mr(g_\eq)=\psi_T\kappa_sC\sqrt{mm'}$ & $\mathrm{s^{-1}}$ \\
Arrhenius (heat-bath) rate & pair transfer rate $w_{r'\to r}\propto\kappa_s e^{-[\mu_w(r)-\mu_w(r')]/2\psi_T}$ & detailed-balanced by construction; linearizes to the $\sqrt{mm'}$ gate & -- \\
Ohmic (edge, conductance) & linear current--force relation with a state-independent conductance & the linearized edge $K_{rr'}[\delta\mu_w'-\delta\mu_w]$, Eq.~\eqref{eq:Krr} & -- \\
$C,\ C_\partial$ & connectivity (within / between REV) &
   gates which pairs exchange & -- \\
$\C[g](r)$ & redistribution operator &
   $\C=\int j\,f'dr'=-\mathrm{div}_r j$ & $\mathrm{s^{-1}}$ \\
$j[g](r,r')$ & pore-class pair-current &
   $j=\Mr\,[\mu_w(r'){-}\mu_w(r)]$; antisym.\ & $\mathrm{s^{-1}}$ \\
$\Mr(r,r';g)$ & mobility kernel (Onsager) &
   carrier of $j$; $K_{rr'}=\Mr(g_\eq)f'$ & $\mathrm{Pa^{-1}s^{-1}}$ \\
$\mathbf{F}(r,\mathbf{x},t)$ & entangled spatial flux &
   $\mathbf{F}=-\!\int\Gamma\nabla g\,dr'$; $\mathsf{T}=\nabla\!\cdot\!
   \mathbf{F}$ & $\mathrm{m\,s^{-1}}$ \\
$\Gamma(r,r';g)$ & transport kernel (spatial mobility) &
   spatial analogue of $\Mr$ & $\mathrm{m\,s^{-1}}$ \\
$\J=\C'|_{g_\eq}$ & linearized operator (graph-Laplacian) &
   $\J=\mathrm{div}_r(\Mr\nabla_r\mu_w)$; self-adjoint & $\mathrm{s^{-1}}$ \\
$\mathcal{A}(r),\ \tau{=}m/\mathcal{A}$ & exchange conductance / relaxation time &
   $\J h=-(\mathcal{A}/m)h+\int\mathcal{B}h f'$ & $\mathrm{s^{-1}},\ \mathrm{s}$ \\
$\lambda_n,\ \varphi_n$ & relaxation spectrum &
   $\J\varphi_n=-\lambda_n\varphi_n$, $\lambda_0{=}0$ &
   $\mathrm{s^{-1}},$ , \\
$\mathrm{Da}=\tau_\eq/\tau_{\rm mac}$ & Damk\"ohler number &
   small parameter of the reduction & -- \\
$\mathbb{L}=\Da^{-1}\J-\mathsf{T}$ & two-scale generator &
   Schur complement $\to$ Richards & $\mathrm{s^{-1}}$ \\
$K(\theta)$ & hydraulic conductivity &
   $K=\phi\langle\kappa|\J^{-1}|S_0C_w\rangle$ & $\mathrm{m\,s^{-1}}$ \\
\hline
\multicolumn{4}{l}{\emph{additional symbols used in the appendices}}\\
\hline
$\kappa(r)$ & single-pore (Hagen--Poiseuille) rate &
   $\kappa=r^2/(8\mu\bar L^2\bar\tau^2)$ & $\mathrm{Pa^{-1}s^{-1}}$ \\
$\mathcal{B}(r,r')$ & off-diagonal redistribution kernel &
   $\J h=-\mathcal{A}h+\int\mathcal{B}h f'dr'$ & $\mathrm{s^{-1}}$ \\
$K_{rr'}$ & edge conductance (graph-Laplacian weight) &
   $K_{rr'}=\Mr(g_\eq)f(r')$ & $\mathrm{Pa^{-1}s^{-1}m^{-1}}$ \\
$S,\ S_0$ & first-order source; its gradient factor &
   $S=\nabla\!\cdot\!\mathbf{F}[g^{(0)}]$, $S=S_0\nabla\theta\!\cdot\!\hat n$
   & $\mathrm{s^{-1}}$ \\
$\bm\chi(r)$ & response function &
   $\J\bm\chi=S_0C_w$; $K=\phi\langle\kappa,\bm\chi\rangle_f$ &
   $\mathrm{m^{-1}s}$ \\
$C_w=d\theta/d\psi$ & specific moisture capacity &
   slope of the retention curve & $\mathrm{m^{-1}}$ \\
$\mathsf{P},\ \mathsf{Q}$ & null projector; its complement &
   $\mathsf{P}u=(\int uf\,dr)\cdot1$, $\mathsf{Q}=\mathcal{I}-\mathsf{P}$
   & -- \\
$W=(\ker\J)^\perp$ & zero-mean fluctuation space &
   $W=\{u:\int uf\,dr=0\}$ & -- \\
$\mathsf{T}=\nabla\!\cdot\!\mathbf{F}$ & transport (streaming) operator &
   fast/slow split $\mathbb{L}=\Da^{-1}\J-\mathsf{T}$ & $\mathrm{s^{-1}}$ \\
$a_w(r),\ \Xi(r),\ \bar\tau$ & accessibility; geometry factor; tortuosity &
   enter $K_{\rm mf}$~\eqref{eq:K_mf} & -- \\
$\Gamma_w,\ \alpha_w$ & inter-band exchange; its coefficient &
   $\Gamma_w=\alpha_w(\psi_t-\psi_b)$ &
   $\mathrm{s^{-1}},\ \mathrm{m^{-1}s^{-1}}$ \\
$\varrho_\tau=\tau_b/\tau_t$ & band separation ratio &
   two-band validity, $\varrho_\tau\ll1$ & -- \\
\end{tabular}
\end{ruledtabular}
\end{table*}

\section{Scale separation, the Chapman--Enskog expansion,
and the equilibrium manifold}
\label{sec:scale}

This section does three things: it identifies the small parameter that
makes the reduction possible, sets up the CE expansion in it,
and develops the expansion's leading order, the equilibrium soil, in
full.  The two higher orders are previewed here and carried out, once the
operator $\J$ is in hand, in Secs.~\ref{sec:operator}--\ref{sec:K}.

What follows is a \emph{formal} CE reduction, in the
tradition of kinetic theory: the expansion is organized in powers of the
Damk\"ohler number and closed order by order through Fredholm
solvability, exactly as in the Boltzmann-to-Navier--Stokes passage.  The
operator statements below, self-adjointness, the spectral gap, bounded
invertibility on $(\ker\J)^\perp$,are established at the level of rigor
customary for such reductions, through explicit construction and physical
argument, rather than as theorems carrying full existence, convergence,
and error estimates, which lie beyond the present scope.  

Before expanding, we recall the one result of the companion
paper~\cite{Rigon2026PRE} that the reduction takes as given.  Setting
the redistribution operator to zero, $\C[g]=0$, and minimizing the Gibbs
free energy $\F[g]$ at fixed water content selects a unique local
equilibrium: the \emph{packing step}
\begin{equation}
g_\eq(r;\theta)=H\!\big(r^*(\theta)-r\big),
\label{eq:geq_recap}
\end{equation}
in which all pores narrower than a frontier radius $r^*$ are full and
all wider ones empty (the narrow pores hold water most tightly, so they
fill first).  The frontier $r^*$ is fixed by the water content through
$\theta=\phi\int_0^{r^*}f\,dr$, and $H$ is the Heaviside step function.  This
step distribution is the attractor of the fast redistribution dynamics
and the manifold onto which the CE expansion collapses the state; that
such a manifold exists, is attracting, and can be constructed at
\emph{finite} scale ratios is the content of the invariant-manifold
theory of dissipative multiscale
systems~\cite{Roberts2015ima,BunderRoberts2021,Roberts2025tma}, to which
we return in Sec.~\ref{sec:geometry}.
The packing step is the unique minimizer, and the attractor of the fast
dynamics, only on a fully connected pore space over which the free energy
is convex in the filling at fixed $\theta$, so that water can always
reach the narrowest available pores.  Two physical departures bound the
domain of the reduction. 
(i)~\emph{Disconnected or trapped clusters}: below percolation, or where
a phase is isolated, some pores cannot exchange, and the accessible
equilibrium is a \emph{constrained} step on the spanning cluster only,
described through the accessibility function $a_w(r)$.  (ii)~\emph{Ink-bottle
and metastable states}: a wide pore shielded behind a narrow throat can
stay filled (or empty) against the global step, a metastable
configuration separated from $g_\eq$ by a finite barrier.  The reduction
assumes these are absent or slow compared with $\tau_{\rm redis}$, so the
fast dynamics relaxes to~\eqref{eq:geq_recap} on the connected pore space
before the macroscopic gradient acts.  The first of these limits the
constitutive inputs to a single branch; the second is precisely the
percolation regime in which the spectral gap closes
(Sec.~\ref{sec:gap}) and the expansion ceases to hold.

\subsection{The single control parameter}
\label{sec:Da}

We non-dimensionalize~\eqref{eq:contKE2} using the macroscopic length
$\Lambda$ (the scale over which $\theta$ varies appreciably) for space,
the forcing time $\tau_{\rm forcing}$ for time, and the equilibrium step
for $g$; this exposes one dimensionless group.
The redistribution operator carries the in-REV relaxation rate
$1/\tau_{\rm redis}$ (the slowest nonzero rate of $\C$ linearized about
$g_\eq$); the streaming term carries the inter-REV forcing rate
$1/\tau_{\rm forcing}$.  Their ratio is the Damk\"ohler number
\begin{equation}
\Da(r,\mathbf{x})=\frac{\tau_{\rm redis}(r)}{\tau_{\rm forcing}(\mathbf{x})}\ll1,
\label{eq:Da}
\end{equation}
and on the slow (forcing) time scale~\eqref{eq:contKE2} reads
\begin{equation}
\partial_t g+\nabla\!\cdot\!\mathbf{F}=\frac{1}{\Da}\,\C[g],
\label{eq:singular}
\end{equation}
the canonical fast--slow form: redistribution is fast, streaming slow.
Both times are set by pore-scale quantities already in hand:
$\tau_{\rm redis}(r)=1/[\kappa_{\rm eff}(r)\,\Phi(\bar r)]$ and
$\tau_{\rm forcing}=\bar L\phi/|\mathbf{q}|$, where $\mathbf{q}$ is the
macroscopic (Darcy) flux, the $f$-weighted moment of $\mathbf{F}$
introduced in Eq.~\eqref{eq:F_intro} and set on $\partial\Omega$ by the
atmospheric/water-table boundary conditions stated there.  Thus
$\tau_{\rm forcing}$ is simply the time for the boundary flux to fill a
REV.

A single small parameter controls the reduction.  Viscous (Stokes) pore
flow has no momentum invariant: the velocity is function of the pressure
gradient, the only conserved quantity is water mass, and the spatial
scale-separation parameter $\varepsilon=\bar L/\Lambda$ (the ratio of the
REV length $\bar L\approx L_{\rm rev}$ to the macroscopic field length
$\Lambda$) is \emph{not} independent of $\Da$.  A pore-scale balance
shows $\varepsilon\sim\Da$ (companion paper~\cite{Rigon2026PRE}, App.~A): a sharp front imposed at $\Da\ll1$ self-heals by redistribution
within a few $\tau_{\rm redis}\ll\tau_{\rm forcing}$.  The single
condition $\Da\ll1$ therefore guarantees \emph{both} local equilibrium
and smooth macroscopic gradients, so the whole expansion is organized by
$\Da$ alone.

Because the inter-REV exchange, evaporation and uptake are all $O(\Da)$,
the water content evolves slowly, $\partial_t\theta\sim O(\Da)$, and it
is natural to introduce the slow time
\begin{equation}
T:=\Da\,t,
\label{eq:slowtime}
\end{equation}
on which one unit of $T$ spans many units of $t$.  The slow clock is what
makes the single power of $\Da$ cancel uniformly from the first-order
balance (Sec.~\ref{sec:step2}), leaving a $\Da$-free macroscopic law.

\subsection{The Chapman--Enskog expansion}
\label{sec:hilbert_ce}
The CE method extracts a closed macroscopic equation
from a kinetic equation with fast relaxation~\cite{ChapmanCowling1970}: it
expands not the solution but the \emph{constitutive functional}, leaving
$\theta(\mathbf{x},t)$ unexpanded and expanding only the dependence of $g$
upon it,
\begin{align}
g&=g^{(0)}\!\big(r;\theta(\mathbf{x},t)\big)
  +\Da\,g^{(1)}\!\big(r;\theta,\nabla\theta\big)+O(\Da^2),
\label{eq:ansatz}\\
0&=\phi\int_0^\infty g^{(n)}(r)\,f(r)\,dr\qquad(n\ge1),
\label{eq:constraint}
\end{align}
the constraint placing the whole water content in $g^{(0)}$.  (This is
what distinguishes CE from the older \emph{Hilbert} expansion~\cite{Hilbert1912}.)

Substituting into the singularly perturbed CKE~\eqref{eq:singular} and
expanding the redistribution operator about $g^{(0)}$,
\begin{equation}
\C[g]=\C[g^{(0)}]+\Da\,\J\,g^{(1)}+O(\Da^2),\qquad
\;\J:=\left.\frac{\delta\C}{\delta g}\right|_{g_\eq}\;
\label{eq:Jdef_intro}
\end{equation}
introduces the \emph{linearized redistribution operator} $\J$ and sorts
the reduction into orders.  We develop the leading order in full in the
next subsection, then preview the two that the operator sections carry
out.

\subsection{Zeroth order: the equilibrium soil}
\label{sec:step1}
At leading order the fast term must vanish on its own, $\C[g^{(0)}]=0$.
The bilinear exclusion structure forces $g^{(0)}(r)\in\{0,1\}$ for each
class, and minimization of the Gibbs free energy at fixed $\theta$ selects
the packing step
\begin{equation}
g^{(0)}(r;\theta)=g_\eq(r;\theta)=H(r^*(\theta)-r),
\label{eq:g0}
\end{equation}
with all pores narrower than the frontier $r^*$ full, all wider empty, with
$r^*$ fixed by inverting the cumulative pore-size distribution,
\begin{equation}
\theta=\phi\int_0^{r^*}f\,dr
\quad\Longrightarrow\quad
\frac{dr^*}{d\theta}=\frac1{\phi f(r^*)}.
\label{eq:frontier}
\end{equation}
This zeroth order  is the entire equilibrium
constitutive theory of the soil.

\textbf{Retention curve.}  Because the largest filled pore has the
well-defined radius $r^*(\theta)$, the Young--Laplace law assigns it a
single capillary pressure, which is the matric potential:
\begin{equation}
\psi(\theta)=-\frac{2\gamma\cos\theta_c}{\rho_w\mathrm{g}\,r^*(\theta)},
\label{eq:psi}
\end{equation}
single-valued because $r^*$ is monotone in $\theta$
through~\eqref{eq:frontier}
($\gamma$ is surface tension, $\theta_c$ is the contact angle).  Van
Genuchten~\cite{vanGenuchten1980} and Brooks--Corey~\cite{Brooks1964} are
coordinate transformations on this one-dimensional submanifold
(Sec.~\ref{sec:classical}).  Differentiating gives the specific moisture
capacity $C_w=d\theta/d\psi$ and the frontier-shift tangent
\begin{equation}
\pder{g_\eq}{\theta}=\frac{\delta(r-r^*)}{\phi f(r^*)},
\label{eq:tangent}
\end{equation}
the one direction on the manifold that moves water (the frontier
derivative $dr^*/d\theta=1/[\phi f(r^*)]$ is obtained by differentiating
$\theta=\phi\int_0^{r^*}\!f\,dr$; see App.~\ref{app:gradient}).

\textbf{The equilibrium manifold.}  Let
$\mathcal{S}=\{g:(0,\infty)\to[0,1]\}$ be the space of admissible occupancy
profiles.  The equilibria form a one-dimensional curve in it,
$\mathcal{S}_\eq=\{g_\eq(\cdot;\theta):\theta\in[0,\phi]\}$, parametrized
either by $\theta$ or $r^*$ or $psi$ with
tangent~\eqref{eq:tangent}.  Directions orthogonal to it reshape $g(r)$ at
fixed $\theta$: they are the degrees of freedom invisible to Richards' equation.
Classical soil physics, retention, conductivity, diffusivity, Richards'
flow, lives entirely on $\mathcal{S}_\eq$.

\textbf{The step is a reference.}  However, the sharp step presumes $\mu_w$ depends
on radius alone.  Two refinements widen it, both from the companion
paper~\cite{Rigon2026PRE}: a finite configurational temperature $\psi_T$
softens it to the Fermi-type form
\begin{equation}
g_\eq^{\rm FD}(r;\theta)=\frac{1}{1+\exp[(\mu_{\rm cap}(r)-\lambda)/\psi_T]},
\label{eq:geq_FD}
\end{equation}
relaxing to $H(r^*-r)$ as $\psi_T\to0$; and averaging
over an elevation spread $\ell$ smears it deterministically, of width
$\sim(dr^*/d\mu_{\rm cap})\rho_w\mathrm{g}\,\ell$, making the retention
curve weakly sample-height dependent.  The consequence that matters for
the reduction is that the physical width
$\max(\psi_T,\rho_w\mathrm{g}\,\ell)$ turns the frontier into a
\emph{resolved interval} $g_\eq(1-g_\eq)>0$,the active band on which the
redistribution eigenmodes of Sec.~\ref{sec:eigenmodes} live, fixing the
resolution of the relaxation spectrum and the smoothness of $K(\theta)$.
The sharp step~\eqref{eq:g0} remains the correct reference about which the
corrections $g^{(n)}$ are defined.

\subsection{The two higher orders, previewed}
\noindent\emph{Order $\Da^{0}$: the linear correction.}
The first departure from equilibrium solves
\begin{equation}
\J\,g^{(1)}=\partial_T g^{(0)}+\nabla\!\cdot\!\mathbf{F}[g^{(0)}]=:S ,
\label{eq:ord_0}
\end{equation}
the \emph{same} linear equation that recurs at every higher order with a
source rebuilt from below (Sec.~\ref{sec:step2}).  Everything now hinges on
$\J$.


Because the source is known, solving Eq.~\eqref{eq:ord_0} rests entirely
on the properties of the operator
$\J$, so Sec.~\ref{sec:operator} constructs it and establishes the three
properties the orders require, self-adjointness, a spectral gap, and
bounded invertibility on $(\ker\J)^\perp$.  Section~\ref{sec:hierarchy}
then carries out the first order and its solvability, and
Sec.~\ref{sec:K} solves~\eqref{eq:ord_0} for the conductivity.  The same
reduction has an equivalent one-step operator form, the Schur complement
of App.~\ref{app:schur}.

\section{The linearized redistribution operator}
\label{sec:operator}

Sec.~\ref{sec:hilbert_ce} rests entirely on the linearized
redistribution operator $\J=\C'|_{g_\eq}$ of Eq.~\eqref{eq:Jdef_intro},
\begin{equation}
\J:=\C'\big|_{g_\eq},\qquad g=g_\eq+\delta g .
\label{eq:Jdef}
\end{equation}
Therefore, we construct it here, directly from the conservative (continuum) form of
$\C$, and establish the three properties the orders demand.  Being a linear operator, it inherits a
structure, a local relaxation rate minus a redistribution integral, that
makes its spectral properties, and hence the inversion at the heart of the
reduction, elementary.

We build $\J$ here at the continuum level, reading the closed
redistribution operator~\eqref{eq:onsager} as the pore-class divergence
of the mobility current and linearizing that single current at the step.

\subsection{Construction from the conservative current}

We obtain $\J$ directly from the \emph{conservative form} of $\C$.  The classical kinetic-theory
derivation, differentiating the gain and loss separately at a single
REV, à la Cercignani, reaches the same operator by a longer route and is given step by step in App.~\ref{app:cercignani}.

Because $\C$ conserves water, $\int\C[g]f\,dr=0$~\eqref{eq:bulk}, it is the
net flux of an antisymmetric pair-current,
\begin{widetext}
\begin{equation}
\C[g](r)=\int_0^\infty j[g](r,r')\,f(r')\,dr',\qquad
  j[g](r',r)=-\,j[g](r,r'),
\label{eq:jcurrent}
\end{equation}
\end{widetext}
antisymmetry of $j$ being exactly what makes $\int\C f\,dr$ vanish: at the
continuum level $\C$ is already the divergence of $j$ in pore-class space,
the gain and loss subsumed into one conservative current.  
That current is
read straight off the Onsager form~\eqref{eq:onsager} as the mobility times the thermodynamic force,
\begin{equation}
j[g](r,r')=\underbrace{\Mr(r,r';g)}_{\substack{\text{mobility}\\
   \text{symmetric},\ \ge0}}\;
   \underbrace{\big[\mu_w(r')-\mu_w(r)\big]}_{\substack{\text{force}\\
   =\,\rho_w\mathrm{g}\,\Phi(r,r')}}
\label{eq:jsplit}
\end{equation}

\vspace{3ex}
\noindent
The \emph{force} is the chemical-potential difference, equivalently the
head $\Phi$; it is antisymmetric and vanishes when the two classes share a
potential.  The \emph{mobility} $\Mr$ is the symmetric, non-negative
envelope that carries the current; its equilibrium value sets the
conductance of Eq.~\eqref{eq:Krr} through $K_{rr'}=\Mr(r,r';g_\eq)\,f(r')$,
\begin{equation}
\Mr(r,r';g_\eq)=\kappa_s(r,r')\,C(r,r')\,\sqrt{m(r)\,m(r')}\,,
\label{eq:Mexplicit}
\end{equation}
where
\begin{equation}
m(r):=g_\eq(r)\,[1-g_\eq(r)]
\label{eq:mdef}
\end{equation}
is the \emph{active-band weight}. The occupancy gates carried by $\Mr$ originate in the
mesoscale gain and loss; the explicit exclusion-current form that exhibits
them, and its reduction to~\eqref{eq:jsplit}, is given step by step in
App.~\ref{app:cercignani}.  

At equilibrium the current vanishes pairwise, for two reasons that
divide the labour: on the sharp step every gate pair is closed (for $r'<r^*$
the receiver is full, for $r'>r^*$ the donor is empty), and on the resolved
band of Sec.~\ref{sec:gap} the \emph{total} potential, capillary plus
entropic, is uniform, $\mu_w[g_\eq]=\lambda$, so the force vanishes while the
mobility does not, strong detailed balance (consequence~(A) at
Eq.~\eqref{eq:g0}).
Linearizing the single current about this zero, $g=g_\eq+h$, the
vanishing of the force removes the derivative of the mobility
prefactor, leaving what is called an \emph{Ohmic} edge, a conductance that no longer depends on the state it carries (Table~\ref{tab:symbols}):
the linearized current is the mobility times the \emph{perturbed}
potential difference,
\begin{widetext}
\begin{equation}
\delta j[h](r,r')=\Mr(r,r';g_\eq)\,\big[\delta\mu_w(r')-\delta\mu_w(r)\big],
\qquad
 \delta\mu_w=\frac{\psi_T}{g_\eq(1{-}g_\eq)}\,h .
\label{eq:linj}
\end{equation}
\end{widetext}
The entropic stiffness $\partial_g\mu_w|_{g_\eq}=\psi_T/[g_\eq(1{-}g_\eq)]$
turns the disturbance $h=\delta g$ into a potential perturbation
$\delta\mu_w$, and in doing so converts the mobility $\Mr$
$[\mathrm{Pa^{-1}\,s^{-1}}]$ into a relaxation rate $[\mathrm{s^{-1}}]$.
It is absorbed into the conductances~\eqref{eq:Krr}, together with the
occupancy gates of $\Mr$; the inner-product weight of the Laplacian
core $\Lop$ is the mass measure $f$, while $\J$ itself is self-adjoint in
$f/m$ (Prop.~\ref{prop:selfadj}).
Substituting~\eqref{eq:linj} into~\eqref{eq:jcurrent} and separating the
term in $h(r)$ (the \emph{diagonal}) from the term in $h(r')$ (the
\emph{off-diagonal}) gives
\begin{equation}
\J[h](r)=-\frac{\mathcal{A}(r)}{m(r)}\,h(r)
  +\int_0^\infty \mathcal{B}(r,r')\,h(r')\,f(r')\,dr',
\label{eq:J}
\end{equation}
where the two coefficients are built from
one symmetric, non-negative object, the equilibrium one-way exchange
flux
\begin{equation}
\omega(r,r'):=\psi_T\,\Mr(r,r';g_\eq)=\omega(r',r)\ \ge 0 ,
\label{eq:omega}
\end{equation}
namely
\begin{equation}
\mathcal{A}(r):=\int_0^\infty\!\omega(r,r')\,f(r')\,dr'\ \ge0,
\qquad
\mathcal{B}(r,r'):=\frac{\omega(r,r')}{m(r')} .
\label{eq:A}
\end{equation}
The factor $1/m$ is the entropic stiffness of~\eqref{eq:linj} and must
not be dropped: the Laplacian acts on the \emph{potential} perturbation
$\delta\mu_w=\psi_T h/m$, not on the occupancy perturbation $h$.  Its
presence is what distinguishes the relaxation \emph{rate} from the
exchange \emph{conductance}: $\mathcal{A}(r)$ is a row sum of
conductances, whereas

\begin{equation}
\text{relaxation rate of class }r
=\frac{\mathcal{A}(r)}{m(r)} ,
\qquad
\tau(r):=\frac{m(r)}{\mathcal{A}(r)} .
\label{eq:tau}
\end{equation}
Suppressing the redistribution term, a perturbation confined to class
$r$ obeys $\partial_t h=-[\mathcal{A}(r)/m(r)]\,h$ and decays as
$h(r,t)=h(r,0)e^{-t/\tau(r)}$; the slowest such mode,
$\tau_{\rm relax}=\max_r\tau(r)$, sets the spectral gap below.  Because
$m\to0$ on full and empty classes while $\mathcal{A}$ vanishes there
too, the rate is finite across the active band and is smallest at the
frontier itself: \emph{the classes that relax slowest are those at
$r^*$}, a fact that returns in Sec.~\ref{sec:multiperm}.  The coupling
$\mathcal{B}$ links distinct classes.  Equation~\eqref{eq:J} is a \emph{weighted graph-Laplacian} (a diffusion-like, mass-preserving operator) on
the space of pore classes.  Picture each pore class $r$ as a node of a
network, and let two classes be joined by an edge whenever water can pass
between them; the strength (conductance) of that edge is
\begin{equation}
K_{rr'}=\Mr(r,r';g_\eq)\,f(r')
\label{eq:Krr}
\end{equation}
with $\Mr(r,r';g_\eq)$ the equilibrium mobility of
Eq.~\eqref{eq:Mexplicit} ($\kappa_s$ the symmetric harmonic-mean pair
conductance),
the occupancy factors $g_\eq(1{-}g_\eq)$
vanishing for full or empty classes so that only classes \emph{near the
frontier} are active, and the other symbols as already defined. 
The occupancy factors are functions
of the class $r$, evaluated at the local water content $\theta$, which
enters parametrically through the frontier $r^*(\theta)$.  In terms of
these conductances the operator acts as a diffusion on the network of
pore classes.  From here on we use the discrete notation
$h_r:=h(r)$, $f_r:=f(r)\,\Delta r$ and
$\sum_{r'}:=\int\!\cdot\;dr'$, that is, we regard the class axis as
sampled on a grid fine enough that the integral operator and its matrix
representation may be used interchangeably; this is also how $\J$ is
realized numerically (Supplemental Material~\cite{SM_hilbert}, S1), and
no result below depends on the grid.  In this notation
\begin{equation}
\J[h](r)=\sum_{r'}K_{rr'}\,
   \big[\delta\mu_w(r')-\delta\mu_w(r)\big]
=:\Lop\big[\delta\mu_w\big](r),
\label{eq:graphlap}
\end{equation}
with $\delta\mu_w=\psi_T h/m$, i.e.\ $\J$ is the composition of a \emph{bona fide} weighted graph
Laplacian $\Lop$ with the pointwise map $h\mapsto\delta\mu_w$:
\begin{equation}
\J=\Lop\circ\frac{\psi_T}{m} .
\label{eq:Jcomp}
\end{equation}
Each class relaxes toward the conductance-weighted average of the
\emph{potentials} of the classes it is connected to.  This is the point
at which the occupancy and the potential must be kept apart: writing
$\Lop[h]$ in place of $\Lop[\delta\mu_w]$ would move the entropic
stiffness into the wrong slot and change the kernel of the operator
(Prop.~\ref{prop:selfadj}).  

These conductances are the equilibrium value of the mobility kernel
of~\eqref{eq:onsager}: the graph
Laplacian is the redistribution \emph{mobility} of the companion paper,
linearized about the step.  

\subsection{Self-adjointness, sign, and the single invariant}

The reduction needs three properties of $\J$: (i) self-adjointness in a
suitable weighted inner product; (ii) a definite sign, and (iii) a one-dimensional
kernel.  All three follow directly from the graph-Laplacian
form~\eqref{eq:graphlap}, with the weight set by the equilibrium step and
a single invariant (water mass).

\begin{prop}[Self-adjointness, kernel and cokernel]
\label{prop:selfadj}
Let $\Lop$ be the graph Laplacian of the conductances
$K_{rr'}=\Mr(r,r';g_\eq)f(r')$ and $\J=\Lop\circ(\psi_T/m)$
as in~\eqref{eq:Jcomp}.  Then
\begin{enumerate}
\item[(i)] $\Lop$ is self-adjoint and negative semidefinite in the mass inner
product $\langle u,v\rangle_f=\int u\,v\,f\,dr$, with
$\ker\Lop=\mathrm{span}\{1\}$ on a connected network;
\item[(ii)] consequently $\J$ is self-adjoint and negative semidefinite in the
weighted product $\langle u,v\rangle_{f/m}=\int u\,v\,(f/m)\,dr$;
\item[(iii)] $\ker\J=\mathrm{span}\{m\}=\mathrm{span}\{\partial_\theta g_\eq\}$, the tangent to the equilibrium family, while the \emph{cokernel}
is the mass mode, $\int\J[h]\,f\,dr=0$ for every $h$, i.e.\
$\ker\J^{\dagger}=\mathrm{span}\{1\}$.
\end{enumerate}
\end{prop}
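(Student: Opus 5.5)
The plan is to reduce everything to the symmetric quadratic form of the graph Laplacian $\Lop$ in \eqref{eq:graphlap}, and then transport it through the pointwise map $h\mapsto\psi_T h/m$ of \eqref{eq:Jcomp}. Throughout we work in the discrete notation introduced before \eqref{eq:graphlap}. We restrict to the resolved active band, where $m=g_\eq(1-g_\eq)>0$; this is legitimate because the finite-width equilibrium \eqref{eq:geq_FD} is exactly what makes $m$ strictly positive there. The sharp step \eqref{eq:g0} is then recovered as the limit $\psi_T\to0$. Off the band, $m=0$ and every conductance \eqref{eq:Mexplicit} vanishes, so those classes decouple and carry no dynamics.

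\emph{Step (i).} Write $\Lop[u]_r=\sum_{r'}K_{rr'}(u_{r'}-u_r)$ with $K_{rr'}=\Mr_{rr'}f_{r'}$. Then
\begin{equation}
\langle v,\Lop u\rangle_f=\sum_{r,r'}W_{rr'}\,v_r\,(u_{r'}-u_r),\qquad W_{rr'}:=f_r\,\Mr(r,r';g_\eq)\,f_{r'} .
\end{equation}
The weight $W$ is symmetric and non-negative, because $\kappa_s$, $C$ and $\sqrt{mm'}$ in \eqref{eq:Mexplicit} are all symmetric and non-negative. Relabelling $r\leftrightarrow r'$ and averaging the two copies gives
\begin{equation}
\langle v,\Lop u\rangle_f=-\tfrac12\sum_{r,r'}W_{rr'}(v_{r'}-v_r)(u_{r'}-u_r).
\end{equation}
This expression is manifestly symmetric in $u$ and $v$, and setting $v=u$ shows it is $\le0$. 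The same identity identifies the kernel: $\langle u,\Lop u\rangle_f=0$ forces $u_r=u_{r'}$ on every edge with $W_{rr'}>0$. On a connected network (connectivity $C$ linking the active band) this means $u$ is constant, so $\ker\Lop=\mathrm{span}\{1\}$.

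\emph{Step (ii).} Set $D=\psi_T/m$, a positive multiplication operator on the band. For any $u,v$,
\begin{equation}
\langle u,\J v\rangle_{f/m}=\int u\,\tfrac{f}{m}\,\Lop[Dv]\,dr=\psi_T^{-1}\langle Du,\Lop[Dv]\rangle_f .
\end{equation}
By (i) the right-hand side is symmetric in $u,v$ and is $\le0$ when $u=v$. This gives self-adjointness and negative semidefiniteness of $\J$ in $\langle\cdot,\cdot\rangle_{f/m}$.

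\emph{Step (iii).} For the kernel, $\J h=0$ holds if and only if $Dh\in\ker\Lop=\mathrm{span}\{1\}$, that is, $h\propto m$. To identify $m$ with the tangent, differentiate the Fermi form \eqref{eq:geq_FD} with respect to the multiplier $\lambda$: this gives $\partial_\lambda g_\eq=g_\eq(1-g_\eq)/\psi_T=m/\psi_T$. Since $\theta$ depends on the equilibrium only through $\lambda$, with $d\theta/d\lambda=\phi\int m f\,dr/\psi_T>0$, we get $\partial_\theta g_\eq\propto m$. In the limit $\psi_T\to0$ this reproduces the frontier tangent \eqref{eq:tangent}. For the cokernel, note that
\begin{equation}
\int\J[h]\,f\,dr=\langle 1,\Lop[Dh]\rangle_f=\langle \Lop 1,Dh\rangle_f=0,
\end{equation}
using self-adjointness of $\Lop$ and $\Lop1=0$ from (i). Equivalently, this is the antisymmetry of the pair current $\delta j$ in \eqref{eq:jcurrent}. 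Hence $1\in\ker\J^\dagger$, where the adjoint is taken in the unweighted mass pairing. One-dimensionality of $\ker\J^\dagger$ then follows from the rank–nullity equality $\dim\ker\J^\dagger=\dim\ker\J=1$ for the finite matrix.

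\emph{Main obstacle.} I expect the difficulty to lie in the degenerate weight, not in the algebra. The inner product $\langle\cdot,\cdot\rangle_{f/m}$ and the map $D$ are only defined where $m>0$, and the sharp-step reference makes the band collapse to a point. I will handle this by carrying out the whole argument at finite $\psi_T$ on the connected active band. There the matrix is finite, $D$ is bounded and invertible, and every step is plain linear algebra. The step-limit statements are then read off as limits: in particular $m/\!\int mf\to\delta(r-r^*)/f(r^*)$, consistent with \eqref{eq:tangent}.
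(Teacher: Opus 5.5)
Your proposal is correct and follows the paper's proof essentially step for step. Both arguments use the symmetric edge-sum quadratic form of $\Lop$ for (i); your weight $W_{rr'}=f_r\Mr f_{r'}$ is the paper's detailed-balance weight $K_{rr'}f_r$. Both then use the identity $\langle u,\J v\rangle_{f/m}=\psi_T\langle u/m,\Lop[v/m]\rangle_f$ for (ii). For (iii) both obtain $\ker\J=\mathrm{span}\{m\}$ by differentiating the Fermi--Dirac equilibrium and show that constants annihilate $\J$ from the left. Your explicit restriction to the resolved band at finite $\psi_T$ and your rank--nullity argument for $\dim\ker\J^\dagger=1$ spell out two points the paper leaves implicit, without changing the route.
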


\noindent\textit{Proof.}
Recall from~\eqref{eq:graphlap} that $\J$ is a graph-Laplacian 
built from
nonnegative, symmetric edge conductances $K_{rr'}=K_{r'r}$ on a network
of nodes (here the pore classes).  Such operators have a standard
quadratic form: for any function $u$ on the nodes,
\begin{equation}
\langle u,\Lop u\rangle_f=
-\tfrac12\sum_{rr'}K_{rr'}f_r(u_r-u_{r'})^2\le0,
\end{equation}
 a sum of squared
differences across edges that is manifestly nonpositive and vanishes
exactly when $u$ takes the same value on every node of a connected
component.  Hence $\Lop$ is negative semidefinite and, on a single
connected network, $\ker\Lop=\mathrm{span}\{1\}$, which is~(i).
Part~(ii) is immediate: with $\J=\Lop\circ(\psi_T/m)$,
$\langle u,\J v\rangle_{f/m}
=\psi_T\langle u/m,\Lop[v/m]\rangle_f$, symmetric in $u\leftrightarrow v$
and non-positive at $u=v$ by~(i).  Part~(iii): $\J[h]=0$ iff
$h/m\in\ker\Lop$, i.e.\ $h\propto m$; and differentiating the
Fermi--Dirac equilibrium at fixed shape gives
$\partial_\theta g_\eq\propto m$, so the kernel \emph{is} the tangent
to the equilibrium family.  Its orthogonal complement in
$\langle\cdot,\cdot\rangle_{f/m}$ is the zero-mass subspace, because
$\langle h,m\rangle_{f/m}=\int h\,f\,dr$: $\J$ is invertible exactly on the
fluctuations that carry no water, which is what the first-order problem
requires.  

Self-adjointness in $\langle\cdot,\cdot\rangle_f$ is the detailed-balance
symmetry of the edge weights,
\begin{equation}
K_{rr'}\,f_r=K_{r'r}\,f_{r'} ,
\label{eq:detbal}
\end{equation}
which holds identically because the edge weight factorizes through a
symmetric kernel,
\begin{equation}
K_{rr'}=\Mr(r,r';g_\eq)\,f(r'),
\qquad
\Mr(r,r')=\Mr(r',r)
\quad\text{\eqref{eq:Krr}},
\end{equation}
so that both sides of~\eqref{eq:detbal} equal
$\Mr(r,r')f(r)f(r')$; hence
$K_{rr'}f_r=\Mr(r,r')f(r')f(r)$ is symmetric in the exchange  $r$ to $r'$.
The same symmetry makes $\int\J[h]\,f\,dr
=\int\Lop[\delta\mu_w]\,f\,dr=0$ for every $h$, which is~(iii) for the
cokernel: water-mass conservation is the statement that the constants
annihilate $\J$ \emph{from the left}.\hfill$\square$

It is worth stating plainly what parts (i)--(iii) replace.  Kernel and
cokernel are \emph{distinct} here: the kernel is the tangent to the
equilibrium manifold, $\mathrm{span}\{\partial_\theta g_\eq\}$; the
cokernel is the collision invariant, $\mathrm{span}\{1\}$.  In the
Boltzmann case the two coincide, because the local Maxwellian is
parametrized by the invariants themselves, and the literature accordingly
conflates them; here they do not, because $g_\eq$ is a filling step and
not a Maxwellian.  Nothing downstream is lost by the distinction: the
Fredholm alternative tests orthogonality to the \emph{cokernel}, so the
solvability condition remains $\langle1,S\rangle_f=0$, i.e.\ exactly
water-mass conservation (Sec.~\ref{sec:solvability}).  One invariant,
one macroscopic equation.

\subsection{Spectral gap and invertibility}
\label{sec:gap}

The CE expansion requires that $\J$ be invertible on the space of
non-equilibrium perturbations, with a bounded inverse, otherwise the
first-order correction $g^{(1)}$ could not be solved for.  In network
terms this is the requirement that the slowest relaxation mode still
relax at a finite rate: there must be a \emph{gap} between the zero
eigenvalue (the conserved mass mode, which never relaxes) and the next
one.  For a graph-Laplacian this gap is the \emph{algebraic
connectivity}, the standard measure of how well connected the network
is~\cite{Fiedler1973}; it stays positive as long as the pore network
percolates, and closes as the network fragments at the percolation
threshold $\theta_c$.  

\begin{prop}[Spectral gap]
\label{prop:gap}
On a connected network and for $\theta$ bounded away from the
percolation threshold $\theta_c$, the largest nonzero eigenvalue of $\J$
is $-\lambda_1(\theta)<0$ with $\lambda_1\simeq1/\tau_{\rm relax}$,
$\tau_{\rm relax}=\max_r\tau(r)$.  Hence $\|\J^{-1}\|\lesssim\tau_{\rm relax}$
on the complement of $\ker\J$.  As $\theta\to\theta_c^+$ the conductances
across the pores population vanish through $a_w\to0$, $\lambda_1\to0$,
and $\tau_{\rm relax}\to\infty$.
\end{prop}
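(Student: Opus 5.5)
Our plan is to work entirely with the factorization $\J=\Lop\circ(\psi_T/m)$ of Eq.~\eqref{eq:Jcomp} and the weighted inner product of Prop.~\ref{prop:selfadj}(ii). In $\langle\cdot,\cdot\rangle_{f/m}$ the operator $-\J$ is self-adjoint and nonnegative, so its nonzero spectrum is characterized variationally. Writing $u=h/m$, the first nonzero eigenvalue is
\begin{equation}
\lambda_1=\min_{h\perp m}\frac{-\langle h,\J h\rangle_{f/m}}{\langle h,h\rangle_{f/m}}
=\min_{\int u\,m f\,dr=0}\frac{\tfrac{\psi_T}{2}\sum_{rr'}K_{rr'}f_r(u_r-u_{r'})^2}{\sum_r m_r f_r u_r^2}.
\end{equation}
The constraint $h\perp m$ in this product is exactly the zero-mass condition $\int h f\,dr=0$ noted in the proof of Prop.~\ref{prop:selfadj}. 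This Rayleigh quotient is the generalized Fiedler value of the graph with edge weights $\psi_T K_{rr'}f_r$ and node masses $m_rf_r$. Positivity on a connected network then follows directly from Prop.~\ref{prop:selfadj}(i): the numerator vanishes only for constant $u$, and the mean-zero constraint excludes constants apart from $u=0$.

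For the estimate $\lambda_1\simeq1/\tau_{\rm relax}$, we will bound the quotient from both sides in terms of the diagonal rates $\mathcal{A}(r)/m(r)$ of Eq.~\eqref{eq:J}.

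\emph{Upper bound.} We take a trial function localized on the slowest class $r_s$, where $\tau(r_s)=\tau_{\rm relax}$, and correct it by a constant multiple of $m$ to enforce zero mass. Expanding the quadratic form gives $\lambda_1\lesssim\mathcal{A}(r_s)/m(r_s)=1/\tau_{\rm relax}$, up to corrections that are small when the class grid is fine.

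\emph{Lower bound.} We use a Cheeger/Poincaré-type inequality for the weighted graph: $\lambda_1\ge c\,\min_r\mathcal{A}(r)/m(r)$, with $c$ controlled by the conductance ratios across the active band. On the resolved band the off-diagonal kernel $\mathcal{B}$ is comparable to the diagonal, so $c=O(1)$. This is the meaning of ``$\simeq$'' intended by the statement, and we will make clear that it is a two-sided estimate up to $O(1)$ constants rather than an equality.

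The operator bound is then immediate: on $(\ker\J)^\perp$ the spectral theorem gives $\|\J^{-1}\|_{f/m}=1/\lambda_1\lesssim\tau_{\rm relax}$.

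For the percolation statement, every conductance $K_{rr'}$ carries the accessibility factor $a_w$ through $C$. As $\theta\to\theta_c^+$, $a_w\to0$ and the spanning cluster fragments. To show $\lambda_1\to0$ we use a test function that takes different constants on the two sides of the weakest cut, adjusted to zero mass. Its numerator is proportional to the vanishing cut conductance, while its denominator stays finite. Hence $\lambda_1\to0$ and $\tau_{\rm relax}\to\infty$.

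The main obstacle is the lower bound. A pure diagonal-dominance argument fails for a Laplacian, because the off-diagonal term exactly cancels the diagonal on constants. The lower bound therefore genuinely requires a connectivity (Cheeger) constant, and this constant is not intrinsically tied to $\max_r\tau(r)$. The first thing we would try is to exploit the near-complete-graph structure of $\omega$ on the narrow active band of width $\max(\psi_T,\rho_w\mathrm{g}\,\ell)$. On a complete graph the Fiedler value equals the minimum weighted degree, which yields the desired estimate. We would then present the general case as holding up to the ratio of extreme conductances.
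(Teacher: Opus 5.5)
Your route differs from the paper's, and it is more careful. The paper's proof is a one-line appeal. It says $\J$ is a connected weighted graph Laplacian whose algebraic connectivity is ``set by the slowest local rate up to an $O(1)$ connectivity constant,'' and reads the agreement off the Supplemental diagonalizations (within $6\%$). It derives neither inequality, and it leaves the percolation clause as a physically motivated claim.

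Your Rayleigh quotient in $\langle\cdot,\cdot\rangle_{f/m}$ is correct, and your localized trial function yields an actual inequality. Take $h=\delta_{r_s}-c\,m$. The cross term vanishes because $\J m=0$ and $\J$ is self-adjoint, so $\lambda_1\le(-\J)_{r_sr_s}/(1-\pi_s)\le\tau_{\rm relax}^{-1}/(1-\pi_s)$. Here $\pi_s=m_{r_s}f_{r_s}/\sum_r m_rf_r$ is the mass fraction of the slowest class, and the bound holds for \emph{any} connectivity. The paper never states this: its ``$O(1)$ constant'' is at most $(1-\pi_s)^{-1}\to1$, so the whole content of $\lambda_1\simeq1/\tau_{\rm relax}$ lies in the lower bound.

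That lower bound is where your plan, as first stated, would not close. The norm estimate $\|\J^{-1}\|\lesssim\tau_{\rm relax}$ is not ``immediate''; it rests entirely on the lower bound, since your trial function gives the opposite inequality.

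\begin{itemize}
\item \emph{Cheeger is the wrong tool.} For a reversible generator it gives $\lambda_1\ge h^2/(2\max_r\mathcal{A}/m)$, with the \emph{largest} exit rate in the denominator, while $h\le\min_r\mathcal{A}/m$. Here $\mathcal{A}/m=\psi_T\,m^{-1/2}\!\int\kappa_sC\sqrt{m'}\,f'\,dr'$ grows into the wings, spanning about four decades across the band in the Supplemental S1. The constant therefore comes out as the spread of local rates, not $O(1)$, and ``$\mathcal{B}$ comparable to the diagonal'' does not repair this.
\item \emph{Your closing idea is the right lemma, but it is interlacing}, not ``Fiedler value equals minimum degree.'' The matrix of $-\J$ symmetrized with $(f/m)^{1/2}$ is $\mathrm{diag}(\mathcal{A}/m)-\psi_T[\kappa_sC]\circ\sqrt{f}\sqrt{f}^{\top}$. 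If $\kappa_sC\equiv k$, the subtracted matrix is rank one and positive semidefinite. Weyl interlacing then places $\lambda_1$ between the two smallest local rates, so $\lambda_1\to1/\tau_{\rm relax}$ as the grid is refined.
\item \emph{The general comparison.} If $k_{\min}\le\kappa_sC\le k_{\max}$ on the active band, comparing Dirichlet forms gives $(k_{\min}/k_{\max})\,\tau_{\rm relax}^{-1}\le\lambda_1\le\tau_{\rm relax}^{-1}(1-\pi_s)^{-1}$. This is useful precisely for the narrow band the paper requires.
\item \emph{Why no universal constant exists.} Throat-mediated bimodal connectivity makes the subtracted part effectively rank two. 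That admits one isolated eigenvalue below $\min_r\mathcal{A}/m$: the cross-band mode of the Supplemental S3, about $70$ times below the intra-band rates.
\end{itemize}

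Finally, in the percolation step, $\tau_{\rm relax}\to\infty$ follows from $\mathcal{A}\to0$, not from $\lambda_1\to0$; a small $\lambda_1$ with finite $\tau_{\rm relax}$ is exactly the bimodal situation. Once $\mathcal{A}\to0$, your upper bound already forces $\lambda_1\to0$. The ``weakest cut'' of a fragmenting spanning cluster belongs to the spatial pore network, not to the pore-class graph on which $\J$ acts.
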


\noindent
There is a second, independent way for the gap to close, and it must be
stated because it constrains the reduction more tightly than percolation
does.  The conductances $K_{rr'}$ carry the occupancy factors
$m=g_\eq(1-g_\eq)$, so they are supported on the \emph{frontier band} of
partially filled classes.  Write $w:=\psi_T/|\mu_w(r^*)|$ for the
dimensionless width of that band.  As $w\to0$ the band closes, the
conductances collapse with it, and numerically (as in Supplemental Material~\cite{SM_hilbert}) we find
\begin{equation}
\lambda_1 \;\propto\; w^{\,\gamma},
\qquad \gamma\simeq1.8\text{--}1.9\quad(w\lesssim0.2) ,
\label{eq:gapwidth}
\end{equation}
over a decade in $w$ below $w\simeq0.2$; beyond $w\simeq0.2$--$0.3$ the
frontier ceases to control the slow mode, which migrates to the fine pores,
delocalizes, and saturates in rate, so the statements of this section hold
for a resolved \emph{and narrow} band.
The scaling, and the robustness of the gap itself, requires the
geometric-mean gates of Eq.~\eqref{eq:Mexplicit}: with the double product
$m\,m'$ the slowest modes migrate to the edges of the active band,
$\lambda_1$ is set by wherever the band is truncated rather than by the
physics at $r^*$, and the fitted exponent even changes sign.  The gate
content of $\Mr$ is therefore not a matter of convention.
The gap is therefore \emph{quadratic in the band width}, and the sharp
step $w\to0$ has no gap at all: $\J^{-1}$ is unbounded there and the CE
expansion is not asymptotic.  This is not a numerical inconvenience but a
statement about the medium.  The width has at least three physical
contributions, configurational fluctuations ($\psi_T$), gravitational
smearing across the REV ($\rho_w\mathrm{g}\,\ell$), and, probably
dominant, within-class geometric dispersion (throat versus body radius,
contact-angle spread, the finite width of a ``class''), and it is
measurable from micro-CT.  Accordingly we use the sharp
step~\eqref{eq:g0} for the \emph{equilibrium} statements (the retention
curve, the frontier tangent), where it is exact, and a resolved band of
finite $w$ for every statement about $\J$, its spectrum, and $K$.  The
condition $\Da\ll1$ is thereby promoted from an assumption to a
requirement on a measurable property of the soil.

\begin{rem}[the athermal limit is singular]
\label{rem:athermal}
One may instead take the transfer rates as purely viscous and rectified
to the downhill direction, $j\propto\kappa_s C\,[(\Phi)_+\,g'(1-g)
-(\Phi)_-\,g(1-g')]$, with no entropic term.  That operator has the sharp
step as an \emph{exact} equilibrium, $\C[g_\eq]\equiv0$; its linearization
is triangular in the distance from $r^*$, so its spectrum is exactly
$\{-\mathcal{A}_{\rm ath}(r)\}$ with
\begin{equation}
\mathcal{A}_{\rm ath}(r)=\!\!\int\limits_{\min(r,r^*)}^{\max(r,r^*)}\!\!
  \kappa_s(r,r')\,C(r,r')\,\rho_w\mathrm{g}\,|\Phi(r,r')|\,f(r')\,dr' ,
\label{eq:Aathermal}
\end{equation}
the radius labeling $\lambda_n=\mathcal{A}(r_n)$ becoming an identity
rather than an approximation, and $\mathcal{A}_{\rm ath}(r^*)=0$ so that
again there is no gap at the sharp step.  The two routes do not commute:
the entropic rate carries the factor $1/m$, which diverges on the wings,
whereas the athermal rate does not, so
$\lim_{\psi_T\to0}\mathcal{A}/m\neq\mathcal{A}_{\rm ath}$.  Which is
physical depends on whether the configurational entropy of the packing is
retained at the pore scale; we keep the entropic form, for which the
inner product, the spectral theorem and the Green--Kubo bracket are
available, and note that both routes agree on what matters here, the
frontier relaxes slowest, and the gap is set by the band width.
\end{rem}
\noindent\textit{Proof.}  $\J$ is a connected weighted graph-Laplacian,
spectrum $0=\lambda_0>-\lambda_1\ge\cdots$; the gap (algebraic
connectivity) is set by the slowest local rate
$\min_r[\mathcal{A}(r)/m(r)]=1/\tau_{\rm relax}$ up to an $O(1)$
connectivity constant; numerically the two agree to within $6\%$
across the band widths of Eq.~\eqref{eq:gapwidth} (Supplemental Material,
Tables~S5 and S6).\hfill$\square$

The gap is what makes the expansion asymptotic: $\J$ maps
$(\ker\J)^\perp$ boundedly to itself, so the first-order equation is
solvable there, and successive orders are bounded by powers of $\Da$.
The expansion fails exactly where the gap closes
($\theta\to\theta_c$), the  reason the reduction cannot be pushed through the
percolation threshold.

The lower bound on the connected network is controlled; the
\emph{closure} of the gap at $\theta_c$ is, by contrast, a physically
motivated result rather than a proven scaling law. 
 It is supported numerically: in the pore-network simulations of
the companion Supplemental Material the algebraic connectivity
$\lambda_1$ collapses by roughly three orders of magnitude as $\theta$ is
lowered toward $\theta_c$, and does so in lockstep with the Stokes
permeability of the same network (the two track one another in
log--log), so the vanishing of the gap and the loss of macroscopic
conductivity occur together at one threshold.  A rigorous derivation of
the critical exponent governing $\lambda_1(\theta\!\to\!\theta_c)$ is
left open.

\subsection{The relaxation spectrum: what the eigenmodes are}
\label{sec:eigenmodes}

Self-adjointness gives $\J$ a real spectrum and an
$\langle\cdot,\cdot\rangle_f$-orthonormal eigenbasis
$\{\varphi_n(r)\}_{n\ge0}$,
\begin{equation}
\J\,\varphi_n=-\lambda_n\,\varphi_n,\qquad
0=\lambda_0<\lambda_1\le\lambda_2\le\cdots,
\label{eq:eigen}
\end{equation}
the \emph{redistribution normal modes}: each $\varphi_n(r)$ is an occupancy
pattern that relaxes as $e^{-\lambda_n t}$ without change of shape, with
rate $\lambda_n=1/\tau_n$.  Being a property of $\J$ alone, the same
eigenbasis serves at every order of the hierarchy; the corrections
$g^{(n)}$ are its coefficients, not the modes themselves.  Three features
of this spectrum, and only these, carry the reduction.

\emph{One invariant.}  The kernel is one-dimensional, $\varphi_0=1$ with
$\lambda_0=0$; the entrywise statement is the row-sum identity
$\mathcal{A}(r)=\int_0^\infty K(r,r')\,dr'$ [with
$K(r,r')=\Mr(r,r';g_\eq)f(r')$, Eq.~\eqref{eq:Krr}], water conservation as
a property of $\J$.  This single invariant is why solvability yields
exactly one macroscopic equation.  Its macroscopic image is the
frontier-shift tangent $\partial_\theta g_\eq=\delta(r-r^*)/[\phi f(r^*)]$
,the one direction Richards' equation follows.

The modes are supported on the \emph{active band} of
partially filled classes straddling $r^*$; for the sharp
step~\eqref{eq:g0} the band collapses and the spectrum degenerates, and it
is the equilibrium smearing of Sec.~\ref{sec:step1} (widths $\psi_T$ and
$\rho_w\mathrm{g}\,\ell$) that gives it finite extent and a gap
$\lambda_1>0$.  The gap makes $\J^{-1}$ bounded on $(\ker\J)^\perp$, hence
the expansion asymptotic (Prop.~\ref{prop:gap}).

The conductivity inherits this spectrum directly: expanding $\J^{-1}$ on
the eigenbasis turns~\eqref{eq:K_exact} into a sum over relaxation modes,
weighted inversely by their rates [Eq.~\eqref{eq:K_spectral}], so $K$ is
regularized by the gap and, as $\theta\to\theta_c$, comes to be dominated
by the slowest mode $\varphi_1$,conductivity is set by how
fast the frontier can re-sort water among pores.  As $\theta\to\theta_c$
the gap closes, $\lambda_1\to0$, and that sum diverges, a statement about
the \emph{reduction}, not the flow: the slowest mode can no longer restore
equilibrium within a forcing time, $\Da\ll1$ fails locally, and the closed
conductivity ceases to exist.  (This is distinct from the high-$\Da$
preferential-flow regime of Sec.~\ref{sec:breakdown}, reached by fast
forcing rather than by loss of connectivity.)  At the threshold itself the
gap eigenvector is no longer the small-$r$ frontier mode but the
\emph{Fiedler vector} of the pore network~\cite{Fiedler1973}, cut by
topology rather than by radius.

A fuller characterization of the modes, their localization by pore radius,
the spectral sum rule $\psi_T\mathcal{A}(r)=f(r)\sum_{n\ge1}\lambda_n
|\varphi_n(r)|^2$ (with $\varphi_n$ orthonormal in $\langle\cdot,\cdot\rangle_{f/m}$), and the joint collapse of the gap and the permeability
at $\theta_c$,is verified numerically in the Supplemental
Material~\cite{SM_hilbert}.

\subsection{Same operator, changing source}

The structural payoff is that the whole CE hierarchy has the form
\begin{equation}
\J[g^{(n)}]=S_{n-1}[g^{(0)},\dots,g^{(n-1)}]\qquad(n\ge1),
\label{eq:hierarchy}
\end{equation}
the \emph{same} self-adjoint operator at every order, only the source
$S_{n-1}$ changing, assembled entirely from orders up to $n-1$ (the
subscript records this dependence).  This is what makes the derivation
mechanical, and it rests on a fact worth stating precisely.
\begin{prop}[Lower-order source]
\label{prop:auto}
At each order $\Da^n$ the highest correction $g^{(n)}$ enters the
expansion of $\C[g]$ exactly once, through the linear term
$\J[g^{(n)}]$; all other terms depend only on the lower orders
$g^{(0)},\dots,g^{(n-1)}$ already determined.
\end{prop}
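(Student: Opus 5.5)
The plan is to expand $\C[g]$ as a formal Taylor (Fréchet) series about $g^{(0)}=g_\eq$ and count powers of $\Da$. I will treat $\C$ as a smooth functional on the resolved band, where $m>0$ and the mobility~\eqref{eq:Mgeneral} and potential $\mu_w=\mu_0+\psi_T\ln[g/(1-g)]$ are analytic in $g$. Then write
\begin{equation}
\C[g^{(0)}+\eta]=\C[g^{(0)}]+\J[\eta]+\sum_{k\ge2}\frac{1}{k!}\,\C^{(k)}\big|_{g_\eq}[\eta,\dots,\eta],
\end{equation}
with $\eta=\sum_{n\ge1}\Da^n g^{(n)}$. The $k$-th derivative $\C^{(k)}$ is a symmetric $k$-linear form. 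Substituting the series for $\eta$ and collecting the coefficient of $\Da^n$ gives a finite sum over compositions $n=n_1+\dots+n_k$ with all $n_i\ge1$. The term $\C^{(k)}[g^{(n_1)},\dots,g^{(n_k)}]$ carries the power $\Da^{n_1+\dots+n_k}$.

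The key counting step is elementary. If $k\ge2$ and every $n_i\ge1$, then every $n_i\le n-(k-1)\le n-1$. So every nonlinear term at order $\Da^n$ involves only $g^{(1)},\dots,g^{(n-1)}$, with coefficients evaluated at $g^{(0)}$. For $k=1$ the only composition is $n_1=n$, giving exactly one occurrence $\J[g^{(n)}]$. For $k=0$ we get $\C[g^{(0)}]$, which is zero by the zeroth-order condition of Sec.~\ref{sec:step1}. This proves the statement for $\C$.

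To connect with the hierarchy~\eqref{eq:hierarchy}, I then check the remaining terms of~\eqref{eq:singular}. After rescaling by $\Da^{-1}$, the term $\J[g^{(n)}]$ sits at the same order as the streaming and time-derivative terms built from $g^{(n-1)}$. The time derivative $\partial_T$ acts on $g^{(k)}(r;\theta,\nabla\theta,\dots)$ through the chain rule. Under the CE prescription, $\partial_T\theta$ is itself expanded from the solvability conditions of lower orders, so it introduces only $g^{(k)}$ with $k\le n-1$. The flux $\nabla\!\cdot\!\mathbf{F}[g]$ is expanded the same way; because $\Gamma$ in~\eqref{eq:Gamma} depends polynomially on $g$, its $\Da^{n-1}$ coefficient contains $g^{(k)}$ only for $k\le n-1$. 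Collecting these terms defines $S_{n-1}$.

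The main obstacle is justifying the Taylor expansion itself. At the sharp step, $m=0$ off the frontier, so $1/m$ and $\ln[g/(1-g)]$ are singular and $\C$ is not Fréchet differentiable there. I would restrict the argument explicitly to the resolved band of finite width $w$ (Sec.~\ref{sec:gap}), as the paper does for every statement about $\J$. There the $\C^{(k)}$ are bounded multilinear maps and the argument is purely formal, order by order. I will also note that the statement is about formal order counting, not convergence of the series.
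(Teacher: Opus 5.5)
Your proposal is correct and follows the paper's own proof: a Fréchet--Taylor expansion of $\C$ about $g_\eq$ with $\delta g=\sum_{m\ge1}\Da^{m}g^{(m)}$. The linear term supplies $\J[g^{(n)}]$ at order $\Da^n$, and every $k$-linear term with $k\ge2$ involves only $g^{(j)}$ with $j\le n-k+1\le n-1$. Your explicit composition bound, your check of the streaming and slow-time terms that build $S_{n-1}$, and your restriction to a resolved band (where $\C$ is actually differentiable, unlike at the sharp step) are sound refinements of what the paper states more briefly.
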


\noindent\textit{Proof.}  Expand $\C$ in a Fr\'echet (Taylor) series about
$g^{(0)}=g_\eq$, with $\delta g=\sum_{m\ge1}\Da^{\,m}g^{(m)}$:
\begin{equation}
\begin{split}
\C[g]&=\underbrace{\C[g_\eq]}_{=0}+\underbrace{\C'[g_\eq]}_{=\J}\delta g
   +\tfrac12\C''[g_\eq](\delta g,\delta g)+\cdots\\
  &=\J[\delta g]+\tfrac12\C''(\delta g,\delta g)+\cdots
\end{split}
\label{eq:Ctaylor}
\end{equation}
The linear term gives $\J[\delta g]=\sum_n\Da^{\,n}\J[g^{(n)}]$.  The
quadratic term begins at $\Da^2$ and at order $\Da^n$ contains only
products $g^{(j)}g^{(k)}$ with $j,k\le n-1$; the cubic term begins at
$\Da^3$; and so on.  Hence $g^{(n)}$ appears at order $\Da^n$ only in
$\J[g^{(n)}]$, never trapped inside a product.  Matching order $\Da^n$
in~\eqref{eq:singular} (whose $\Da^{-1}$ shifts the redistribution down
one order) gives $\J[g^{(n)}]=S_{n-1}[g^{(0)},\dots,g^{(n-1)}]$, which is
\eqref{eq:hierarchy}.\hfill$\square$

\medskip
\noindent
One therefore inverts the single fixed operator $\J$ repeatedly, the
source on the right assembled from what is already known from the previous iteration.  Solvability of
each member is the orthogonality of $S_{n-1}$ to $\ker\J$, and that
orthogonality is the macroscopic balance.  We now carry out $n=0,1$.

Equivalently, the whole reduction can be written in one step as the
\emph{Schur complement}  given in App.~\ref{app:schur}.

\section{The Chapman--Enskog hierarchy}
\label{sec:hierarchy}

The zeroth order and the equilibrium manifold were developed in
Sec.~\ref{sec:hilbert_ce}; with the operator $\J$ now in hand
(Sec.~\ref{sec:operator}), we carry out the first order and its
solvability, which together produce the macroscopic equation.

\subsection{First order: the linearized budget and the source}
\label{sec:step2}

At order $\Da^0$ the streaming term enters and the linearized
redistribution responds.  It is convenient to fix here a term used
repeatedly below: a pore class is said to be \emph{slaved} to
equilibrium when its occupancy is not an independent degree of freedom
but a function of the macroscopic state alone,
$g(r)=g_\eq(r;\theta)$, redistribution restores equilibrium faster than
the forcing can disturb it, so the class carries no memory of its own.
The whole of the CE reduction is the statement that, when the gap is
wide, every class is slaved except through the first-order correction
$g^{(1)}$.  Inserting the
ansatz~\eqref{eq:ansatz} into~\eqref{eq:singular} and collecting
$O(\Da^0)$ terms,
\begin{equation}
\boxed{\;
\J[g^{(1)}](r)=\partial_t g^{(0)}+\nabla\!\cdot\!\mathbf{F}[g^{(0)}]
  +\mathcal{E}+\mathcal{T}\;}
\label{eq:first_order}
\end{equation}
This is~\eqref{eq:hierarchy} with $n=1$.  The left side is the linearized
budget of Sec.~\ref{sec:operator}; the right side is the source, built
from two pieces evaluated on the equilibrium step.

\emph{The inter-REV source} is the divergence of the entangled
flux~\eqref{eq:Fentangled} evaluated at $g^{(0)}$.  The neighbour
expansion
\begin{equation}
g^{(0)}(r';\theta(\mathbf{x}+\delta\mathbf{x}))\simeq H(r^*\!-r')
+\delta(r'-r^*)[\phi f(r^*)]^{-1}\nabla\theta\!\cdot\!\delta\mathbf{x}
\end{equation}
collapses the $r'$-integral onto the frontier $r'=r^*$
(App.~\ref{app:gradient}); with the gate cancellation that makes the
relaxation rate~\eqref{eq:A} non-negative,
$\big|[1-2g^{(0)}(r)]\,\Phi(r,r^*)\big|=|\Phi(r,r^*)|$ (only the magnitude
enters the reduction; the physical sign of $\mathbf q$, down the head gradient,
is fixed at the identification of $K$), this gives
\begin{equation}
S(r;\nabla\theta)=\kappa(r)\,\frac{C_\partial(r,r^*)}{f(r^*)}\,
  \big|\Phi(r,r^*)\big|\;\nabla\theta\!\cdot\!\hat n .
\label{eq:source}
\end{equation}
Only pores connected to the frontier feel the gradient: $r^*$ is the
gateway through which it enters each REV.

Since $g^{(0)}=H(r^*(\theta)-r)$,
\begin{equation}
\partial_t g^{(0)}=\frac{\delta(r-r^*)}{\phi f(r^*)}\,\partial_t\theta
=\Da\,\frac{\delta(r-r^*)}{\phi f(r^*)}\,\partial_T\theta
\label{eq:dtg0}
\end{equation}
on the slow clock~\eqref{eq:slowtime}, one order below the $O(1)$ source;
by the slaving principle the correction's own derivative is smaller still,
$\Da\,\partial_t g^{(1)}=O(\Da^2)$ (App.~\ref{app:spectral}).  So at this
order the time term is carried entirely by \eqref{eq:dtg0}.  The
constraint $\int g^{(1)}f\,dr=0$~\eqref{eq:constraint} places $g^{(1)}$ in
$(\ker\J)^\perp$, where $\J$ inverts (Prop.~\ref{prop:gap}).

Every term now carries a single power of $\Da$,the left side through
\eqref{eq:dtg0}, the internal operator because its $O(\Da^0)$ part
vanished at zeroth order ($\mathcal{G}_{\rm int}-\Lg_{\rm int}
=\Da\,\J[g^{(1)}]+O(\Da^2)$), and the inter-REV exchange by
construction, so dividing through,
\begin{equation}
\frac{\delta(r-r^*)}{\phi f(r^*)}\,\pder{\theta}{T}
  +S(r,\nabla\theta)=\J[g^{(1)}](r),
\label{eq:combine}
\end{equation}
a Richards equation in $g^{(1)}$, made macroscopic by the solvability
integration below.

\subsection{Solvability is mass conservation}
\label{sec:solvability}

As previewed in Sec.~\ref{sec:hilbert_ce},
Eq.~\eqref{eq:first_order} has a solution only if its right side is
orthogonal to $\ker\J=\mathrm{span}\{1\}$ in the $f$-measure, the
Fredholm alternative~\cite{Fredholm1903} for the self-adjoint $\J$.
This involves the $f$-weighted moment of~\eqref{eq:first_order}, and:
\begin{itemize}
\item $\int\J[g^{(1)}]\,f\,dr=0$ (null-space property), annihilating the
  redistribution term;
\item the $\delta$-function in $\partial_t g^{(0)}$ integrates to
  $(1/\phi)\,\partial_t\theta$;
\item the source integrates to
  $\int S\,f\,dr=(1/\phi)\,\nabla\!\cdot\!\mathbf{q}$, defining the
  divergence of the macroscopic flux $\mathbf{q}$ (see App.~\ref{app:fluxid} for its identification of
  $\mathbf{q}$);
\item the sinks, $\mathcal{E}$ and $\mathcal{T}$  integrate to $S_{\rm evap}+S_{\rm root}$.
\end{itemize}
Therefore, from \eqref{eq:first_order} we obtain:
\begin{equation}
\boxed{\;
\partial_t\theta+\nabla\!\cdot\!\mathbf{q}=-S_{\rm evap}-S_{\rm root}\;}
\label{eq:continuity}
\end{equation}
The mathematical requirement for $g^{(1)}$ to exist \emph{is}
the macroscopic continuity equation.  It
\emph{must} hold for the expansion to be self-consistent.  The single
invariant $\theta$ yields this single macroscopic equation, the soil
analogue of the inviscid (mass-conservation) balance that closes the
leading order.

\section{The response function and the hydraulic conductivity}
\label{sec:K}

\subsection{The response function}
Having determined the properties of $\J$, we now solve the first-order equation~\eqref{eq:first_order} for the
correction $g^{(1)}$.  Two simplifications apply at this order.  First,
the time derivative is removed by the slow-time ordering
already in force.  On the slow clock $T=\Da\,t$~\eqref{eq:slowtime} the
sole conserved field evolves as $\partial_t\theta=\Da\,\partial_T\theta$,
so the only time-dependent part of $g^{(0)}=g_\eq(r;\theta)$ is
\[
\partial_t g^{(0)}
  =\frac{\partial g_\eq}{\partial\theta}\,\partial_t\theta
  =\Da\,\frac{\partial g_\eq}{\partial\theta}\,\partial_T\theta
  =O(\Da),
\]
smaller by one power of $\Da$ than the $O(1)$ inter-REV
source~\eqref{eq:source}, hence subdominant.  The time derivative term is
therefore absent from the leading first-order balance
(Sec.~\ref{sec:solvability}; the full order-by-order ladder, and the
order at which the derivative returns, is in App.~\ref{app:dtorder}).

Second, the surviving source is proportional to the macroscopic
gradient.  With these two simplifications Eq.~\eqref{eq:first_order}
reduces to
\begin{equation}
\J[g^{(1)}]=S ,
\label{eq:first_reduced}
\end{equation}
and, since $S$ is linear in $\nabla\theta=C_w\nabla\psi$ with
$C_w=d\theta/d\psi$ the specific moisture capacity, the correction is
linear in the macroscopic gradient too,
\begin{equation}
g^{(1)}=-\bm{\chi}(r)\!\cdot\!\nabla\psi ,
\label{eq:g1chi}
\end{equation}
where the \emph{response function} $\bm{\chi}$ obeys
\begin{equation}
[\mathcal{A}-\mathcal{B}]\,\bm\chi=S_0\,C_w,
\qquad\text{equivalently}\qquad
\J[\bm{\chi}]=-\,S_0\,C_w .
\label{eq:chi}
\end{equation}
It is worth writing the inversion out.  By the split~\eqref{eq:J} the
linearized operator is a local loss rate minus a redistribution kernel,
\begin{equation}
\J=-\mathcal{A}+\mathcal{B}=-[\mathcal{A}-\mathcal{B}],
\end{equation}
so that $-\J=\mathcal{A}-\mathcal{B}$ is the \emph{relaxation operator},
positive semidefinite by the quadratic form of Sec.~\ref{sec:operator}
and positive definite on
$W:=(\ker\J)^\perp$, where the spectral gap makes it boundedly
invertible (Prop.~\ref{prop:gap}).  Substituting
$g^{(1)}=-\bm\chi\!\cdot\!\nabla\psi$ into the first-order
balance~\eqref{eq:first_reduced}, with
$S=S_0C_w\,\nabla\psi\!\cdot\!\hat n$,
and using the linearity of $\J$,
\begin{equation}
-\J[\bm\chi]\!\cdot\!\nabla\psi=S_0C_w\,\nabla\psi\!\cdot\!\hat n
\qquad\Longrightarrow\qquad
[\mathcal{A}-\mathcal{B}]\,\bm\chi=S_0C_w ,
\end{equation}
which is~\eqref{eq:chi}; inverting,
\begin{equation}
\bm\chi=[\mathcal{A}-\mathcal{B}]^{-1}[S_0C_w]
\;=\;(-\J)^{-1}[S_0C_w]\big|_W .
\label{eq:chisolved}
\end{equation}
Two properties of this inversion should be separated.  The
\emph{convention} is fixed by consistency: with $\mathcal{A}-\mathcal{B}$
positive definite on $W$, Eq.~\eqref{eq:chisolved} and the spectral
sum~\eqref{eq:K_spectral} denote the same object, whereas inverting $\J$
itself would flip the sign of every term of that sum.  The
\emph{positivity} of $K$ is a separate statement and does not follow from
the positivity of the source alone: it requires the conductance $\kappa$
and the response $\bm\chi$ to be positively correlated in the mass
measure.  That correlation is manifest in the mean-field limit, where
$\bm\chi_{\rm mf}=S_0C_w/\mathcal{A}\ge0$ class by class and $K_{\rm mf}$
of Eq.~\eqref{eq:K_mf} is a positive integral, and it is the physical
content of the fluctuation--dissipation structure discussed in
Sec.~\ref{sec:K}: a class that conducts well is also a class the
gradient drives.  Off the mean field the sign is not automatic, and the
spectral form~\eqref{eq:K_spectral} is the statement to check.  This is the soil-water analogue of the
gas response functions $A(c;T),B(c;T)$ that solve
$L^{-1}$ applied to the stress/heat polynomial sources
\cite[Eqs.~(V.3.18)]{Cercignani1988}.

\subsection{Conductivity as a first-order transport coefficient}

With the response function in hand we can now close the calculations left
open above: the macroscopic flux, the conductivity, and, through it, the
entangled kernel of Sec.~\ref{sec:gradflux}.
Each pore class carries water at rate $\kappa(r)g^{(1)}(r)$.  Integrating,
\begin{equation}
\phi\!\int_0^\infty\!\kappa(r)g^{(1)}(r)f(r)\,dr
  =-\Big[\phi\!\int_0^\infty\!\kappa(r)\bm{\chi}(r)f(r)\,dr\Big]\!
  \cdot\!\nabla\psi.
\end{equation}
The macroscopic flux is this $\kappa$-weighted moment of the deviation
$g^{(1)}$,the throughflow the classes actually carry, as in the
kinetic theory of gases, where the transport fluxes are moments of the
first-order correction and not of the streaming of the local equilibrium
(App.~\ref{app:fluxid}).  Reading it off,
\begin{equation}
\boxed{\;
K(\theta)=\phi\int_0^{r_{\max}}\!\!\kappa(r)\,\bm{\chi}(r)\,f(r)\,dr
  =\phi\,\langle\kappa\,|\,\J^{-1}\,|\,S_0 C_w\rangle_f\;}
\label{eq:K_exact}
\end{equation}
Two points of notation.  First, the upper limit is the largest pore
radius of the medium, $r_{\max}=\sup\{r:f(r)>0\}$; we write $\infty$
elsewhere when $f$ is taken with unbounded support, the integrand
vanishing beyond $r_{\max}$ in either case.  Second, on \emph{what} $K$
depends: for a given soil it is a \emph{function} of the water content
alone, $K=K(\theta)$, the $\theta$-dependence entering through the
frontier $r^*(\theta)$ that fixes both $\bm\chi$ and the active band;
across soils it is a \emph{functional} of the pore-size density and of
the connectivity kernel, $K=K[f,C_\partial](\theta)$, and it is in this
second sense that the classical closures of Sec.~\ref{sec:classical} are
approximations, they posit $f$-dependence and drop the $C_\partial$
dependence.  Third, a sign convention: since $\J$ is negative
semidefinite, we write $\J^{-1}$ throughout for the inverse of the
\emph{relaxation} operator $-\J=\mathcal{A}-\mathcal{B}$ restricted to
$W$, whose spectral form
$\J^{-1}|_W=\sum_{n\ge1}\lambda_n^{-1}\varphi_n\otimes\varphi_n$
(App.~\ref{app:spectral}) is positive definite; with this convention
Eq.~\eqref{eq:K_exact} and the spectral sum~\eqref{eq:K_spectral} below
are manifestly positive, as a transport coefficient must be.
$K$ is the PSD-averaged pore-class conductivity, the structural analogue
of the CE viscosity
$\mu_{\rm gas}=\langle\mathbf{c}\,|\,L^{-1}\,|\,\mathbf{c}\rangle$
\cite[Eqs.~(V.3.20)--(V.3.21)]{Cercignani1988}.  The bracket
$\langle\kappa|\J^{-1}|S_0C_w\rangle_f$ is shorthand for ``apply the
inverse operator $\J^{-1}$ to the source $S_0C_w$, then take the
$\kappa$-weighted moment in the measure $f\,dr$'',it is the
conductivity expressed as the overlap of the pore conductivity $\kappa$
with the relaxed response to the gradient.  Expanding $\J^{-1}$ in its
own eigenbasis $\{\varphi_n,\lambda_n\}$ (the natural relaxation modes of
the pore network, with rates $|\lambda_n|$) turns this overlap into a
sum over modes (Fig.~\ref{fig:spectrum}).  Explicitly, the response is the
source resolved on that basis, inversely weighted by the rates,
\begin{equation}
\bm\chi=\sum_{n\ge1}\frac{\langle\varphi_n,S_0C_w\rangle_f}{\lambda_n}\,
   \varphi_n ,
\qquad
g^{(1)}=-\bm\chi\!\cdot\!\nabla\psi ,
\label{eq:g1spectral}
\end{equation}
\begin{figure*}[t]
\centering
\includegraphics[width=\textwidth]{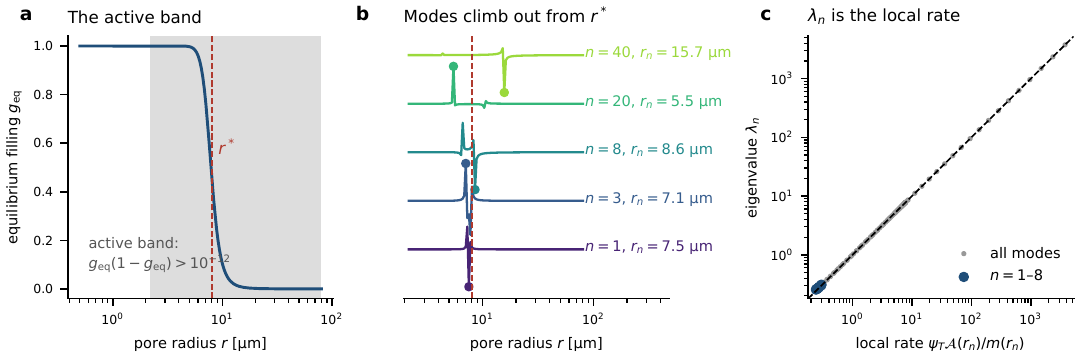}
\caption{The spectral structure behind the conductivity formula~\eqref{eq:K_exact}, computed for a representative log-normal loam (Supplemental Material, Sec.~S1). \textbf{(a)} Equilibrium filling $g_\eq(r)$ and the \emph{active band} of partially filled pore classes that carry the redistribution dynamics. \textbf{(b)} Five representative eigenmodes $\varphi_n$ of the linearized operator $\J$, offset vertically: each is localized on two or three adjacent pore classes ($\mathrm{PR}_n\simeq2$--$3$) and pinned to its own resonant radius $r_n$; the slowest sit at the frontier $r^*$ and higher modes move outward into the wings, so the mode index carries a distance from the frontier. \textbf{(c)} The resulting radius labeling $\lambda_n\simeq\psi_T\A(r_n)/m(r_n)$: each decay rate equals the local rate at that mode's radius to within $2$--$6\%$. The conductivity $K=\phi\langle\kappa|\J^{-1}|S\rangle_f$ resolves the source on this graded ladder, inversely weighted by the rates $\lambda_n$ [Eq.~\eqref{eq:g1spectral}]; the spectral gap $\lambda_1$ regularizes the sum; as $\lambda_1\to0$ at the percolation threshold the closed formula ceases to exist, while the physical $K$ vanishes through the collapse of the overlaps $\langle\kappa,\varphi_1\rangle_f$ onto the backbone (Sec.~\ref{sec:breakdown}).}
\label{fig:spectrum}
\end{figure*}

the $\lambda_0=0$ mode excluded by solvability; taking its
$\kappa$-moment gives
\begin{equation}
K=\phi\sum_{n\ge1}
  \frac{\langle\kappa,\varphi_n\rangle_f\;
        \langle\varphi_n,S_0C_w\rangle_f}{\lambda_n},
\label{eq:K_spectral}
\end{equation}
the soil-water analogue of a Green--Kubo formula~\cite{Kubo1957}: each mode contributes
in proportion to its overlap with $\kappa$ and \emph{inversely} to its
relaxation rate.  Away from threshold the two weightings compete, the
slowest modes have the largest $1/\lambda_n$ but the smallest
$\kappa$-overlap, since they live at the smallest radii, so $K$ draws on
many modes.  As $\theta\to\theta_c$, however, $\lambda_1\to0$ outruns the
overlap and the slowest mode comes to dominate: the conductivity is then
carried by the critical path through the network
bottleneck~\cite{Hunt2004}. 
When the source is aligned with the
conductance, $S_0C_w\propto\kappa$, the sum takes the manifestly positive
$|\langle\kappa,\varphi_n\rangle_f|^2/\lambda_n$ form familiar from
fluctuation--dissipation theory.  As $\theta\to\theta_c$ that slowest rate
$\lambda_1\to0$, the sum diverges, and the expansion breaks down.
At the threshold the slow modes localize on
poorly connected regions and the overlaps
$\langle\kappa,\varphi_1\rangle_f$ collapse onto the vanishing
backbone, so the physical $K$ vanishes at the topological transition
[Sec.~\ref{sec:breakdown}(c)] while the leading term of
Eq.~\eqref{eq:K_spectral} is no longer slaved and the CE
representation of $K$ ceases to exist.  The competition between the
two scalings,$1/\lambda_1$ against the collapsing overlap, is what
fixes the critical exponent of $K(\theta)$ near $\theta_c$, a
calculation we defer to future work.

\subsection{Mean-field reduction and the heterogeneity penalty}
\label{sec:meanfield}

Three reductions collapse the exact $K$ to the standard form:
\textbf{(1)~statistical homogeneity}, $C_\partial\approx C$;
\textbf{(2)~diagonal dominance}, $\mathcal{B}\approx0$, so
$\bm{\chi}_{\rm mf}=S_0 C_w/\mathcal{A}$ (each pore class responds to the
gradient independently of the others, the off-diagonal coupling through
$\mathcal{B}$ being neglected, the assumption underlying the
Burdine-type integral~\cite{Burdine1953}; 
\textbf{(3)~isotropic
orientation}, tortuosity $\bar\tau^2$.  The result is the mean-field
conductivity
\begin{equation}
K_{\rm mf}[\theta]=\frac{\phi\,\rho_w\mathrm{g}}{8\mu\,\bar\tau^2}
  \int_0^{r^*(\theta)}\!\! r^2\,\Xi(r)\,a_w(r)\,\bar C(r)\,f(r)\,dr,
\label{eq:K_mf}
\end{equation}
every factor inherited from~\eqref{eq:K_exact}: $r^2$ from $\kappa$,
$\bar\tau^{-2}$ from orientation averaging, $\bar C(r)$ from
$C_\partial\approx C$, the percolation accessibility $a_w(r)$ from the
spectral weight (so $K_{\rm mf}\to0$ below threshold), the composite
geometry factor $\Xi(r)$ from the pore-scale corrections, and the
frontier cutoff $r^*$ from the localization~\eqref{eq:source}.  This is
the formula stated without derivation as Eq.~(57) of the companion
paper.

Beyond diagonal dominance, the off-diagonal $\mathcal{B}$ correlates the
responses of different classes: large pores must push water through
small-pore bottlenecks, lowering $\bm\chi$ for large $r$ and raising it
for small $r$.  Under Mualem's random-pairing hypothesis
$C=C_0$~\cite{Mualem1976}, this correlation can be summed exactly.  Writing
$[\mathcal{A}-\mathcal{B}]^{-1}
=\sum_{k\ge0}(\mathcal{A}^{-1}\mathcal{B})^k\mathcal{A}^{-1}$,the
\emph{Neumann series}, the operator version of the geometric series
$(1-x)^{-1}=\sum_k x^k$, convergent because $\mathcal{B}$ is subordinate
to $\mathcal{A}$,each term adds one more pass through a bottleneck, and
the sum collapses to a harmonic mean (App.~\ref{app:chi}):
\begin{equation}
K_{\rm sp}=\frac{\theta\,\rho_w\mathrm{g}}{8\mu\,\bar\tau^2}
  \cdot\frac{1}{\langle r^{-2}\rangle_g},
\quad
\langle r^{-2}\rangle_g=\frac{\int_0^{r^*}r^{-2}f\,dr}{\int_0^{r^*}f\,dr}.
\label{eq:K_sp}
\end{equation}
For a log-normal PSD, $K_{\rm sp}/K_{\rm mf}=\exp(-4\sigma^2)$:
\textbf{Mualem's heterogeneity penalty}, derived here as the exact
serial-path resummation of the CE transport coefficient under mean-field
connectivity rather than introduced empirically.

Because $\J$ is self-adjoint and sign-definite, the bracket
$K=\phi\langle\kappa|\J^{-1}|S_0C_w\rangle_f$ also admits a variational
characterization (Table~\ref{tab:synthesis}, the Bhatnagar entry), and
this reframes the classical closures: Burdine- and Mualem-type formulas
are \emph{trial response functions} inserted into the variational
functional, hence one-sided bounds on $K$ with errors \emph{quadratic}
in the trial error, improvable by a Sonine-like basis expansion exactly
as gas-kinetic viscosities are computed to percent accuracy from two or
three Sonine polynomials~\cite[Ch.~7]{ChapmanCowling1970}, the
variational form of that construction being due to
Bhatnagar~\cite{Hirschfelder1954,Cercignani1988}.  The same structure
yields per-soil error
certificates for any closure (the residual of the cell problem,
evaluated on the trial function), a controlled resummation of the
heterogeneity penalty in the PSD width $\sigma$, a predicted, rather
than fitted, tortuosity exponent, and a frequency-dependent
conductivity $K(\omega)$, the response to a forcing oscillating at
angular frequency $\omega$, whose poles are the relaxation rates
$\lambda_n$ and whose zero-frequency limit is the
bracket~\eqref{eq:K_exact}.  These
developments exceed the scope of this paper and are pursued in a
dedicated companion on conductivity closures, in preparation.

\section{Beyond capillarity: gravity, adsorption, osmosis}
\label{sec:forces}

The derivation, so far,  used the capillary $\Phi$, but the full chemical
potential is
\begin{equation}
\mu_w(r)=\mu_w^0-\frac{2\gamma\cos\theta_c}{r}+\Pi(h(r))
  -\nu RT\,c(r)+\rho_w\mathrm{g}\,z,
\label{eq:mu_w}
\end{equation}
where $\mu_w^0$ [Pa] is a reference potential, $\gamma$ [N\,m$^{-1}$] the
surface tension and $\theta_c$ the contact angle (capillary term),
$\Pi(h)$ [Pa] the disjoining pressure of a film of thickness $h$
(adsorptive), $\nu$ the van~'t~Hoff factor, $R$ [J\,K$^{-1}$mol$^{-1}$]
the gas constant, $T$ [K] the temperature and $c(r)$ [mol\,m$^{-3}$] the
solute concentration (osmotic), and $\rho_w\mathrm{g}z$ [Pa] the elevation
term (gravitational).  Since
$\Phi=[\mu_w(r')-\mu_w(r)]/(\rho_w\mathrm{g})$, every force enters
\emph{through $\Phi$ and only through $\Phi$}, and the CE structure, a
self-adjoint $\J$ with a single invariant, is preserved
(companion paper~\cite{Rigon2026PRE}).  The reduction separates scales cleanly:

 \textbf{Gravity.}  Gravity is of course not a global quantity: it acts
everywhere, and $\rho_w\mathrm{g}z$ varies from point to point.  The
statement here is weaker and purely local.  At a \emph{given} location
$\mathbf{x}$, the elevation term takes the \emph{same} value for every
pore class, because all classes of one REV sit at the same $z$: it is a
common-mode contribution to $\mu_w$ across the class axis.  The
intra-REV operator $\C$ depends on $\mu_w$ only through
\emph{differences between classes at the same location}
[Eq.~\eqref{eq:Phi_def}], and a common-mode term cancels from every such
difference.  Hence gravity does not redistribute water \emph{among the
pore classes of one REV}, it cannot make a large pore drain into a
small one at fixed $z$.  What gravity does drive is transport
\emph{between} REVs, where the elevation difference no longer cancels,
and it therefore enters only the inter-REV potential.  It contributes only to the
\emph{inter}-REV potential, where the difference is taken between
neighbouring REVs at different elevations, adding the elevation gradient
$\nabla z=\hat z$ to the driving force.  Quantitatively, over one REV of
size $\bar L\sim$\,mm the gravitational head difference
$\rho_w\mathrm{g}\bar L$ is negligible against the capillary heads
$2\gamma/r$ of the pores that control redistribution
($r\lesssim\ell_c$, the capillary length); over the macroscopic scale
$\Lambda$ the accumulated head $\rho_w\mathrm{g}\Lambda$ is not, and
gravity drives flow.  The net effect is the familiar replacement of the
driving force $\nabla\psi$ by the total-head gradient
$\nabla H=\nabla\psi+\hat z$, with the transport coefficient $K$
unchanged: gravity moves water but does not alter how easily the medium
conducts it.

\textbf{Adsorption.}  The disjoining pressure $\Pi(h)$ adds an adsorptive
branch to $g^{(0)}$ (the retention dry-end tail and a finite
$\theta_r$, Sec.~\ref{sec:classical}) and replaces tube flow by film
flow $\kappa_{\rm film}\propto h^3/\mu$ in $\kappa(r)$; the formal
structure of~\eqref{eq:K_exact} persists.

\textbf{Osmosis.}  The term $-\nu RT\,c(r)$ adds an osmotic contribution;
the inter-REV driving force becomes
$\nabla\Psi=\nabla\psi+\hat z+\sigma_{\rm refl}\nabla\pi_{\rm osm}$ with
$\sigma_{\rm refl}$ the reflection coefficient, coupling a solute balance
to the water kinetic equation.

\section{Richards' equation}
\label{sec:Richards}

Substituting into the macroscopic budget~\eqref{eq:continuity} the
Darcy flux identified in App.~\ref{app:fluxid} as the
$\kappa$-weighted moment of the deviation, Eq.~\eqref{eq:app_qtr},
disentangled from the equilibrium streaming by the gradient algebra of
App.~\ref{app:gradient} and extended to gravity by
Sec.~\ref{sec:forces},
\begin{equation}
\mathbf{q}=-K(\theta)\,(\nabla\psi+\hat z),
\label{eq:darcyflux}
\end{equation}
one obtains
\begin{equation}
\boxed{\;
\partial_t\theta=\nabla\!\cdot\![K(\psi)(\nabla\psi+\hat z)]
  -S_{\rm evap}-S_{\rm root}\;}
\label{eq:Richards}
\end{equation}
Richards' equation~\cite{Richards1931}, now \textbf{derived} from the
kinetic equation as the $\Da\to0$ reduction of the continuum kinetic
equation~\eqref{eq:contKE2}.  All constitutive relations follow from
pore-network properties: $\psi(\theta)$ from $f(r)$ alone; $K(\psi)$
from $f(r),C(r,r'),a_w(r),\bar\tau$ and the fluid constants.  The
derivation requires (i)~$\Da\ll1$ (fast redistribution: local
equilibrium and smooth gradients); (ii)~statistical homogeneity
$C_\partial\approx C$; (iii)~slowly varying $a_w$ across REVs.

The step $g^{(0)}=g_\eq$ is unique at each $\theta$, but $a_w(r)$ differs
between wetting and drying, so $K^{\rm dry}\neq K^{\rm wet}$: parametric
hysteresis survives the reduction as branch selection within the
constitutive relation, the flattened remnant of the full path
dependence carried by $g(r)$.  Off the manifold ($\Da\sim1$), $K$ is a
functional of $g$: two states at the same $\theta$ can have very
different $K$ because the $r^2$ weighting amplifies large pores,
explaining the lab--field discrepancy~\cite{Diamantopoulos2012}.

\section{Recovery of the classical theory}
\label{sec:classical}

The CE limit must reproduce the established results of unsaturated-zone
hydrology.  We record the recovery compactly here, on the equilibrium
manifold $\mathcal{S}_\eq$ (i.e.\ at $\Da\ll1$); the full numerical
demonstration against pore-network simulation and the proposed
experimental tests are given in the validation supplement of the
companion paper~\cite{Rigon2026PRE} (its SM-A), to which we refer rather
than duplicating them.  Each entry below holds on $\mathcal{S}_\eq$, and
Table~\ref{tab:recovery} collects the correspondence.

\subsection{Equilibrium thermostatics}

At $\partial_t g=0$ with no macroscopic flow the kinetic equation
reduces to detailed balance and $g_\eq=H(r^*\!-r)$
[Eq.~\eqref{eq:g0}].  The retention curve is then the cumulative PSD
transformed by the capillary law~\cite{Haines1930},the inverse of the
frontier relation~\eqref{eq:psi}, written here as $\theta(\psi)$ for use
in the constitutive identifications below,
\begin{equation}
\theta(\psi)=\phi\int_0^{r^*(\psi)}f(r)\,dr,
\qquad r^*(\psi)=-\frac{2\gamma\cos\theta_c}{\rho_w\mathrm{g}\,\psi}.
\label{eq:retention}
\end{equation}

\textit{Brooks--Corey} from a power-law PSD $f\propto r^{-\beta}$:
$\theta(\psi)\propto(|\psi_e|/|\psi|)^{1-\beta}$, identifying the
Brooks--Corey index $\lambda=1-\beta$ and the air-entry
$\psi_e=-2\gamma\cos\theta_c/(\rho_w\mathrm{g}\,r_{\max})$
\cite{Brooks1964,Rawls1982}.

\textit{Kosugi/van~Genuchten} from a log-normal PSD: $\theta(\psi)$ is
the standard-normal CDF in $\ln|\psi|$~\cite{Kosugi1996}, numerically
indistinguishable from van~Genuchten~\cite{vanGenuchten1980} for
$n\gtrsim1.5$ with $\alpha\approx1/|\psi_m|$,
$n\approx1.26/\sigma+1$.  The empirical van~Genuchten curve is thus
\emph{equivalent to assuming a log-normal $f(r)$}.  The pure-capillary
Kosugi form gives $\theta_r=0$; the disjoining-pressure branch
$\Pi(h)$ in~\eqref{eq:mu_w} supplies the dry-end adsorptive tail and a
finite $\theta_r$ fixed by the Hamaker constant and $f(r)$ rather than
fitted~\cite{Tuller1999,OrTuller2000}.

\subsection{Steady transport}

At $g=1$, $a_w=1$,
Eq.~\eqref{eq:K_mf} gives the saturated conductivity,
$K_{\rm sat}=(\phi\rho_w\mathrm{g}/8\mu\bar\tau^2)\int\Xi r^2\bar C
f\,dr$; for a log-normal PSD the independent-pore (Burdine) estimate
scales as
$r_m^2\exp(2\sigma^2)$, reduced by the heterogeneity penalty
$\exp(-4\sigma^2)$~\eqref{eq:K_sp} to values within the literature
ranges across textures (sand to clay).

In the CE limit
$\mathbf{q}=-K(\psi)(\nabla\psi+\hat z)$; at saturation
$\mathbf{q}=-K_{\rm sat}\nabla h$, exactly linear because $K$ is
independent of $\nabla h$ when $g=1$.

At saturation the reduction returns the saturated conductivity directly.
Setting $g_\eq\equiv1$ removes the frontier cutoff in~\eqref{eq:K_mf}, so
the integral runs over the whole distribution and the accessibility
saturates, $a_w\to1$:
\begin{equation}
K_{\rm sat}=\frac{\phi\,\rho_w\mathrm{g}}{8\mu\,\bar\tau^2}
  \int_0^\infty r^2\,\Xi(r)\,\bar C(r)\,f(r)\,dr ,
\label{eq:Ksat}
\end{equation}
the classical $r^2$-weighted (Kozeny--Carman-like) form: $K_{\rm sat}$ is
the second moment of the PSD, corrected by tortuosity and connectivity.
Everything specific to unsaturated flow, the cutoff at $r^*$, the
accessibility $a_w$, the serial-path penalty, is a \emph{reduction} of
this quantity, so $k_r=K/K_{\rm sat}\in[0,1]$ follows with
$k_r(\theta_s)=1$ by construction.  Away from saturation,
$K(\theta)$ from~\eqref{eq:K_mf}
declines steeply through three factors: the receding frontier $r^*$
(with $r^2$ weighting), the accessibility cutoff $a_w(r;\theta)\to0$
near $\theta_c$, and the geometry factor $\Xi(r)$.  Near threshold
$K\propto(\theta-\theta_c)^\mu$, $\mu\approx2$ (3D universal exponent),
governed by the topological cutoff and hence common to the present
$r^2$ weighting and the Mualem $1/r$ weighting.

%

\begin{table}[t]
\caption{\label{tab:recovery} Classical results recovered as the CE
($\Da\to0$) limit.}
\begin{ruledtabular}
\begin{tabular}{ll}
\textbf{Classical result} & \textbf{CE recovery} \\
\hline
Young--Laplace & $\psi=-2\gamma\cos\theta_c/(\rho_w\mathrm{g}r^*)$ \\
Retention curve & CDF of $f(r)$ via $r^*(\psi)$ \\
Brooks--Corey / Kosugi & power-law / log-normal $f(r)$ \\
Darcy's law & CE limit $\mathbf{q}=-K\nabla H$ \\
$K_{\rm sat}$, $K(\theta)$ & $r^2 a_w\bar C f$ integral; $\exp(-4\sigma^2)$ \\
Richards' equation & $\Da\to0$ reduction of \eqref{eq:contKE2} \\
$K\propto(\theta-\theta_c)^\mu$ & percolation exponent $\mu\approx2$ \\
Field capacity & $\theta_{\rm FC}\approx\theta_c$ (percolation) \\
Mualem penalty & serial-path Neumann resummation \\
\end{tabular}
\end{ruledtabular}
\end{table}

\section{Dual- and multiple-permeability models as a multi-band
reduction}
\label{sec:multiperm}

The CE reduction of Secs.~\ref{sec:hierarchy}--\ref{sec:Richards}
assumed a \emph{single} fast time scale, one spectral gap
$|\lambda_1|$ cleanly separating intra-REV redistribution from
macroscopic forcing, so that every pore class could be slaved to the
same equilibrium step, every class \emph{slaved} in the sense of
Sec.~\ref{sec:step2}.

This subsection identifies precisely when that
assumption fails and shows that its failure is not a breakdown but a
\emph{richer} reduction: when the relaxation spectrum
has more than one gap, the same expansion yields not several parallel
models but a \emph{nested family}: cutting the spectrum at the first
gap gives the one-mode Richards equation; cutting at the second gives
two coupled Richards equations with an exchange term; and so on.  The
validity of each member rests on the \emph{single} gap separating its
retained modes from the slaved remainder~\cite{Roberts2025tma}; the
additional, deeper gaps serve only to close each band internally, and
each member reduces to the next by adiabatic elimination of its fastest
retained mode.  The hierarchy thereby reproduces the dual- and
multiple-permeability models of the hydrological
literature~\cite{GerkeVanGenuchten1993,Weiler2003,Simunek2003} as
controlled limits of the one kinetic equation rather than as
phenomenological constructs.

The natural way to make this precise is through the Damk\"ohler number
itself.  At a given surface flux (rainfall intensity) $I$,
a pore class of radius $r$ is kinetic ($\Da>1$, carrying its own
dynamics) or quasi-static ($\Da<1$, slaved) according to whether $r$
lies below or above the crossover radius at which $\Da=1$.  The
direction is fixed by the relaxation time~\eqref{eq:tau},
$\tau_{\rm redis}(r)=m(r)/\mathcal{A}(r)$: classes far from the frontier,
where $m\to0$ and the exchange conductances vanish, equilibrate quickly,
while classes \emph{at} the frontier have the least conductance per unit
of stored perturbation and relax slowest.  The kinetic set is therefore a
band straddling $r^*$ rather than a half-line of small pores, and it is
present at \emph{any} intensity, widening with $I$.  Setting
$\Da(r)=\tau_{\rm redis}(r)/\tau_{\rm forcing}=1$, with the single-class
relaxation time $\tau_{\rm redis}(r)=m(r)/\mathcal{A}(r)$~\eqref{eq:tau} and
the REV-filling time $\tau_{\rm forcing}=\bar L\phi/I$ at surface flux
$I$, and using $\kappa_{\rm HP}(r)=r^2/(8\mu\bar L^2\bar\tau^2)$ with the
capillary driving $\gamma\cos\theta_c/r$, solving for $r$ gives
\begin{equation}
r_\Da(I)=\frac{4\mu\bar L\,\bar\tau^2}{\gamma\cos\theta_c\,\phi}\,I,
\label{eq:rDa}
\end{equation}
with $\mu$ the viscosity, $\bar L$ the REV size, $\bar\tau$ the
tortuosity, $\gamma\cos\theta_c$ the capillary constant and $\phi$ the
porosity (companion paper~\cite{Rigon2026PRE}, SM-A \S VII.E).  The
kinetic set is the band $\{r:\Da(r)\gtrsim1\}$ around $r^*$, of half-width
set by $r_\Da(I)$, and the quasi-static set is its complement.  Two
cautions.  First, $r_\Da$ is a \emph{width} around the frontier, not a
threshold radius separating small from large pores: with the relaxation
time~\eqref{eq:tau} the slowest classes are those at $r^*$, so the
kinetic set never empties, it only narrows as $I$ falls.  Second, on
names: the classes that lag are those near the frontier, whereas the
preferential \emph{flow} that results is carried by the large classes,
which stay wet and conduct while the frontier has not equilibrated.  We
therefore keep ``kinetic'' and ``quasi-static'' for the two sets and do
not call either of them ``preferential'', which in the hydrological
literature denotes the macropore pathways themselves.  This partition is
\emph{dynamic}: it sweeps continuously with
intensity $I$, in contrast to the fixed matrix/macropore boundary
posited by phenomenological dual-porosity models.  A single-mode soil
($r_\Da$ below the whole PSD at the intensities of interest) returns the
single Richards equation of Sec.~\ref{sec:Richards}; a soil whose PSD
carries a second population within the kinetic band requires the
multi-band treatment that follows, and
the Weiler-type \cite{Weiler2003} activation of macropore flow is precisely the event
$r_\Da(I)=r_\times$, when the crossover sweeps past the spectral valley
$r_\times$ separating two pore populations.

\subsection{The relaxation spectrum of a bimodal medium}

For a bimodal PSD $f=w_b f_b+w_t f_t$ (storage \emph{bodies} at large
$r_b$, controlling \emph{throats} at small $r_t$, separated by an order
of magnitude or more), the pore--throat incidence makes the
connectivity throat-mediated (two bodies never touch: every body--body
path passes through at least one throat, so the body--body kernel is the
throat-averaged product below; for media whose body/throat aspect ratio is
near one the two bands merge and the gap of Sec.~\ref{sec:gap} closes),
$C(r,r')=\int P(r,t)\,P(t,r')\,h(t)\,dt$, where $P(r,t)$ is the
probability that a pore of radius $r$ connects through a throat of
radius $t$ and $h(t)$ the throat-size density (companion paper's Supplemental Material~A, pore--throat
subsection), which
suppresses direct body--body coupling.  Consequently the relaxation
spectrum $\{\tau(r)=m(r)/\mathcal{A}(r)\}$ develops a \emph{gap}: a fast
body mode ($\tau_b\sim 1/\kappa(r_b)$, $\kappa\propto r^2$ large) and a
slow throat mode ($\tau_t\gg\tau_b$), with few classes in between.  The
single spectral gap of $\J$ is replaced by \emph{two} well-separated
scales,
\begin{equation}
\tau_b\ll\tau_t,
\end{equation}
and 
\begin{equation}
\varrho_\tau:=\tau_b/\tau_t\ll1 ,
\label{eq:gap}
\end{equation}
the \emph{band separation ratio} (not to be confused with the macroscopic
length $\Lambda$ of Sec.~\ref{sec:scale}).
Within each band $\J$ retains its own gap and the CE reduction applies;
\emph{between} bands the relaxation is slow and cannot be slaved.

\subsection{Band projection of the kinetic equation}

The construction that follows is a two-component reduction in the sense
of the invariant-manifold literature: the retained variables need
correspond to no conservation law of their own, and a band description
and a modal one are related by a change of amplitudes and describe the
same reduced dynamics, a two-zone/two-mode equivalence established for
shear dispersion by Roberts and Strunin~\cite{RobertsStrunin2004} and
formulated generally in~\cite{Roberts2025tma}.
Split the pore-class axis at the spectral valley $r_\times$ into a
body band $\mathsf{B}=\{r>r_\times\}$ and a throat band
$\mathsf{T}=\{r<r_\times\}$, and project the continuum kinetic
equation~\eqref{eq:contKE2} onto each (App.~\ref{app:bands}).  Writing
$\theta_b=\phi\int_\mathsf{B}g f\,dr$, $\theta_t=\phi\int_\mathsf{T}g
f\,dr$ for the band water contents, the redistribution operator splits
into intra-band parts $\C_{bb},\C_{tt}$ and a cross-band part
$\C_{bt}=-\C_{tb}$.  The intra-band operators are fast (gap
$|\lambda_1^{b}|,|\lambda_1^{t}|$); the cross-band operator is slow,
$O(\Lambda)$.  Applying the CE reduction \emph{within} each band slaves
$g|_\mathsf{B}\to g_\eq^b(\theta_b)$ and
$g|_\mathsf{T}\to g_\eq^t(\theta_t)$, each with its own frontier,
retention $\psi_b(\theta_b),\psi_t(\theta_t)$, and conductivity
$K_b,K_t$ from the band-restricted version of~\eqref{eq:K_mf}.  The
cross-band term cannot be slaved: its relaxation rate is smaller
than the intra-band rates by the band separation ratio
$\varrho_\tau\ll1$ of~\eqref{eq:gap}, so on the forcing time scale it
survives as an $O(1)$ exchange rather than relaxing away.  The
band-projection algebra that makes this explicit, and that produces the
exchange coefficient below, is carried out in App.~\ref{app:bands}.  The
$f$-weighted moment over each band gives the coupled pair
\begin{align}
\partial_t\theta_b&=\nabla\!\cdot\![K_b(\psi_b)(\nabla\psi_b+\hat z)]
  -\Gamma_w-S_b,
\label{eq:dpb}\\
\partial_t\theta_t&=\nabla\!\cdot\![K_t(\psi_t)(\nabla\psi_t+\hat z)]
  +\Gamma_w-S_t,
\label{eq:dpt}
\end{align}
where subscripts $b$ (body) and $t$ (throat) label the two spectral
bands, $\theta_b,\theta_t$ are their water contents, $\psi_b,\psi_t$ their
suctions, $K_b,K_t$ their conductivities, $S_b,S_t$ their sink terms, and
$\Gamma_w$ [s$^{-1}$] is the inter-band exchange rate: two Richards
equations coupled by $\Gamma_w$.
This is precisely the dual-permeability
structure~\cite{GerkeVanGenuchten1993}.

\subsection{The exchange coefficient}

The exchange is the $f$-weighted cross-band redistribution evaluated on
the two band equilibria.  Here $\C_{bt}$ denotes the part of the
redistribution operator that connects a class of the macropore band to a
class of the matrix band; like $\C$ itself it is a \emph{functional} of
the occupancy field, and the square brackets carry its two arguments, the
occupancy of the donor band and that of the receiver band, because a
cross-band transfer is gated by both: water can move from $\mathsf B$ to
$\mathsf T$ only if $\mathsf B$ holds water and $\mathsf T$ has room for
it.  Evaluating it on the two band equilibria $g_\eq^b,g_\eq^t$ and
integrating over the donor band,
\begin{equation}
\Gamma_w=\phi\!\int_\mathsf{B}\!\!\C_{bt}[g_\eq^b,g_\eq^t]\,f\,dr
  =\alpha_w\,(\psi_t-\psi_b),
\label{eq:exchange}
\end{equation}
a first-order (Gerke--van~Genuchten) exchange whose coefficient is
fixed, not fitted, by the inter-band connectivity and the bottleneck
conductance, as shown in Appendix \ref{app:cercignani}, up to an $O(1)$
profile factor: the band-indicator (Galerkin) estimate of the exchange rate
is a Rayleigh quotient and overshoots the exact slow eigenvalue by a factor
$1.15$--$1.3$ that does not vanish as the gap widens, the analogue for the
exchange term of the serial-path correction to $K$ (Supplemental Material,
Sec.~S3).
\begin{equation}
\alpha_w=\frac{\rho_w\mathrm{g}}{\bar L^2}\,
  \iint_{\mathsf{B}\times\mathsf{T}}
  \kappa_s(r,r')\,C(r,r')\,f(r)f(r')\,dr\,dr',
\label{eq:alpha}
\end{equation}
with $\kappa_s$ the symmetric (harmonic-mean) pair conductance of the
companion paper.  Equation~\eqref{eq:exchange} linearizes the cross-band
driving potential $\Phi(r,r')$ about the two frontiers; the linearization
is controlled by the same $\Da$ that controls the intra-band CE step.
The geometric shape factor of the classical first-order exchange
(the $\beta/a^2$ in Gerke--van~Genuchten) is here the inter-band
$C(r,r')$ integral~\eqref{eq:alpha}.  Note that the $C(r,r')$
in~\eqref{eq:alpha} is not the general connectivity kernel but its
restriction to body--throat pairs $\mathsf{B}\times\mathsf{T}$; a
measured or modelled $C$ would in principle fix the band partition
itself a priori, turning the choice of bands from an input into an
output of the connectivity structure.

\subsection{IN3M, mobile--immobile, and the $N$-band hierarchy}

Three limits of~\eqref{eq:dpb}--\eqref{eq:alpha} reproduce the standard
two-domain families:

When the body band carries
the infiltration front and the throat band is the matrix, the
exchange~\eqref{eq:exchange} is the macropore--matrix transfer that
governs the initiation of macropore flow; the activation threshold is
the intensity at which the body-band frontier first exceeds the
inter-band entry radius, i.e.\ the crossover
$r_\Da(I)$~\eqref{eq:rDa}, recovering the intensity-controlled onset
that IN3M imposes by rule.

If the throat band conducts negligibly,
$K_t\to0$, Eqs.~\eqref{eq:dpb}--\eqref{eq:dpt} reduce to a single mobile
Richards equation with a first-order mass sink into an immobile
store, the mobile--immobile model~\cite{vanGenuchtenWierenga1976}, with
$\alpha_w$ the rate coefficient.

Partitioning a
multimodal PSD into $N$ bands and projecting gives $N$ coupled Richards
equations with a $\binom{N}{2}$ matrix of exchange coefficients
$\alpha_{ij}$ from~\eqref{eq:alpha},the multiple-permeability
model~\cite{Simunek2003}.  The number of domains is not a modelling
choice but the number of gaps in the relaxation spectrum; a smooth
(unimodal) spectrum has one band and returns the single Richards
equation of Sec.~\ref{sec:Richards}.

Three remarks sharpen the status of this construction.
\emph{First}, bimodality is sufficient rather than necessary.  The
retention criterion is per class, $\Da(r)\gtrsim1$
[Eq.~\eqref{eq:rDa}],equivalently, per relaxation mode,
$\Da_n:=\Da\,\lambda_1/\lambda_n\gtrsim1$ in the eigenbasis of
Sec.~\ref{sec:eigenmodes},so under sufficiently intense forcing a
unimodal PSD supports the same two-domain reduction: there is then no
spectral valley $r_\times$ to place the partition, only the dynamic
crossover $r_\Da(I)$, and the validity of the resulting model is
controlled by $\Da_n$ rather than by a gap.  Bimodality is the case in
which the partition is spectrally protected and independent of the
forcing.  \emph{Second}, the band variables $(\theta_b,\theta_t)$ and
the modal variables $(\theta,U_1)$, with $U_1$ the amplitude of the
slowest nontrivial eigenmode $\varphi_1$ of $\J$, are related by an
invertible change of amplitudes and parametrize the same reduced
dynamics: in the two-band limit $\varphi_1$ approaches the balanced
band indicator, so $U_1$ measures the inter-band mass imbalance.  The
zonal and modal descriptions of a two-component reduction are one
object in two coordinate systems, as established for shear dispersion
by Roberts and Strunin~\cite{RobertsStrunin2004}.  \emph{Third}, the
hierarchy is transitive, and quantitatively so: adiabatically
eliminating the exchange mode ($\partial_t U_1\to0$) in the two-band
system returns the single Richards equation with the \emph{full}
conductivity of Eq.~\eqref{eq:K_spectral}, because the retained mode
carries exactly the $n=1$ term of that spectral sum, eliminating it
restores the term to $K$.  This mode-by-mode sum rule is an exact
internal-consistency identity for any implementation, and is verified
against the diagonalization of the Supplemental
Material~\cite{SM_hilbert} (S3).

\section{When the expansion breaks down}
\label{sec:breakdown}

Four mechanisms defeat the reduction, one per structural assumption, the
first three controlled by $\Da$ and the fourth by the spatial limit that
precedes it~\cite[\S V.4]{Cercignani1988}:

\textbf{(a) Kinetic forcing ($\Da\sim1$).}  Redistribution is no longer
fast; $g$ departs from the step, $r^*$ and $\psi$ cease to be state
variables, sharp fronts appear, the analogue of the Burnett and
super-Burnett regimes~\cite{Burnett1936} where the CE series ceases to
be asymptotic.  Before the full kinetic description becomes
unavoidable, the finite multi-band hierarchy of
Sec.~\ref{sec:multiperm} mediates: retaining the modes with
$\Da_n\gtrsim1$ as explicit variables extends the reduced description
mode by mode, rather than order by order in $\Da$,the alternative to
the Burnett route.

\textbf{(b) Sharp wetting fronts ($\Da\sim1$).}  The gradient scale
$\Lambda\sim\bar L$ and the expansion diverges, the analogue of shock
layers.  In an inertia-free system this cannot occur independently of
$\Da\gtrsim1$: redistribution self-heals any imposed sharp profile on
$\tau_{\rm redis}\ll\tau_{\rm forcing}$ whenever $\Da\ll1$
(Sec.~\ref{sec:scale}).

\textbf{(c) Percolation threshold ($\theta\to\theta_c^+$).}  The
spectral gap closes ($\lambda_1\to0$), $\tau_{\rm redis}\to\infty$,
$\Da\to\infty$ automatically, and $K\to0$ as a topological transition.

\textbf{(d) Loss of scale separation ($\xi\gtrsim\ell$).}  If the
connectivity length approaches the REV size, the \emph{spatial} limit of
Sec.~\ref{sec:continuum} fails and Eq.~\eqref{eq:contKE2} is not
available to expand, a failure upstream of the CE limit, independent of
$\Da$.

In cases (c) and (d) the reduction has no finite
refinement, $g(r)$, not $\theta$ or $\psi$, is the irreducible state
variable and the full kinetic equation must be carried.  In case (a)
the failure is graded: the multi-band ladder of
Sec.~\ref{sec:multiperm} extends the reduced description until the
retained spectrum itself becomes dense, and the full CKE is the
endpoint of that ladder rather than its only alternative.

\section{Geometric interpretation}
\label{sec:geometry}

The kinetic equation has the fast--slow form $\partial_t g=
(1/\Da)\,\C[g]+\mathcal{T}[g]$.  The equilibrium manifold
$\mathcal{S}_\eq$ is normally hyperbolic: motion along it is slow (the
null mode), motion toward it exponentially fast (the negative spectrum
of $\J$).  By Fenichel theory~\cite{Fenichel1979,Jones1995} a nearby
slow manifold $\mathcal{S}_\eq^{\Da}$ persists under the $O(\Da)$
perturbation, and the CE series is its representation
$g=g_\eq+\Da\,g^{(1)}+O(\Da^2)$; Richards' equation is the flow on
$\mathcal{S}_\eq^{\Da}$~\cite{Gorban2005}.

The invariant-manifold theory of dissipative multiscale
systems~\cite{Roberts2015ima,BunderRoberts2021,HochsRoberts2019,%
Roberts2025tma},developed under the name of homogenisation, though
it assumes no averaging, a point its authors emphasize, upgrades this
picture to a validity statement stronger than the asymptotic one.
Because $\J$ is self-adjoint and negative semidefinite with discrete
spectrum, one-dimensional kernel, and gap $\lambda_1(\theta)>0$, the
theorems guarantee: (i)~existence of the slow manifold
$\mathcal{S}_\eq^{\Da}$ in a \emph{finite} neighbourhood of the
equilibrium family and at \emph{finite} Damk\"ohler number, no limit
$\Da\to0$ is required, $\Da$ entering only through the size of the
neighbourhood and of the errors; (ii)~its emergence: all nearby
solutions are attracted at rate $\lambda_1/\tau_{\rm redis}$, which
converts the local-equilibrium ansatz from assumption into
consequence, and prescribes the initial datum of the reduced model as
the projection of $g(\cdot,\mathbf{x},0)$ along the stable fibration;
(iii)~an error of the same order as the residual of any truncation,
evaluable locally in space-time; and (iv), in the backward form
of~\cite{HochsRoberts2019}, exactness of the truncated model for a
kinetic system correspondingly close to the CKE.  The CE series above
is the Taylor approximation of this manifold; the multi-band hierarchy
of Sec.~\ref{sec:multiperm} consists of the wider invariant manifolds
obtained by retaining the slow modes explicitly, tied to the one-mode
$K(\theta)$ by the sum rule of Eq.~\eqref{eq:K_spectral}; and the
provable order of nonlinearity of an $M$-band model is bounded by the
gap ratio $\lambda_M/\lambda_{M-1}$~\cite{Roberts2025tma}, so where
eigenvalues accumulate near zero on approach to $\theta_c$, nonlinear
multi-band models fail before the linear gap-closure criterion is
reached.

Two leading mechanisms make the reduced description incomplete at finite
$\Da$.  The first is
\emph{hysteresis}.  The wetting and drying operators of the kinetic
theory do not commute, so a closed loop in the forcing returns the pore
configuration $g(r)$ to a different state than it started in even when
$\theta$ returns to its initial value; this path-dependence is what the
companion paper describes as a non-zero curvature (holonomy) in the
space of forcing histories~\cite{Rigon2026PRE}.  In the $\Da\to0$ limit
the fast redistribution erases the configurational memory between
forcing steps, the loop closes, and only a flattened remnant
survives, the branch dependence $K^{\rm wet}\neq K^{\rm dry}$ already
identified in Sec.~\ref{sec:Richards}.  The second is the
\emph{multiple-permeability exchange} of Sec.~\ref{sec:multiperm}: an
un-slaved redistribution channel between pore bands that the single-band
slaving discards.  Both are $O(\Da)$ corrections to the flow on
$\mathcal{S}_\eq^{\Da}$; Richards' equation is recovered precisely when
both vanish, i.e.\ when the configuration relaxes faster than the
forcing varies and the relaxation spectrum has a single gap.  This is
the geometric content of ``$\Da\ll1$'': the dynamics collapses onto a
one-dimensional manifold parametrized by $\theta$, on which the only
surviving transport datum is the scalar $K(\theta)$.

\section{Synthesis}
\label{sec:synthesis}

The reduction shows Richards' equation is the unique macroscopic
equation obtainable from the kinetic theory under $\Da\ll1$: not one
model among many but the hydrodynamic limit, in the precise sense in
which Navier--Stokes is the hydrodynamic limit of Boltzmann. 
The same expansion yields four further results, each a familiar object of
soil physics recovered as a structural consequence rather than a
constitutive input.

\emph{The hydraulic conductivity is a transport coefficient.}  $K(\theta)
=\phi\langle\kappa|\J^{-1}|S_0C_w\rangle_f$ is a first-order CE
coefficient, the exact analogue of viscosity, so it is a property of the
medium's relaxation spectrum, independent of the forcing.  Its mean-field
reduction returns the classical Mualem--Burdine integral, and the
serial-path resummation returns the heterogeneity penalty $e^{-4\sigma^2}$;
both are approximations \emph{of} the transport coefficient, not
independent closures.

\emph{Dual- and multiple-permeability models are derived, not posited.}
This is the result with the widest practical reach, and the reduction puts
it on a first-principles footing.  A bimodal pore-size distribution splits
the relaxation spectrum of $\J$ into two bands separated by a gap; applying
the CE reduction \emph{within} each band, while the
slow cross-band relaxation is retained at $O(1)$, produces two coupled
Richards equations, exactly the Gerke--van~Genuchten dual-permeability
system, Weiler's IN3M as its three-band case, and mobile--immobile
transport as the limit in which one band is conductively dead
(Sec.~\ref{sec:multiperm}).  Crucially, the inter-band exchange coefficient
$\Gamma_w$ is not a fitting parameter but is \emph{computed} from the
cross-band connectivity, $\Gamma_w$ (App.~\ref{app:bands}); the first-order shape assumed by
phenomenological dual-porosity theory is here the leading term of a
controlled expansion, with its coefficient fixed by pore geometry.  What
was a modelling choice becomes a theorem with a formula.

\emph{The reduction predicts its own breakdown.}  Two distinct routes leave
the Richards regime, and the theory locates both.  Raising the forcing
drives $\Da\to1$ (sharp fronts); lowering the water content drives the
spectral gap $\lambda_1\to0$ at the percolation threshold, where $\J^{-1}$
becomes unbounded and the closed conductivity ceases to exist.  The second
is the sharper statement, a macroscopic constitutive law dissolving at a
connectivity transition its own spectrum detects, and it is verified
numerically in the Supplemental Material.  A companion volume of worked
derivations, in which every bracket and every projection of this paper is
carried out step by step, is available from the author.

\emph{Why Richards' equation is universal, and what $\theta_c$ is.}
The multi-band hierarchy of Sec.~\ref{sec:multiperm} can be read as a
sequence of eliminations: starting from the full kinetic equation, one
removes the fastest relaxation modes one group at a time, each removal
depositing that group's contribution into the conductivity through the
spectral sum~\eqref{eq:K_spectral}.  Eliminating in two steps or in one
gives the same final model, so the eliminations compose, the property
that makes this a renormalization-group flow rather than a mere
sequence of approximations.  Two consequences follow.  Away from the
percolation threshold, where a gap is present, every mode is eventually
eliminated and the end point of the flow is always the same
\emph{form}, a single conservation law with a flux linear in
$\nabla\psi$; the soil survives in that end point only through the two
functions $\psi(\theta)$ and $K(\theta)$.  This is the structural
reason a century-old empirical equation works across soils of wildly
different microstructure: what is universal is the form, not the
coefficients.  Approaching $\theta_c$, by contrast, the gap closes and
the modes to be eliminated accumulate at zero relaxation rate, with a
spectral density of the anomalous, fractal
type~\cite{AlexanderOrbach1982}; the elimination no longer terminates,
and $K(\theta)$ should acquire a non-analytic dependence on
$\theta-\theta_c$ fixed by that density together with the collapse of
the overlaps $\langle\kappa,\varphi_1\rangle_f$
in~\eqref{eq:K_spectral}.  Computing that exponent is deferred to
future work; what the present framework establishes is that
$\theta_c$ is not where Richards' equation becomes inaccurate but where
the object it describes stops existing.

\emph{Hysteresis has an operator home.}  The non-commutativity
$[\Lg_\W,\Lg_\D]\neq0$ of the wetting and drying generators, which has no
counterpart in the gas ($[\Lg_\W,\Lg_\D]=0$), is the microscopic origin of
hysteresis; at the operator level it is the curvature
$[\mathsf{T},\mathsf{P}]$, where $\mathsf{T}:=\nabla\!\cdot\!\mathbf{F}$
is the transport (streaming) operator and $\mathsf{P}$ the projector onto
local equilibrium (App.~\ref{app:schur}), the same object that carries
dynamic capillarity.

Keeping the spatial and temporal limit apart is what makes the structure of the
reduction transparent.  It separates the spatial
contraction, a kinematic coarse-graining
that involves no averaging over the pore ensemble, which
assumes nothing about time scales, from the CE
expansion, which is the only place a time-scale separation is invoked;
it makes the role of the single small parameter $\Da$ explicit; and it
isolates exactly where each classical assumption, local equilibrium,
statistical homogeneity, slowly varying accessibility, enters.

At the operator level (App.~\ref{app:schur}) the reduction is the Schur
complement of the fast block of $\mathbb{L}=\Da^{-1}\J-\mathsf{T}$; the
projector onto local equilibrium depends on position through $\theta$, and
the curvature $[\mathsf{T},\mathsf{P}]$ of that dependence is the operator
home of the non-Fickian corrections, dynamic capillarity and the
hysteretic $[\Lg_\W,\Lg_\D]\neq0$ entry of Table~\ref{tab:synthesis},
vanishing only for a spatially uniform frontier.

The analogy we draw on throughout with Boltzmann equation treatment is \emph{structural rather than
literal}.  Unlike the Boltzmann case there is no momentum conservation
and no Galilean invariance; the kinetic variable is a pore-filling
fraction, not a molecular velocity; and the redistribution operator
encodes capillary exchange across pores, not binary molecular
collisions.  What carries over is the operator architecture, a single
conserved mode, a spectral gap, a local-equilibrium manifold, and a
Fredholm solvability condition that closes the macroscopic balance, and
it is this architecture, not the microscopic content, that delivers
Richards' equation as a hydrodynamic limit.

\begin{table*}
\caption{\label{tab:synthesis} The complete Boltzmann${}\to{}$Richards
analogy, with the new entries (bands, continuum/CE split) made explicit.}
\begin{ruledtabular}
\begin{tabular}{lll}
\textbf{Classical gas dynamics} & \textbf{Soil water (this work)} &
  \textbf{Shared structure} \\
\hline
Boltzmann equation & Continuum kinetic eq.~\eqref{eq:contKE2} &
  Kinetic eq.\ with fast relaxation \\
Spatial homogenisation (none needed) & REV $\to$ point,
  $\varepsilon\to0$ & Kinematic continuum limit \\
Maxwellian $M(\mathbf{v})$ & Packing step $g_\eq=H(r^*\!-r)$ &
  Equilibrium distribution \\
$\{1,\mathbf{v},|\mathbf{v}|^2\}$: 5 invariants &
  $\{1\}$: 1 invariant ($\theta$) & Hydrodynamic projection \\
Linearized $L$, self-adjoint $\le0$ & Linearized $\J$, self-adjoint
  $\le0$ & Spectral gap, slaving \\
Euler $\to$ Navier--Stokes & Mass conservation $\to$ Richards &
  Solvability $\to$ macroscopic eq. \\
Viscosity $\mu=(\mathbf{c},L^{-1}\mathbf{c})$ &
  $K=\phi\langle\kappa,\J^{-1}S\rangle_f$ & 1st-order transport coeff. \\
Bhatnagar variational $\mu$ & Variational $K$
  & 2nd-order-accurate estimate \\
Boltzmann $H$-theorem & $d\F/dt\le0$ & Lyapunov / entropy \\
Maxwellian manifold & $\mathcal{S}_\eq$ & Slow invariant manifold \\
Shock layers ($\Kn\sim1$) & Sharp fronts ($\Da\sim1$) & CE breakdown \\
Burnett ($\Kn\sim1$) & Kinetic forcing ($\Da\sim1$) &
  Full kinetic eq.\ needed \\
-- & Percolation threshold & Spectral gap closes \\
Multi-scale relaxation & Bimodal spectrum gap $\Lambda$ &
  Dual/multiple permeability \\
$[\Lg_\W,\Lg_\D]=0$ & $[\Lg_\W,\Lg_\D]\neq0$ &
  Forcing curvature (hysteresis) \\
\end{tabular}
\end{ruledtabular}
\end{table*}

\bibliographystyle{apsrev4-2}
\bibliography{Richards_as_limit}

\begin{thebibliography}{48}%
\makeatletter
\providecommand \@ifxundefined [1]{%
 \@ifx{#1\undefined}
}%
\providecommand \@ifnum [1]{%
 \ifnum #1\expandafter \@firstoftwo
 \else \expandafter \@secondoftwo
 \fi
}%
\providecommand \@ifx [1]{%
 \ifx #1\expandafter \@firstoftwo
 \else \expandafter \@secondoftwo
 \fi
}%
\providecommand \natexlab [1]{#1}%
\providecommand \enquote  [1]{``#1''}%
\providecommand \bibnamefont  [1]{#1}%
\providecommand \bibfnamefont [1]{#1}%
\providecommand \citenamefont [1]{#1}%
\providecommand \href@noop [0]{\@secondoftwo}%
\providecommand \href [0]{\begingroup \@sanitize@url \@href}%
\providecommand \@href[1]{\@@startlink{#1}\@@href}%
\providecommand \@@href[1]{\endgroup#1\@@endlink}%
\providecommand \@sanitize@url [0]{\catcode `\\12\catcode `\$12\catcode
  `\&12\catcode `\#12\catcode `\^12\catcode `\_12\catcode `\%12\relax}%
\providecommand \@@startlink[1]{}%
\providecommand \@@endlink[0]{}%
\providecommand \url  [0]{\begingroup\@sanitize@url \@url }%
\providecommand \@url [1]{\endgroup\@href {#1}{\urlprefix }}%
\providecommand \urlprefix  [0]{URL }%
\providecommand \Eprint [0]{\href }%
\providecommand \doibase [0]{https://doi.org/}%
\providecommand \selectlanguage [0]{\@gobble}%
\providecommand \bibinfo  [0]{\@secondoftwo}%
\providecommand \bibfield  [0]{\@secondoftwo}%
\providecommand \translation [1]{[#1]}%
\providecommand \BibitemOpen [0]{}%
\providecommand \bibitemStop [0]{}%
\providecommand \bibitemNoStop [0]{.\EOS\space}%
\providecommand \EOS [0]{\spacefactor3000\relax}%
\providecommand \BibitemShut  [1]{\csname bibitem#1\endcsname}%
\let\auto@bib@innerbib\@empty
\bibitem [{\citenamefont {Rigon}(2026)}]{Rigon2026PRE}%
  \BibitemOpen
  \bibfield  {author} {\bibinfo {author} {\bibfnamefont {R.}~\bibnamefont
  {Rigon}},\ }\href@noop {} {\bibfield  {journal} {\bibinfo  {journal} {arXiv
  preprint}\ } (\bibinfo {year} {2026})},\ \bibinfo {note} {companion paper
  (PRE-1)},\ \Eprint {https://arxiv.org/abs/2607.09416} {arXiv:2607.09416
  [cond-mat.stat-mech]} \BibitemShut {NoStop}%
\bibitem [{\citenamefont {Richards}(1931)}]{Richards1931}%
  \BibitemOpen
  \bibfield  {author} {\bibinfo {author} {\bibfnamefont {L.~A.}\ \bibnamefont
  {Richards}},\ }\href@noop {} {\bibfield  {journal} {\bibinfo  {journal}
  {Physics}\ }\textbf {\bibinfo {volume} {1}},\ \bibinfo {pages} {318}
  (\bibinfo {year} {1931})}\BibitemShut {NoStop}%
\bibitem [{\citenamefont {Tubini}\ and\ \citenamefont
  {Rigon}(2022)}]{TubiniRigon2022}%
  \BibitemOpen
  \bibfield  {author} {\bibinfo {author} {\bibfnamefont {N.}~\bibnamefont
  {Tubini}}\ and\ \bibinfo {author} {\bibfnamefont {R.}~\bibnamefont {Rigon}},\
  }\href@noop {} {\bibfield  {journal} {\bibinfo  {journal} {Geoscientific
  Model Development}\ }\textbf {\bibinfo {volume} {15}},\ \bibinfo {pages} {75}
  (\bibinfo {year} {2022})}\BibitemShut {NoStop}%
\bibitem [{\citenamefont {van Genuchten}(1980)}]{vanGenuchten1980}%
  \BibitemOpen
  \bibfield  {author} {\bibinfo {author} {\bibfnamefont {M.~T.}\ \bibnamefont
  {van Genuchten}},\ }\href@noop {} {\bibfield  {journal} {\bibinfo  {journal}
  {Soil Sci. Soc. Am. J.}\ }\textbf {\bibinfo {volume} {44}},\ \bibinfo {pages}
  {892} (\bibinfo {year} {1980})}\BibitemShut {NoStop}%
\bibitem [{\citenamefont {Brooks}\ and\ \citenamefont
  {Corey}(1964)}]{Brooks1964}%
  \BibitemOpen
  \bibfield  {author} {\bibinfo {author} {\bibfnamefont {R.~H.}\ \bibnamefont
  {Brooks}}\ and\ \bibinfo {author} {\bibfnamefont {A.~T.}\ \bibnamefont
  {Corey}},\ }\href@noop {} {\emph {\bibinfo {title} {Hydraulic properties of
  porous media}}},\ \bibinfo {type} {Hydrology Paper}\ \bibinfo {number} {3}\
  (\bibinfo  {institution} {Colorado State University},\ \bibinfo {address}
  {Fort Collins},\ \bibinfo {year} {1964})\BibitemShut {NoStop}%
\bibitem [{\citenamefont {Mualem}(1976)}]{Mualem1976}%
  \BibitemOpen
  \bibfield  {author} {\bibinfo {author} {\bibfnamefont {Y.}~\bibnamefont
  {Mualem}},\ }\href@noop {} {\bibfield  {journal} {\bibinfo  {journal} {Water
  Resour. Res.}\ }\textbf {\bibinfo {volume} {12}},\ \bibinfo {pages} {513}
  (\bibinfo {year} {1976})}\BibitemShut {NoStop}%
\bibitem [{\citenamefont {Darcy}(1856)}]{Darcy1856}%
  \BibitemOpen
  \bibfield  {author} {\bibinfo {author} {\bibfnamefont {H.}~\bibnamefont
  {Darcy}},\ }\href@noop {} {\emph {\bibinfo {title} {Les Fontaines Publiques
  de la Ville de Dijon}}}\ (\bibinfo  {publisher} {Victor Dalmont},\ \bibinfo
  {address} {Paris},\ \bibinfo {year} {1856})\BibitemShut {NoStop}%
\bibitem [{\citenamefont {Buckingham}(1907)}]{Buckingham1907}%
  \BibitemOpen
  \bibfield  {author} {\bibinfo {author} {\bibfnamefont {E.}~\bibnamefont
  {Buckingham}},\ }\href@noop {} {\emph {\bibinfo {title} {Studies on the
  movement of soil moisture}}},\ \bibinfo {type} {Bulletin}\ \bibinfo {number}
  {38}\ (\bibinfo  {institution} {U.S. Department of Agriculture, Bureau of
  Soils},\ \bibinfo {address} {Washington, DC},\ \bibinfo {year}
  {1907})\BibitemShut {NoStop}%
\bibitem [{\citenamefont {Bear}(1972)}]{Bear1972}%
  \BibitemOpen
  \bibfield  {author} {\bibinfo {author} {\bibfnamefont {J.}~\bibnamefont
  {Bear}},\ }\href@noop {} {\emph {\bibinfo {title} {Dynamics of Fluids in
  Porous Media}}}\ (\bibinfo  {publisher} {Elsevier},\ \bibinfo {address} {New
  York},\ \bibinfo {year} {1972})\BibitemShut {NoStop}%
\bibitem [{\citenamefont {Beven}\ and\ \citenamefont
  {Germann}(1982)}]{Beven1982}%
  \BibitemOpen
  \bibfield  {author} {\bibinfo {author} {\bibfnamefont {K.}~\bibnamefont
  {Beven}}\ and\ \bibinfo {author} {\bibfnamefont {P.}~\bibnamefont
  {Germann}},\ }\href@noop {} {\bibfield  {journal} {\bibinfo  {journal} {Water
  Resour. Res.}\ }\textbf {\bibinfo {volume} {18}},\ \bibinfo {pages} {1311}
  (\bibinfo {year} {1982})}\BibitemShut {NoStop}%
\bibitem [{\citenamefont {Germann}(2018)}]{Germann2018}%
  \BibitemOpen
  \bibfield  {author} {\bibinfo {author} {\bibfnamefont {P.~F.}\ \bibnamefont
  {Germann}},\ }\href@noop {} {\emph {\bibinfo {title} {Preferential Flow:
  Stokes Approach to Infiltration and Drainage}}}\ (\bibinfo  {publisher}
  {Geographica Bernensia},\ \bibinfo {address} {Bern},\ \bibinfo {year}
  {2018})\BibitemShut {NoStop}%
\bibitem [{\citenamefont {Diamantopoulos}\ and\ \citenamefont
  {Durner}(2012)}]{Diamantopoulos2012}%
  \BibitemOpen
  \bibfield  {author} {\bibinfo {author} {\bibfnamefont {E.}~\bibnamefont
  {Diamantopoulos}}\ and\ \bibinfo {author} {\bibfnamefont {W.}~\bibnamefont
  {Durner}},\ }\bibfield  {journal} {\bibinfo  {journal} {Vadose Zone J.}\
  }\textbf {\bibinfo {volume} {11}},\ \href
  {https://doi.org/10.2136/vzj2011.0197} {10.2136/vzj2011.0197} (\bibinfo
  {year} {2012})\BibitemShut {NoStop}%
\bibitem [{\citenamefont {Hilfer}(2006)}]{Hilfer2006}%
  \BibitemOpen
  \bibfield  {author} {\bibinfo {author} {\bibfnamefont {R.}~\bibnamefont
  {Hilfer}},\ }\href@noop {} {\bibfield  {journal} {\bibinfo  {journal} {Phys.
  Rev. E}\ }\textbf {\bibinfo {volume} {73}},\ \bibinfo {pages} {016307}
  (\bibinfo {year} {2006})}\BibitemShut {NoStop}%
\bibitem [{\citenamefont {Hassanizadeh}\ \emph {et~al.}(2002)\citenamefont
  {Hassanizadeh}, \citenamefont {Celia},\ and\ \citenamefont
  {Dahle}}]{Hassanizadeh2002}%
  \BibitemOpen
  \bibfield  {author} {\bibinfo {author} {\bibfnamefont {S.~M.}\ \bibnamefont
  {Hassanizadeh}}, \bibinfo {author} {\bibfnamefont {M.~A.}\ \bibnamefont
  {Celia}},\ and\ \bibinfo {author} {\bibfnamefont {H.~K.}\ \bibnamefont
  {Dahle}},\ }\href@noop {} {\bibfield  {journal} {\bibinfo  {journal} {Vadose
  Zone J.}\ }\textbf {\bibinfo {volume} {1}},\ \bibinfo {pages} {38} (\bibinfo
  {year} {2002})}\BibitemShut {NoStop}%
\bibitem [{\citenamefont {Chapman}\ and\ \citenamefont
  {Cowling}(1970)}]{ChapmanCowling1970}%
  \BibitemOpen
  \bibfield  {author} {\bibinfo {author} {\bibfnamefont {S.}~\bibnamefont
  {Chapman}}\ and\ \bibinfo {author} {\bibfnamefont {T.~G.}\ \bibnamefont
  {Cowling}},\ }\href@noop {} {\emph {\bibinfo {title} {The Mathematical Theory
  of Non-Uniform Gases}}},\ \bibinfo {edition} {3rd}\ ed.\ (\bibinfo
  {publisher} {Cambridge University Press},\ \bibinfo {address} {Cambridge},\
  \bibinfo {year} {1970})\BibitemShut {NoStop}%
\bibitem [{\citenamefont {Hilbert}(1912)}]{Hilbert1912}%
  \BibitemOpen
  \bibfield  {author} {\bibinfo {author} {\bibfnamefont {D.}~\bibnamefont
  {Hilbert}},\ }\href@noop {} {\bibfield  {journal} {\bibinfo  {journal} {Math.
  Ann.}\ }\textbf {\bibinfo {volume} {72}},\ \bibinfo {pages} {562} (\bibinfo
  {year} {1912})}\BibitemShut {NoStop}%
\bibitem [{\citenamefont {Enskog}(1917)}]{Enskog1917}%
  \BibitemOpen
  \bibfield  {author} {\bibinfo {author} {\bibfnamefont {D.}~\bibnamefont
  {Enskog}},\ }\href@noop {} {\emph {\bibinfo {title} {Kinetische {Theorie} der
  {Vorg\"ange} in {m\"a\ss ig} verd{\"u}nnten {Gasen}}}}\ (\bibinfo
  {publisher} {Almqvist \& Wiksell},\ \bibinfo {address} {Uppsala},\ \bibinfo
  {year} {1917})\BibitemShut {NoStop}%
\bibitem [{\citenamefont {Cercignani}(1988)}]{Cercignani1988}%
  \BibitemOpen
  \bibfield  {author} {\bibinfo {author} {\bibfnamefont {C.}~\bibnamefont
  {Cercignani}},\ }\href@noop {} {\emph {\bibinfo {title} {The {Boltzmann}
  Equation and Its Applications}}}\ (\bibinfo  {publisher} {Springer},\
  \bibinfo {address} {New York},\ \bibinfo {year} {1988})\BibitemShut {NoStop}%
\bibitem [{SM_(2026)}]{SM_hilbert}%
  \BibitemOpen
  \href@noop {} {} (\bibinfo {year} {2026}),\ \bibinfo {note} {see Supplemental
  Material for the numerical tests on which the reduction rests: S1,
  diagonalization of the pore-class operator (radius labeling of the modes,
  spectral sum rule, gap versus band width); S2, joint closure of the spectral
  gap and of the Stokes permeability at the percolation threshold; S3, the
  two-band exchange coefficient of a bimodal medium.}\BibitemShut {Stop}%
\bibitem [{\citenamefont {Hagen}(1839)}]{Hagen1839}%
  \BibitemOpen
  \bibfield  {author} {\bibinfo {author} {\bibfnamefont {G.}~\bibnamefont
  {Hagen}},\ }\href@noop {} {\bibfield  {journal} {\bibinfo  {journal} {Ann.
  Phys.}\ }\textbf {\bibinfo {volume} {122}},\ \bibinfo {pages} {423} (\bibinfo
  {year} {1839})}\BibitemShut {NoStop}%
\bibitem [{\citenamefont {Poiseuille}(1840)}]{Poiseuille1840}%
  \BibitemOpen
  \bibfield  {author} {\bibinfo {author} {\bibfnamefont {J.~L.~M.}\
  \bibnamefont {Poiseuille}},\ }\href@noop {} {\bibfield  {journal} {\bibinfo
  {journal} {C. R. Acad. Sci.}\ }\textbf {\bibinfo {volume} {11}},\ \bibinfo
  {pages} {961} (\bibinfo {year} {1840})}\BibitemShut {NoStop}%
\bibitem [{\citenamefont {Lucas}(1918)}]{Lucas1918}%
  \BibitemOpen
  \bibfield  {author} {\bibinfo {author} {\bibfnamefont {R.}~\bibnamefont
  {Lucas}},\ }\href {https://doi.org/10.1007/BF01461107} {\bibfield  {journal}
  {\bibinfo  {journal} {Kolloid-Z.}\ }\textbf {\bibinfo {volume} {23}},\
  \bibinfo {pages} {15} (\bibinfo {year} {1918})}\BibitemShut {NoStop}%
\bibitem [{\citenamefont {Washburn}(1921)}]{Washburn1921}%
  \BibitemOpen
  \bibfield  {author} {\bibinfo {author} {\bibfnamefont {E.~W.}\ \bibnamefont
  {Washburn}},\ }\href {https://doi.org/10.1103/PhysRev.17.273} {\bibfield
  {journal} {\bibinfo  {journal} {Phys. Rev.}\ }\textbf {\bibinfo {volume}
  {17}},\ \bibinfo {pages} {273} (\bibinfo {year} {1921})}\BibitemShut
  {NoStop}%
\bibitem [{\citenamefont {Roberts}(2015)}]{Roberts2015ima}%
  \BibitemOpen
  \bibfield  {author} {\bibinfo {author} {\bibfnamefont {A.~J.}\ \bibnamefont
  {Roberts}},\ }\href {https://doi.org/10.1093/imamat/hxv004} {\bibfield
  {journal} {\bibinfo  {journal} {IMA Journal of Applied Mathematics}\ }\textbf
  {\bibinfo {volume} {80}},\ \bibinfo {pages} {1492} (\bibinfo {year}
  {2015})}\BibitemShut {NoStop}%
\bibitem [{\citenamefont {Bunder}\ and\ \citenamefont
  {Roberts}(2021)}]{BunderRoberts2021}%
  \BibitemOpen
  \bibfield  {author} {\bibinfo {author} {\bibfnamefont {J.~E.}\ \bibnamefont
  {Bunder}}\ and\ \bibinfo {author} {\bibfnamefont {A.~J.}\ \bibnamefont
  {Roberts}},\ }\href {https://doi.org/10.1007/s42452-021-04229-9} {\bibfield
  {journal} {\bibinfo  {journal} {SN Applied Sciences}\ }\textbf {\bibinfo
  {volume} {3}},\ \bibinfo {pages} {1} (\bibinfo {year} {2021})}\BibitemShut
  {NoStop}%
\bibitem [{\citenamefont {Roberts}(2025)}]{Roberts2025tma}%
  \BibitemOpen
  \bibfield  {author} {\bibinfo {author} {\bibfnamefont {A.~J.}\ \bibnamefont
  {Roberts}},\ }\href {https://doi.org/10.1093/imatrm/tnaf001} {\bibfield
  {journal} {\bibinfo  {journal} {Transactions of Mathematics and Its
  Applications}\ }\textbf {\bibinfo {volume} {9}},\ \bibinfo {pages} {tnaf001}
  (\bibinfo {year} {2025})}\BibitemShut {NoStop}%
\bibitem [{\citenamefont {Fiedler}(1973)}]{Fiedler1973}%
  \BibitemOpen
  \bibfield  {author} {\bibinfo {author} {\bibfnamefont {M.}~\bibnamefont
  {Fiedler}},\ }\href@noop {} {\bibfield  {journal} {\bibinfo  {journal}
  {Czech. Math. J.}\ }\textbf {\bibinfo {volume} {23}},\ \bibinfo {pages} {298}
  (\bibinfo {year} {1973})}\BibitemShut {NoStop}%
\bibitem [{\citenamefont {Fredholm}(1903)}]{Fredholm1903}%
  \BibitemOpen
  \bibfield  {author} {\bibinfo {author} {\bibfnamefont {I.}~\bibnamefont
  {Fredholm}},\ }\href@noop {} {\bibfield  {journal} {\bibinfo  {journal} {Acta
  Math.}\ }\textbf {\bibinfo {volume} {27}},\ \bibinfo {pages} {365} (\bibinfo
  {year} {1903})}\BibitemShut {NoStop}%
\bibitem [{\citenamefont {Kubo}(1957)}]{Kubo1957}%
  \BibitemOpen
  \bibfield  {author} {\bibinfo {author} {\bibfnamefont {R.}~\bibnamefont
  {Kubo}},\ }\href {https://doi.org/10.1143/JPSJ.12.570} {\bibfield  {journal}
  {\bibinfo  {journal} {J. Phys. Soc. Jpn.}\ }\textbf {\bibinfo {volume}
  {12}},\ \bibinfo {pages} {570} (\bibinfo {year} {1957})}\BibitemShut
  {NoStop}%
\bibitem [{\citenamefont {Hunt}(2004)}]{Hunt2004}%
  \BibitemOpen
  \bibfield  {author} {\bibinfo {author} {\bibfnamefont {A.~G.}\ \bibnamefont
  {Hunt}},\ }\href@noop {} {\bibfield  {journal} {\bibinfo  {journal} {Geophys.
  Res. Lett.}\ }\textbf {\bibinfo {volume} {31}},\ \bibinfo {pages} {L21503}
  (\bibinfo {year} {2004})}\BibitemShut {NoStop}%
\bibitem [{\citenamefont {Burdine}(1953)}]{Burdine1953}%
  \BibitemOpen
  \bibfield  {author} {\bibinfo {author} {\bibfnamefont {N.~T.}\ \bibnamefont
  {Burdine}},\ }\href@noop {} {\bibfield  {journal} {\bibinfo  {journal} {J.
  Pet. Technol.}\ }\textbf {\bibinfo {volume} {5}},\ \bibinfo {pages} {71}
  (\bibinfo {year} {1953})}\BibitemShut {NoStop}%
\bibitem [{\citenamefont {Hirschfelder}\ \emph {et~al.}(1954)\citenamefont
  {Hirschfelder}, \citenamefont {Curtiss},\ and\ \citenamefont
  {Bird}}]{Hirschfelder1954}%
  \BibitemOpen
  \bibfield  {author} {\bibinfo {author} {\bibfnamefont {J.~O.}\ \bibnamefont
  {Hirschfelder}}, \bibinfo {author} {\bibfnamefont {C.~F.}\ \bibnamefont
  {Curtiss}},\ and\ \bibinfo {author} {\bibfnamefont {R.~B.}\ \bibnamefont
  {Bird}},\ }\href@noop {} {\emph {\bibinfo {title} {Molecular Theory of Gases
  and Liquids}}}\ (\bibinfo  {publisher} {John Wiley \& Sons},\ \bibinfo
  {address} {New York},\ \bibinfo {year} {1954})\BibitemShut {NoStop}%
\bibitem [{\citenamefont {Haines}(1930)}]{Haines1930}%
  \BibitemOpen
  \bibfield  {author} {\bibinfo {author} {\bibfnamefont {W.~B.}\ \bibnamefont
  {Haines}},\ }\href@noop {} {\bibfield  {journal} {\bibinfo  {journal} {J.
  Agric. Sci.}\ }\textbf {\bibinfo {volume} {20}},\ \bibinfo {pages} {97}
  (\bibinfo {year} {1930})}\BibitemShut {NoStop}%
\bibitem [{\citenamefont {Rawls}\ \emph {et~al.}(1982)\citenamefont {Rawls},
  \citenamefont {Brakensiek},\ and\ \citenamefont {Saxton}}]{Rawls1982}%
  \BibitemOpen
  \bibfield  {author} {\bibinfo {author} {\bibfnamefont {W.~J.}\ \bibnamefont
  {Rawls}}, \bibinfo {author} {\bibfnamefont {D.~L.}\ \bibnamefont
  {Brakensiek}},\ and\ \bibinfo {author} {\bibfnamefont {K.~E.}\ \bibnamefont
  {Saxton}},\ }\href@noop {} {\bibfield  {journal} {\bibinfo  {journal} {Trans.
  ASAE}\ }\textbf {\bibinfo {volume} {25}},\ \bibinfo {pages} {1316} (\bibinfo
  {year} {1982})}\BibitemShut {NoStop}%
\bibitem [{\citenamefont {Kosugi}(1996)}]{Kosugi1996}%
  \BibitemOpen
  \bibfield  {author} {\bibinfo {author} {\bibfnamefont {K.}~\bibnamefont
  {Kosugi}},\ }\href@noop {} {\bibfield  {journal} {\bibinfo  {journal} {Water
  Resour. Res.}\ }\textbf {\bibinfo {volume} {32}},\ \bibinfo {pages} {2697}
  (\bibinfo {year} {1996})}\BibitemShut {NoStop}%
\bibitem [{\citenamefont {Tuller}\ \emph {et~al.}(1999)\citenamefont {Tuller},
  \citenamefont {Or},\ and\ \citenamefont {Dudley}}]{Tuller1999}%
  \BibitemOpen
  \bibfield  {author} {\bibinfo {author} {\bibfnamefont {M.}~\bibnamefont
  {Tuller}}, \bibinfo {author} {\bibfnamefont {D.}~\bibnamefont {Or}},\ and\
  \bibinfo {author} {\bibfnamefont {L.~M.}\ \bibnamefont {Dudley}},\
  }\href@noop {} {\bibfield  {journal} {\bibinfo  {journal} {Water Resour.
  Res.}\ }\textbf {\bibinfo {volume} {35}},\ \bibinfo {pages} {1949} (\bibinfo
  {year} {1999})}\BibitemShut {NoStop}%
\bibitem [{\citenamefont {Or}\ and\ \citenamefont
  {Tuller}(2000)}]{OrTuller2000}%
  \BibitemOpen
  \bibfield  {author} {\bibinfo {author} {\bibfnamefont {D.}~\bibnamefont
  {Or}}\ and\ \bibinfo {author} {\bibfnamefont {M.}~\bibnamefont {Tuller}},\
  }\href@noop {} {\bibfield  {journal} {\bibinfo  {journal} {Water Resour.
  Res.}\ }\textbf {\bibinfo {volume} {36}},\ \bibinfo {pages} {1165} (\bibinfo
  {year} {2000})}\BibitemShut {NoStop}%
\bibitem [{\citenamefont {Gerke}\ and\ \citenamefont {van
  Genuchten}(1993)}]{GerkeVanGenuchten1993}%
  \BibitemOpen
  \bibfield  {author} {\bibinfo {author} {\bibfnamefont {H.~H.}\ \bibnamefont
  {Gerke}}\ and\ \bibinfo {author} {\bibfnamefont {M.~T.}\ \bibnamefont {van
  Genuchten}},\ }\href@noop {} {\bibfield  {journal} {\bibinfo  {journal}
  {Water Resour. Res.}\ }\textbf {\bibinfo {volume} {29}},\ \bibinfo {pages}
  {305} (\bibinfo {year} {1993})}\BibitemShut {NoStop}%
\bibitem [{\citenamefont {Weiler}\ and\ \citenamefont
  {Naef}(2003)}]{Weiler2003}%
  \BibitemOpen
  \bibfield  {author} {\bibinfo {author} {\bibfnamefont {M.}~\bibnamefont
  {Weiler}}\ and\ \bibinfo {author} {\bibfnamefont {F.}~\bibnamefont {Naef}},\
  }\href@noop {} {\bibfield  {journal} {\bibinfo  {journal} {J. Hydrol.}\
  }\textbf {\bibinfo {volume} {273}},\ \bibinfo {pages} {139} (\bibinfo {year}
  {2003})}\BibitemShut {NoStop}%
\bibitem [{\citenamefont {{\v{S}}im{\r{u}}nek}\ \emph
  {et~al.}(2003)\citenamefont {{\v{S}}im{\r{u}}nek}, \citenamefont {Jarvis},
  \citenamefont {van Genuchten},\ and\ \citenamefont
  {G{\"a}rden{\"a}s}}]{Simunek2003}%
  \BibitemOpen
  \bibfield  {author} {\bibinfo {author} {\bibfnamefont {J.}~\bibnamefont
  {{\v{S}}im{\r{u}}nek}}, \bibinfo {author} {\bibfnamefont {N.~J.}\
  \bibnamefont {Jarvis}}, \bibinfo {author} {\bibfnamefont {M.~T.}\
  \bibnamefont {van Genuchten}},\ and\ \bibinfo {author} {\bibfnamefont
  {A.}~\bibnamefont {G{\"a}rden{\"a}s}},\ }\href@noop {} {\bibfield  {journal}
  {\bibinfo  {journal} {J. Hydrol.}\ }\textbf {\bibinfo {volume} {272}},\
  \bibinfo {pages} {14} (\bibinfo {year} {2003})}\BibitemShut {NoStop}%
\bibitem [{\citenamefont {Roberts}\ and\ \citenamefont
  {Strunin}(2004)}]{RobertsStrunin2004}%
  \BibitemOpen
  \bibfield  {author} {\bibinfo {author} {\bibfnamefont {A.~J.}\ \bibnamefont
  {Roberts}}\ and\ \bibinfo {author} {\bibfnamefont {D.~V.}\ \bibnamefont
  {Strunin}},\ }\href {https://doi.org/10.1093/qjmam/57.3.363} {\bibfield
  {journal} {\bibinfo  {journal} {Quarterly Journal of Mechanics and Applied
  Mathematics}\ }\textbf {\bibinfo {volume} {57}},\ \bibinfo {pages} {363}
  (\bibinfo {year} {2004})}\BibitemShut {NoStop}%
\bibitem [{\citenamefont {van Genuchten}\ and\ \citenamefont
  {Wierenga}(1976)}]{vanGenuchtenWierenga1976}%
  \BibitemOpen
  \bibfield  {author} {\bibinfo {author} {\bibfnamefont {M.~T.}\ \bibnamefont
  {van Genuchten}}\ and\ \bibinfo {author} {\bibfnamefont {P.~J.}\ \bibnamefont
  {Wierenga}},\ }\href@noop {} {\bibfield  {journal} {\bibinfo  {journal} {Soil
  Sci. Soc. Am. J.}\ }\textbf {\bibinfo {volume} {40}},\ \bibinfo {pages} {473}
  (\bibinfo {year} {1976})}\BibitemShut {NoStop}%
\bibitem [{\citenamefont {Burnett}(1936)}]{Burnett1936}%
  \BibitemOpen
  \bibfield  {author} {\bibinfo {author} {\bibfnamefont {D.}~\bibnamefont
  {Burnett}},\ }\href@noop {} {\bibfield  {journal} {\bibinfo  {journal} {Proc.
  London Math. Soc. (2)}\ }\textbf {\bibinfo {volume} {40}},\ \bibinfo {pages}
  {382} (\bibinfo {year} {1936})}\BibitemShut {NoStop}%
\bibitem [{\citenamefont {Fenichel}(1979)}]{Fenichel1979}%
  \BibitemOpen
  \bibfield  {author} {\bibinfo {author} {\bibfnamefont {N.}~\bibnamefont
  {Fenichel}},\ }\href@noop {} {\bibfield  {journal} {\bibinfo  {journal} {J.
  Differ. Equ.}\ }\textbf {\bibinfo {volume} {31}},\ \bibinfo {pages} {53}
  (\bibinfo {year} {1979})}\BibitemShut {NoStop}%
\bibitem [{\citenamefont {Jones}(1995)}]{Jones1995}%
  \BibitemOpen
  \bibfield  {author} {\bibinfo {author} {\bibfnamefont {C.~K. R.~T.}\
  \bibnamefont {Jones}},\ }in\ \href@noop {} {\emph {\bibinfo {booktitle}
  {Dynamical Systems (Montecatini Terme, 1994)}}},\ \bibinfo {series} {Lecture
  Notes in Math.}, Vol.\ \bibinfo {volume} {1609}\ (\bibinfo  {publisher}
  {Springer},\ \bibinfo {address} {Berlin},\ \bibinfo {year} {1995})\
  p.~\bibinfo {pages} {44}\BibitemShut {NoStop}%
\bibitem [{\citenamefont {Gorban}\ and\ \citenamefont
  {Karlin}(2005)}]{Gorban2005}%
  \BibitemOpen
  \bibfield  {author} {\bibinfo {author} {\bibfnamefont {A.~N.}\ \bibnamefont
  {Gorban}}\ and\ \bibinfo {author} {\bibfnamefont {I.~V.}\ \bibnamefont
  {Karlin}},\ }\href@noop {} {\emph {\bibinfo {title} {Invariant Manifolds for
  Physical and Chemical Kinetics}}},\ \bibinfo {series} {Lecture Notes in
  Physics}, Vol.\ \bibinfo {volume} {660}\ (\bibinfo  {publisher} {Springer},\
  \bibinfo {address} {Berlin},\ \bibinfo {year} {2005})\BibitemShut {NoStop}%
\bibitem [{\citenamefont {Hochs}\ and\ \citenamefont
  {Roberts}(2019)}]{HochsRoberts2019}%
  \BibitemOpen
  \bibfield  {author} {\bibinfo {author} {\bibfnamefont {P.}~\bibnamefont
  {Hochs}}\ and\ \bibinfo {author} {\bibfnamefont {A.~J.}\ \bibnamefont
  {Roberts}},\ }\href {https://doi.org/10.1016/j.jde.2019.07.021} {\bibfield
  {journal} {\bibinfo  {journal} {Journal of Differential Equations}\ }\textbf
  {\bibinfo {volume} {267}},\ \bibinfo {pages} {7263} (\bibinfo {year}
  {2019})}\BibitemShut {NoStop}%
\bibitem [{\citenamefont {Alexander}\ and\ \citenamefont
  {Orbach}(1982)}]{AlexanderOrbach1982}%
  \BibitemOpen
  \bibfield  {author} {\bibinfo {author} {\bibfnamefont {S.}~\bibnamefont
  {Alexander}}\ and\ \bibinfo {author} {\bibfnamefont {R.}~\bibnamefont
  {Orbach}},\ }\href {https://doi.org/10.1051/jphyslet:019820043017062500}
  {\bibfield  {journal} {\bibinfo  {journal} {Journal de Physique Lettres}\
  }\textbf {\bibinfo {volume} {43}},\ \bibinfo {pages} {L625} (\bibinfo {year}
  {1982})}\BibitemShut {NoStop}%
\end{thebibliography}%

\begin{acknowledgments}
The Author thanks Chiara Baldo and Lorenzo Duchi who assisted the creation of the MOOC on soil equilibrium theory from which he ultimately found the key to develop this new one. He is especially grateful to A.~J. (Tony) Roberts, who read the first arXiv version of this manuscript and generously wrote to point out the connection with his invariant-manifold framework for multiscale modelling: his correspondence directly motivated the finite-Damk\"ohler validity statements, the nested-family formulation of the multi-permeability hierarchy, and the band--mode equivalence now discussed in Secs.~\ref{sec:multiperm} and the geometric interpretation, and corrected an earlier, imprecise statement about the role of multiple spectral gaps.  This study was carried out within the Space It Up project funded by the Italian Space Agency, ASI, and the Ministry of University and Research, MUR, under contract n. 2024-5-E.0 - CUP n. I53D24000060005.
\end{acknowledgments}

\appendix

These appendices carry out in full the algebra that the main text
states.  Where a step is a standard result rather than a consequence of
the present construction we cite it at the point of use: the Fredholm
alternative~\cite{Fredholm1903}, the spectral theory of weighted graph
Laplacians and the Fiedler bound~\cite{Fiedler1973}, the
Chapman--Enskog procedure and the moment identification of transport
coefficients~\cite{ChapmanCowling1970,Cercignani1988}, the Green--Kubo
form of a transport coefficient~\cite{Kubo1957}, the Schur-complement
identity~\cite{Gorban2005}, the invariant-manifold theorems that certify
the reduction at finite scale
ratios~\cite{Roberts2015ima,BunderRoberts2021,HochsRoberts2019,Roberts2025tma},
and, for the classical closures recovered in
App.~\ref{app:chi}, Burdine~\cite{Burdine1953} and
Mualem~\cite{Mualem1976}.  Statements not so marked are derived here.

\section{The gradient expansion and the entangled flux}
\label{app:gradient}

This appendix supplies the spatial-limit algebra that the 
paper defers, deriving Eqs.~\eqref{eq:Cd}--\eqref{eq:Fentangled} and the
first-order source~\eqref{eq:source}.

\subsection{Expansion of the boundary operators}

We carry out the expansion one step at a time.  Start from the boundary
gain and loss~\eqref{eq:Gbnd}--\eqref{eq:Lbnd}, in which the only
difference from the internal operators is that the neighbour
distribution is evaluated at $\mathbf{x}_+=\mathbf{x}+\ell\hat n$.

\emph{Step 1 (Taylor expansion).}  Expand the neighbour distribution in
the REV spacing $\ell$,
\begin{equation}
g(r',\mathbf{x}_+)=g(r',\mathbf{x})
  +\ell\,\hat n\!\cdot\!\nabla g(r',\mathbf{x})+O(\ell^2).
\label{eq:app_taylor}
\end{equation}

\emph{Step 2 (insert and split).}  Substitute~\eqref{eq:app_taylor} into
$\mathcal{G}_\partial$ and $\Lg_\partial$.  Each splits into a piece with
$g(r',\mathbf{x})$ (the local value, order $\ell^0$) and a piece with
$\ell\,\hat n\!\cdot\!\nabla g(r',\mathbf{x})$ (order $\ell^1$):
\begin{align}
\mathcal{G}_\partial&=\kappa[1{-}g(r)]\!\int\!C_\partial\rho_w\mathrm{g}\,\Phi(r,r')
  \big[g(r')+\ell\hat n\!\cdot\!\nabla g(r')\big]f'dr',\notag\\
\Lg_\partial&=\kappa\,g(r)\!\int\!C_\partial\rho_w\mathrm{g}\,\Phi(r',r)
  \big[1{-}g(r')-\ell\hat n\!\cdot\!\nabla g(r')\big]f'dr'.\notag
\end{align}

\emph{Step 3 (zeroth order cancels).}  The $\ell^0$ part of
$\mathcal{G}_\partial-\Lg_\partial$ is, term by term, the internal
redistribution $\mathcal{G}_{\rm int}-\Lg_{\rm int}$ of
Eqs.~\eqref{eq:G}--\eqref{eq:L} with $C_\partial$ in place of $C$.  It
carries no spatial derivative and so contributes nothing to transport
between REVs; it is already accounted for by $\C$.

\emph{Step 4 (collect first order).}  Write
\begin{equation}
I(r):=\int_0^\infty C_\partial(r,r')\,\rho_w\mathrm{g}\,\Phi(r,r')\,\hat n\!\cdot\!
\nabla g(r')\,f'\,dr'
\end{equation}
 for the common gradient integral.  The gain
contributes $+\ell\,\kappa(r)[1-g(r)]\,I(r)$.  The loss contributes
$-\ell\,\kappa(r)\,g(r)\,I(r)$ to $\Cd$: \emph{two} sign flips occur
inside $\Lg_\partial$,the antisymmetry $\Phi(r',r)=-\Phi(r,r')$, and
the vacancy factor $[1-g(r',\mathbf{x}_+)]$, whose gradient part is
$-\ell\,\hat n\!\cdot\!\nabla g(r')$,and they \emph{cancel}, so the
loss retains the relative minus it carries in
$\Cd=\mathcal{G}_\partial-\Lg_\partial$.  Hence the surviving $\ell^1$
part is
\begin{align}
\Cd&=\ell\,\kappa(r)\Big\{[1-g(r)]
  \!\int\! C_\partial\rho_w\mathrm{g}\,\Phi(r,r')\,\hat n\!\cdot\!\nabla g(r')\,f'\,dr'\notag\\
&\qquad-\,g(r)\!\int\! C_\partial\rho_w\mathrm{g}\,\Phi(r,r')\,\hat n\!\cdot\!\nabla g(r')\,
  f'\,dr'\Big\}+O(\ell^2).
\end{align}

\emph{Step 5 (combine occupancy factors).}  The receiver gate
$[1-g(r)]$ comes from $\mathcal{G}_\partial$ (an empty class $r$
receiving water) and the donor gate $g(r)$ from $\Lg_\partial$ (a full
class $r$ giving water up).  Because the two flips inside the loss have
cancelled, the gates enter $\Cd$ with \emph{opposite} signs, and the two
identical integrals combine with the net occupancy weight
\begin{equation}
[1-g(r)]-g(r)=[1-2g(r)],
\label{eq:app_gate}
\end{equation}
which is the factor quoted in~\eqref{eq:Cd}.  It is the signature of the
exclusion rule: a half-filled class ($g=\tfrac12$) is neutral, a nearly
empty one ($g\to0$) is a net receiver ($+1$), a nearly full one
($g\to1$) a net donor ($-1$).

\emph{Step 6 (read off the flux).}  Writing the result as a divergence,
$\Cd=-\nabla\!\cdot\!\mathbf{F}+O(\ell^2)$, identifies the entangled
flux~\eqref{eq:Fentangled} with kernel
\begin{equation}
\Gamma(r,r';g)=\kappa(r)[1-2g(r)]C_\partial(r,r')\rho_w\mathrm{g}\,\Phi(r,r')f(r')
\end{equation}
 which
is Eq.~\eqref{eq:Gamma}.

\subsection{Localization on the frontier}

On the equilibrium step $g^{(0)}(r';\theta)=H(r^*(\theta)-r')$ the
$r'$-dependence is a sharp front, so its spatial gradient is a delta at
the frontier.  Explicitly, differentiating the step with respect to
$\mathbf{x}$ through its only $\mathbf{x}$-dependence, the frontier
$r^*(\theta(\mathbf{x}))$, and using
$\partial_a H(a)=\delta(a)$,
\begin{equation}
\nabla g^{(0)}(r')=\delta\!\big(r^*-r'\big)\,\nabla r^*
  =\delta(r'-r^*)\,\frac{dr^*}{d\theta}\,\nabla\theta.
\end{equation}
The frontier derivative follows from the defining relation
$\theta=\phi\int_0^{r^*}f\,dr$, which on differentiation gives
$d\theta=\phi f(r^*)\,dr^*$, hence $dr^*/d\theta=1/[\phi f(r^*)]$.
Therefore
\begin{equation}
\nabla g^{(0)}(r')=\frac{\delta(r'-r^*)}{\phi f(r^*)}\,\nabla\theta,
\label{eq:app_stepgrad}
\end{equation}
which collapses the $r'$-integral in~\eqref{eq:Fentangled} onto the
frontier and gives the pore-resolved flux on the equilibrium step,
$\mathbf{F}^{(0)}(r):=\mathbf{F}[g^{(0)}](r)$,
\begin{equation}
\mathbf{F}^{(0)}(r)=-\frac{\kappa(r)}{\phi}\,
  \big[1-2g^{(0)}(r)\big]\,C_\partial(r,r^*)\,
  \Phi(r,r^*)\,\nabla\theta .
\label{eq:app_j}
\end{equation}
The occupancy weight~\eqref{eq:app_gate} must be kept, and it is what
makes the flux well-signed on both sides of the frontier.  On the step
$[1-2g^{(0)}(r)]=-1$ for $r<r^*$ (full: a donor) and $+1$ for $r>r^*$
(empty: a receiver); the driving head flips with it, since
$\Phi(r,r^*)<0$ for $r<r^*$ and $>0$ for $r>r^*$.  The two sign changes
cancel, so the product is positive throughout,
\begin{equation}
\big[1-2g^{(0)}(r)\big]\,\Phi(r,r^*)=\big|\Phi(r,r^*)\big|,
\label{eq:app_absPhi}
\end{equation}
exactly the cancellation already met in the relaxation
rate~\eqref{eq:A}, where $\Phi(r,r')[2g_\eq(r')-1]=|\Phi(r,r')|$.  Hence
\begin{equation}
\mathbf{F}^{(0)}(r)=-\frac{\kappa(r)}{\phi}\,C_\partial(r,r^*)\,
  \big|\Phi(r,r^*)\big|\,\nabla\theta,
\label{eq:app_j2}
\end{equation}
and, through $\Cd=-\nabla\!\cdot\!\mathbf{F}^{(0)}$, the
source~\eqref{eq:source}.  The macroscopic flux is
$\mathbf{q}_{\rm bnd}=\phi\int\mathbf{F}^{(0)}f\,dr$.

\subsection{What the $\kappa$-moment gives, and what it does not}
\label{app:fluxid}
It is worth being explicit about which object is the macroscopic Darcy
flux, because two candidates appear in the derivation and they are
\emph{not} the same.

The macroscopic flux is the $\kappa$-weighted moment of the
\emph{deviation},
\begin{equation}
\mathbf{q}=\phi\!\int_0^\infty \kappa(r)\,g^{(1)}(r)\,f(r)\,dr
   =-K(\theta)\,\nabla\psi ,
\label{eq:app_qtr}
\end{equation}
the second equality being the substitution
$g^{(1)}=-\bm\chi(r)\!\cdot\!\nabla\psi$ of Sec.~\ref{sec:K}: since
$\nabla\psi$ does not depend on $r$ it comes out of the integral,
\begin{equation}
\mathbf{q}=\phi\!\int_0^\infty\!\kappa\,\big[{-}\bm\chi\!\cdot\!\nabla\psi\big]f\,dr
=-\Big[\underbrace{\phi\!\int_0^\infty\!\kappa\,\bm\chi\,f\,dr}_{=\;K(\theta)}\Big]
   \!\cdot\!\nabla\psi ,
\end{equation}
which is precisely the definition~\eqref{eq:K_exact} of $K$.
the throughflow each class carries once the response $g^{(1)}$ is known
(Sec.~\ref{sec:K}).  This is the standard kinetic-theory identification:
transport coefficients are moments of the \emph{first-order correction},
never of the streaming of the local equilibrium, in the gas, the viscous
stress and heat flux are moments of $h_1$, whereas the same moments taken
on the \emph{Maxwellian} (the equilibrium velocity distribution of a gas,
the counterpart of our equilibrium step $g_\eq$) carry no dissipative
content at all~\cite[\S\,V.3]{Cercignani1988}.

The other object, $\mathbf{F}^{(0)}=\mathbf{F}[g^{(0)}]$ of
Eq.~\eqref{eq:app_j2}, is \emph{not} a flux to be read off: it is the
streaming of the local equilibrium, and its divergence is the
\emph{source} of the first-order equation, this is exactly how
Eq.~\eqref{eq:source} was obtained, with
$\nabla\!\cdot\!\mathbf{F}[g^{(0)}]$ standing on the \emph{right}-hand
side of~\eqref{eq:first_order}.  It is the soil-water counterpart of
$\bm\xi\!\cdot\!\nabla f_0$, which drives the CE problem but
is not itself a transport flux.  Contracting it against $f$ and calling
the result the Darcy flux returns precisely the naive conductivity
$K_{\rm naive}$ refuted in App.~\ref{app:chi}: it bypasses $\J^{-1}$, and
with it the entire relaxation spectrum.

\paragraph*{A sharp diagnostic.}
Equation~\eqref{eq:app_qtr} carries a consequence that the naive form
cannot reproduce and that is worth recording.  The CE
constraint~\eqref{eq:constraint} makes $g^{(1)}$ mass-neutral,
$\int g^{(1)}f\,dr=0$.  Hence, were the single-pore conductance
class-independent, $\kappa(r)\equiv\kappa_0$, we would have
\begin{equation}
\mathbf{q}=\phi\,\kappa_0\!\int g^{(1)}f\,dr=0 :
\label{eq:app_kconst}
\end{equation}
a medium whose pores all conduct alike transports nothing at first order,
because the excess the deviation places in some classes is exactly
compensated by the deficit it leaves in others.  Conduction requires the
conductance to \emph{correlate} with the non-equilibrium occupancy, the
Green--Kubo content of~\eqref{eq:K_spectral}, and, in the projector
language of App.~\ref{app:chi}.
 It is the
pore-size dependence of $\kappa$ (the $r^2$ of Hagen--Poiseuille) that
makes the medium conduct at all.  The naive contraction of
$\mathbf{F}^{(0)}$ does not vanish in this limit, which is an independent
way of seeing that it is not the flux.

\section{Spectral properties of $\J$}
\label{app:spectral}

This appendix gives the proofs of the three properties of $\J$ used in
Sec.~\ref{sec:operator}, together with the slaving estimate used in
Sec.~\ref{sec:step2}.  Throughout, $\J$ is the graph-Laplacian
\eqref{eq:graphlap} with conductances~\eqref{eq:Krr}, and the weighted
inner product is
$\langle u,v\rangle_f=\int u\,v\,f\,dr$, the mass measure.
The treatment transcribes Cercignani's
analysis of the linearized collision
operator~\cite[\S IV.1,\S IV.6]{Cercignani1988} to the single-invariant
setting.

Water mass is conserved by the redistribution operator for \emph{any}
distribution: $\int\C[g]\,f\,dr=0$ identically in $g$.  Differentiating
this identity with respect to $g$ in the direction $h$, the order of
the (linear) integration and the Fr\'echet derivative may be exchanged,
giving

\begin{equation}
\frac{d}{d\epsilon}\Big|_{0}
  \int\C[g_\eq+\epsilon h]\,f\,dr=\int\J[h]\,f\,dr=0
\end{equation}
for every $h$.  Thus the constant function $1$ lies in the (left) null
space of $\J$ in the measure $f\,dr$: $\mathrm{coker}\,\J=\ker\J^\dagger
=\mathrm{span}\{1\}$.  On a connected network the graph-Laplacian structure
makes the constants the \emph{only} functions annihilated by the Laplacian
core $\Lop$, so the cokernel is one-dimensional; the kernel of $\J$ itself
sits one stiffness factor away, $\ker\J=\mathrm{span}\{m\}
=\mathrm{span}\{\partial_\theta g_\eq\}$ (Prop.~\ref{prop:selfadj}).  This single
invariant direction is the single redistribution (collision in Cercignani's treatment) invariant, water mass, and is
the origin of the single macroscopic equation.

Recall the conductances in full,
$K_{rr'}=\Mr(r,r';g_\eq)f(r')
 =\kappa_s(r, r')C(r, r')\sqrt{m(r)\,m(r')}\,f(r')$
[Eqs.~\eqref{eq:Mexplicit},~\eqref{eq:Krr}].
With the weight $w$ above, the conductances obey the detailed-balance
symmetry $K_{rr'}f_r=K_{r'r}f_{r'}$, which is exactly the condition for
a weighted graph-Laplacian to be self-adjoint in
$\langle\cdot,\cdot\rangle_f$.  Its quadratic form is the standard
edge-sum
\begin{equation}
\langle u,\Lop u\rangle_f
  =-\tfrac12\sum_{r,r'}K_{rr'}\,f_r\,(u_r-u_{r'})^2\le0,
\end{equation}
nonpositive because every conductance $K_{rr'}\ge0$, and zero only when
$u$ is constant across each connected component.  This is the soil-water
counterpart of Cercignani's
$(h,Lh)=-\tfrac1{4m}\iint f_0 f_{0*}|\Delta h|^2 B\,d\xi\le0$
\cite[Eq.~(IV.1.11)]{Cercignani1988}, with the squared difference
$|\Delta h|^2$ across a collision playing the role of $(u_r-u_{r'})^2$
across a network edge.

Self-adjointness gives a real spectrum and an orthonormal eigenbasis;
the sign result orders it as $\lambda_0=0>\lambda_1\ge\lambda_2\ge\cdots$,
with $\lambda_0=0$ the mass mode.  Restricted to the orthogonal
complement of the kernel in $\langle\cdot,\cdot\rangle_{f/m}$ (note
$\langle h,m\rangle_{f/m}=\int hf\,dr$), $\{h:\int hf\,dr=0\}
=(\ker\J)^{\perp}$, every
eigenvalue is strictly negative, so $\J$ is invertible there with
\begin{equation}
\|\J^{-1}\|=\frac{1}{|\lambda_1|}=\tau_{\rm relax},
\end{equation}
the slowest relaxation time (Prop.~\ref{prop:gap}).  The bounded inverse
on $(\ker\J)^\perp$ is what makes the first-order
equation~\eqref{eq:first_order} solvable and the CE expansion
asymptotic.

Finally we justify dropping $\partial_t g^{(1)}$ from the first-order
equation.  Because $g^{(1)}$ depends on space and time only through
$\theta$ and $\nabla\theta$ [the ansatz~\eqref{eq:ansatz}], its time
derivative is set by the slow macroscopic evolution,
$\partial_t g^{(1)}\sim(\partial g^{(1)}/\partial\theta)\,
\partial_t\theta=O(\Da)$ in the non-dimensional units of
Sec.~\ref{sec:Da}.  Hence the term carrying it in the singularly scaled
equation~\eqref{eq:singular} is $\Da\,\partial_t g^{(1)}=O(\Da^2)$, one
order beyond the $O(\Da^0)$ balance that defines $g^{(1)}$, and is
consistently neglected.  This is the precise statement that, after a few
relaxation times $\tau_{\rm relax}$, $g^{(1)}$ is \emph{slaved} to the
instantaneous macroscopic state.  The estimate degrades as
$\lambda_1\to0$ ($\tau_{\rm relax}\to\infty$, $\Da\to\infty$) at the
percolation threshold, where slaving, and with it the whole
reduction, fails.

\section{Mean-field and serial-path approximations}
\label{app:chi}
\textbf{Mean-field.}  Setting $\mathcal{B}\approx0$ in~\eqref{eq:chi},
$\bm\chi_{\rm mf}=S_0 C_w/\mathcal{A}$, and
$k_{\rm mf}(r)=\kappa(r)\bm\chi_{\rm mf}(r)=
r^2\rho_w\mathrm{g}\,a_w(r)\bar C(r)H(r^*\!-r)/(8\mu\bar\tau^2)$, giving
$K_{\rm mf}$~\eqref{eq:K_mf}: each filled, connected pore contributes its
Hagen--Poiseuille throughflow independently, the independent-pore
(Burdine) estimate.

\textbf{Serial-path.}  The off-diagonal $\mathcal{B}$ couples classes.
Under random pairing $C=C_0$ the Neumann series
\begin{equation}
\bm\chi=-\sum_{k\ge0}(\mathcal{A}^{-1}\mathcal{B})^k\mathcal{A}^{-1}
[S_0C_w]
\end{equation}
resums to the harmonic mean of pore conductances along a serial
flow path of $n=\delta x/\bar L$ pores, replacing $\langle r^2\rangle$
by $\langle r^{-2}\rangle^{-1}$ and giving~\eqref{eq:K_sp}.  For a
log-normal PSD the ratio is
$\langle r^{-2}\rangle^{-1}/\langle r^2\rangle=\exp(-4\sigma^2)$,
Mualem's heterogeneity penalty.

\textbf{Why no moment can substitute for the inversion.}
We give here a second demonstration that the failure of the naive
identification is not a matter of accuracy but
of \emph{support}, and it is sharpest when stated with the null
projector.  Because $\int f\,dr=1$, let
\begin{equation}
\mathsf{P}\,u:=\Big(\int_0^\infty u\,f\,dr\Big)\cdot 1,
\qquad \mathsf{Q}:=\mathcal{I}-\mathsf{P},
\label{eq:app_proj}
\end{equation}
where $\mathcal{I}$ is the identity, $\mathsf{P}$ maps a function $u(r)$ to
the constant equal to its $f$-average, and $\mathsf{Q}$ retains the
fluctuating remainder.  Both are dimensionless (they act on $u$ without
changing its units), and both are idempotent,
$\mathsf{P}^2=\mathsf{P}$, $\mathsf{Q}^2=\mathsf{Q}$, $\mathsf{PQ}=0$.
Thus $\mathsf{P}$ is the projector onto $\ker\J=\mathrm{span}\{1\}$:
\emph{taking the $f$-moment and projecting onto the null space of $\J$
are the same operation}.  Three facts follow.

\emph{(i) The null projection annihilates the operator.}  $\J$ is
self-adjoint with $\ker\J=\mathrm{span}\{1\}$
(Prop.~\ref{prop:selfadj}), so
\begin{equation}
\mathsf{P}\J=\J\,\mathsf{P}=0,
\qquad \J=\mathsf{Q}\,\J\,\mathsf{Q},
\label{eq:app_PJ}
\end{equation}
i.e.\ $\J$ is supported entirely on 
\begin{equation}
W:=\mathrm{ran}\,\mathsf{Q}
=(\ker\J)^\perp 
\end{equation}
the orthogonal complement of the kernel in the mass measure, the space of
zero-mean fluctuations, $W=\{u:\int u f\,dr=0\}$,so $\J$ is invisible to
$\mathsf{P}$.

\emph{(ii) Both the source and the response live in $W$.}  The
solvability condition of Sec.~\ref{sec:solvability} is precisely
$\mathsf{P}S=0$, and the CE constraint~\eqref{eq:constraint} is
precisely $\mathsf{P}g^{(1)}=0$.

\emph{(iii) Hence the $f$-moment of the first-order equation is empty.}
(Here $S$ denotes the first-order source of Eq.~\eqref{eq:source}, the
divergence of the equilibrium flux.)
Applying $\mathsf{P}$ to $\J[g^{(1)}]=S$ returns, by (i)--(ii), the
identity $0=0$: it reproduces the solvability condition, mass
conservation, and \emph{nothing else}.  The macroscopic budget is
exactly the component of the first-order balance that the null
projection retains, and it carries no transport information whatever.
All of $K$ resides in the complementary block, on which
\begin{equation}
\J^{-1}\big|_W=\sum_{n\ge1}\frac{1}{\lambda_n}\,
  \varphi_n\otimes\varphi_n\;\neq\;\mathcal{I},
\label{eq:app_Jinv}
\end{equation}
in words: on $W$ the inverse operator acts mode by mode, projecting onto
each eigenvector $\varphi_n$ and dividing by its rate $\lambda_n$, so that
each mode is weighted by its own relaxation time $\tau_n=1/\lambda_n$.
The $n=0$ term is absent because $\lambda_0=0$ is excluded from $W$: that
is the conserved mode, which never relaxes.  Crucially $\J^{-1}|_W\neq
\mathcal{I}$, so replacing the inversion by the identity,the naive
shortcut, discards exactly the relaxation times that carry the
transport.
Equivalently, in the operator language of App.~\ref{app:schur}, $K$ is
the Schur complement $\mathsf{P}\mathsf{T}\mathsf{Q}\,
(\J|_\mathsf{Q})^{-1}\,\mathsf{Q}\mathsf{T}\mathsf{P}$: the gradient must
be carried \emph{out} of the kernel by $\mathsf{T}$, relaxed by
$(\J|_\mathsf{Q})^{-1}$ in $W$, and carried back, a round trip through
$W$ that no moment taken in $\ker\J$ can shortcut.

Consequently only the kernel-orthogonal parts of the conductance and of
the source contribute,
\begin{equation}
K=\phi\,\big\langle\,\mathsf{Q}\kappa\;\big|\;\J^{-1}\;\big|\;
  \mathsf{Q}(S_0C_w)\,\big\rangle_f ,
\label{eq:app_KQ}
\end{equation}
the constant (mass) component of $\kappa$ pairing with a $W$-vector and
dropping out.  The naive identification
$K_{\rm naive}=\phi\langle\kappa\,|\,S_0C_w\rangle_f$ amounts to
replacing $\J^{-1}|_W$ by $\mathcal{I}$,assigning every relaxation
mode the same unit rate, and so discards the spectrum
$\{\lambda_n\}$ altogether.  It is not an approximation of $K$ but a
different functional: it is not even dimensionally a conductivity, since
$\J^{-1}$ carries the relaxation time that the substitution deletes; and
it stays finite as $\lambda_1\to0$, whereas the true $K$
diverges~\eqref{eq:K_spectral}.  The naive form is therefore blind to
the breakdown the reduction exists to detect.

\section{Band projection and the exchange coefficient}
\label{app:bands}

Let $\mathsf{B}=\{r>r_\times\}$, $\mathsf{T}=\{r<r_\times\}$ split the
spectrum at the valley $r_\times$.  Decompose
$\C=\C_{bb}+\C_{tt}+\C_{bt}+\C_{tb}$, where $\C_{ab}$ moves water from
band $b$ to band $a$; mass conservation gives
$\int_\mathsf{B}\C_{bt}f\,dr=-\int_\mathsf{T}\C_{tb}f\,dr$.  The
intra-band operators inherit the gap $|\lambda_1^{b}|,|\lambda_1^{t}|$
and define band Damk\"ohler numbers $\Da_b,\Da_t\ll1$; the cross-band
operator carries the small rate $1/\tau_t$ and is $O(\Lambda)$ relative
to the body band.  Applying the CE reduction within each band slaves
$g|_\mathsf{B}\to g_\eq^b(\theta_b)$, $g|_\mathsf{T}\to
g_\eq^t(\theta_t)$, each with its frontier $r^*_b, r^*_t$, retention
$\psi_b,\psi_t$, and conductivity $K_b,K_t$ from the band-restricted
\eqref{eq:K_mf}.  The $f$-weighted moment over each band gives the
coupled system~\eqref{eq:dpb}--\eqref{eq:dpt}, with the exchange
\begin{align}
\Gamma_w&=\phi\int_\mathsf{B}\C_{bt}[g_\eq^b,g_\eq^t]\,f\,dr\notag\\
&=\frac{\rho_w\mathrm{g}}{\bar L^2}
  \iint_{\mathsf{B}\times\mathsf{T}}\!\!\kappa_s C\,f f'
  \big[\mu_w^t-\mu_w^b\big]\,dr\,dr'/(\rho_w\mathrm{g}),
\end{align}
where the square brackets are ordinary grouping: $[\mu_w^t-\mu_w^b]$ is
the thermodynamic force of Eq.~\eqref{eq:jsplit}, the potential difference
between a throat class $r'\in\mathsf{T}$ and a body class
$r\in\mathsf{B}$, carried by the pair mobility $\kappa_s C f f'$.  The
expression is linear in that difference,
$\mu_w^t-\mu_w^b=\rho_w\mathrm{g}(\psi_t-\psi_b)$ to leading order,
giving the first-order law $\Gamma_w=\alpha_w(\psi_t-\psi_b)$ with
$\alpha_w$ as in~\eqref{eq:alpha}.  The band potentials are not free
labels: each is set by its own band frontier, $\psi_a=-2\gamma\cos\theta_c
/(\rho_w\mathrm{g}\,r^*_a)$ with $r^*_a$ the frontier radius of band
$a\in\{\mathsf{B},\mathsf{T}\}$ fixed by that band's water content through
its own retention curve~\eqref{eq:psi}.  Linearization is controlled by the intra-band
$\Da$; at $\Da\sim1$ the exchange becomes nonlinear in
$(\psi_t-\psi_b)$, the kinetic generalization of the Gerke--van~Genuchten
transfer.  The $N$-band case decomposes $\C$ into $N$ diagonal blocks
plus an off-diagonal exchange matrix $\{\C_{ij}\}$, giving $N$ Richards
equations coupled by $\alpha_{ij}$.

\section{The linearized operator the classical way:
a step-by-step Fr\'echet construction}
\label{app:cercignani}

The main text builds $\J$ from the conservative form of $\C$
(Sec.~\ref{sec:operator}).  Readers coming from kinetic theory will
expect  anoother, older construction, the one Cercignani uses for the
Boltzmann collision operator, in which the gain and the loss are kept
separate.  We first exhibit the explicit redistribution current they
assemble into, the origin of the occupancy gates that the main-text
mobility carries, and then differentiate it term by term to recover the
operator.  Done consistently, with the entropy of mixing carried inside the
potential, as below, the route reaches the same $\J$; it is more concrete:
every term keeps a transparent meaning on the pore network.  Because it
is the one a hydrologist or soil scientist is least likely to have met,
we spell out every step and assume no prior kinetic-theory background.

\subsection*{The exclusion current: where the gates of $\Mr$ come from}
The main text carries the redistribution current $j=\Mr\,[\mu_w(r')-
\mu_w(r)]$~\eqref{eq:jsplit} and notes only that the occupancy gates live
inside the mobility $\Mr$.  Here is where they come from.  Assemble the
current from the gain and loss without yet linearizing.  The
redistribution is $\C=\mathcal{G}_{\rm int}-\Lg_{\rm int}$; both
pieces~[Eqs.~\eqref{eq:G}--\eqref{eq:L}] integrate over $r'$ against the
common measure $C(r,r')\,f(r')\,dr'$, so sliding the $r$-only gates
$[1-g(r)]$ and $g(r)$ inside and subtracting collects everything under one
integral,
\begin{widetext}
\begin{equation}
\C[g](r)=\int_0^\infty \kappa(r,g)\,C(r,r')\,
   \big[\,\Phi(r,r')\,g(r')(1{-}g(r))
        -\Phi(r',r)\,g(r)(1{-}g(r'))\,\big]\,f'\,dr'.
\label{eq:app_GminusL}
\end{equation}
\end{widetext}
Comparison with $\C=\int j\,f'\,dr'$~\eqref{eq:jcurrent} reads the
integrand off as $j$.  Two cosmetic moves bring it to standard form.
First, write the signed drivings as \emph{directional weights}
$\omega_{r'\to r}:=\Phi(r,r')$ and $\omega_{r\to r'}:=\Phi(r',r)=
-\,\omega_{r'\to r}$, so that each hop carries the head with the sign of
its own direction.
Second, replace the one-sided rate $\kappa(r,g)$,which would spoil the
antisymmetry, by the symmetric harmonic-mean pair conductance
$\kappa_s(r,r')$ that the throat joining the two pores actually has.  The
result is the \emph{exclusion current}
\begin{widetext}
\begin{equation}
j[g](r,r')=\kappa_s(r,r')\,C(r,r')\,
  \big[\,\omega_{r'\to r}\,g(r')[1-g(r)]
        -\omega_{r\to r'}\,g(r)[1-g(r')]\,\big].
\label{eq:jexpl}
\end{equation}
\end{widetext}
It is antisymmetric by construction: under $r$ exchanges with $r'$ and viceversa, the
symmetric $\kappa_s$ and $C$ are unchanged while the two $\omega$ and the
two occupancy products exchange, flipping the bracket; this is the
algebraic form of water conservation.  Matching to the
Onsager form $j=\Mr[\mu_w(r')-\mu_w(r)]$ identifies the mobility: the
\emph{asymmetry} of $\omega_{r'\to r}$ against $\omega_{r\to r'}$ is the
force $\propto\Phi$, and their symmetric envelope of rates and gates is
$\Mr$.  The single donor/receiver gate $g(r')[1-g(r)]$ becomes the symmetric
geometric-mean gate $\sqrt{m\,m'}$ of Eq.~\eqref{eq:Mexplicit} once the
entropy of mixing is carried inside $\mu_w=\mu_0+\psi_T\ln[g/(1-g)]$ rather
than inside the rates $\omega$, \emph{and} the pair rates are taken of
Arrhenius (heat-bath) form; the correspondence between rate law and gate
content is made precise below.

The redistribution operator $\C[g]$ is nonlinear in the filling field
$g(r)$: it contains products like $g(r')[1-g(r)]$ (a pore of radius $r'$
can only \emph{send} water to a pore of radius $r$ if the first is full
and the second has room).  We want its behaviour for states that sit
\emph{close} to the equilibrium step $g_\eq$.  Write the state as the
step plus a small disturbance,
\begin{equation}
g(r)=g_\eq(r)+\epsilon\,h(r),
\label{eq:app_perturb}
\end{equation}
with $\epsilon$ a bookkeeping smallness parameter and $h$ the shape of the
disturbance.  Substituting into $\C$ and keeping only the part that grows
linearly with $\epsilon$ defines the \emph{linearized operator}
\begin{equation}
\J[h]:=\left.\frac{d}{d\epsilon}\,\C[g_\eq+\epsilon h]\right|_{\epsilon=0}
      \equiv\C'\big|_{g_\eq}[h].
\label{eq:app_frechet}
\end{equation}
This is exactly the operation a hydrologist already performs when
linearizing a nonlinear storage or conductivity term around an operating
point, here the ``operating point'' is the whole equilibrium profile
$g_\eq$, and the derivative is taken with respect to a function rather
than a number (a Fr\'echet derivative).  The zeroth-order part vanishes
because $g_\eq$ is an equilibrium ($\C[g_\eq]=0$); the first-order part is
$\J[h]$; higher orders are dropped.

{\subsection*{Carry the entropy in the potential, then perturb}
As written, the driving in~\eqref{eq:jexpl} is the capillary head alone, and
the step is \emph{not} an equilibrium of that current: for a pair straddling
the frontier ($r<r^*<r'$) the second gate product $g(r)[1-g(r')]$ is open and
$j\neq0$.  Differentiating~\eqref{eq:jexpl} term by term with $\Phi$ held
fixed therefore does \emph{not} produce the operator of the main text: it
produces an operator with no $1/m$ stiffness, a sign-indefinite diagonal, and
growing modes.  The missing ingredient is the one the main text builds in
from the start: the entropy of mixing belongs inside the potential.  Replace
the bare capillary head by the full lattice-gas potential,
\begin{equation}
\Phi[g](r,r')=\frac{\mu_w[g](r')-\mu_w[g](r)}{\rho_w\mathrm{g}},\quad
\mu_w[g]=\mu_{\rm cap}+\psi_T\ln\tfrac{g}{1-g},
\label{eq:app_Phistate}
\end{equation}
so that the driving is state-dependent.  On the resolved equilibrium band the
total potential is uniform, $\mu_w[g_\eq]=\lambda$, hence
$\Phi[g_\eq]\equiv0$ \emph{pairwise}: strong detailed balance, now genuinely.}

{\subsection*{Differentiate: the gate terms die, the stiffness survives}
Perturb $g=g_\eq+\epsilon h$ in~\eqref{eq:jexpl}, with the state-dependent
driving~\eqref{eq:app_Phistate}, and apply the product rule.  Every term in
which the \emph{gates} are differentiated carries the undifferentiated
driving $\Phi[g_\eq]=0$ and drops out.  The only surviving term
differentiates the driving itself,
\begin{equation}
\delta\Phi=\frac{\delta\mu_w(r)-\delta\mu_w(r')}{\rho_w\mathrm{g}},
\qquad
\delta\mu_w=\frac{\psi_T}{m}\,h ,
\label{eq:app_dPhi}
\end{equation}
and this is where the factor $1/m$ enters, once for $h(r)$ (the diagonal)
and once for $h(r')$ (the off-diagonal), exactly as in Eq.~\eqref{eq:J}.
The gate products, frozen at equilibrium, remain behind as the symmetric
envelope multiplying $\delta\Phi$.

\subsection*{The envelope depends on the rate law; the structure does not}
The surviving envelope is the equilibrium one-way exchange flux $\omega$ of
Eq.~\eqref{eq:omega}, and its gate content depends on which microscopic rate
law the exclusion current encodes:
\begin{itemize}
\item a rate \emph{linear} in the driving with the single exclusion gates,
as literally written in~\eqref{eq:jexpl}, leaves the arithmetic combination
$g'_\eq(1-g_\eq)+g_\eq(1-g'_\eq)$;
\item \emph{Arrhenius (heat-bath)} hopping rates,
$w_{r'\to r}\propto\kappa_s\,e^{-[\mu_w(r)-\mu_w(r')]/2\psi_T}$, leave the
geometric mean $\sqrt{m\,m'}$, the mobility adopted in
Eq.~\eqref{eq:Mexplicit};
\item the double product $m\,m'$ arises only if the intrinsic pair
conductance is itself occupancy-gated, which counts the exclusion twice.
\end{itemize}
All members of this detailed-balance family share the structure
$\J=\Lop\circ(\psi_T/m)$, the kernel, the cokernel, and the
self-adjointness statements of Prop.~\ref{prop:selfadj}; they differ in the
spectrum, and the band-edge argument of Sec.~\ref{sec:gap} is what selects
the geometric mean.  The classical route and the conservative route are
therefore the same construction once the same rate law is used in both; what
the component form adds is the transparent origin of the gates, of the
stiffness, and of the family.}

\section{The ordering of the time derivative}
\label{app:dtorder}
Section~\ref{sec:K} removes $\partial_t g^{(0)}$ from the first-order
equation ``by the ordering, not by neglect.''  Here is that ordering in
full, and the place the derivative returns.

\paragraph{The field is slow.}
Integrating the continuum kinetic equation~\eqref{eq:contKE2} against
$f\,dr$ and using mass conservation~\eqref{eq:bulk}, $\int\C f\,dr=0$,
gives the \emph{exact} balance
\begin{equation}
\partial_t\theta+\nabla\!\cdot\!\mathbf{q}=0,\qquad
\theta=\phi\!\int g\,f\,dr,\quad \mathbf{q}=\phi\!\int\mathbf{F}\,f\,dr .
\label{eq:G_mass}
\end{equation}
Every mechanism that moves $\theta$ is $O(\Da)$, so $\theta$ evolves on
the slow clock $T=\Da\,t$~\eqref{eq:slowtime}, with
$\partial_t=\Da\,\partial_T$ on the equilibrium manifold (the fast time is
spent in the initial layer).

\paragraph{Expand every term.}
Insert $\partial_t=\Da\,\partial_T$ into~\eqref{eq:contKE2} and
$g=g^{(0)}+\Da g^{(1)}+\Da^2 g^{(2)}+\cdots$.  With $\J:=\C'|_{g^{(0)}}$,
and $\mathbf{F}$ linear in $g$,
\begin{subequations}
\begin{align}
\C[g]&=\C[g^{(0)}]+\Da\,\J[g^{(1)}]\notag\\
   &\quad+\Da^2\big(\J[g^{(2)}]+\tfrac12\C''[g^{(0)}][g^{(1)},g^{(1)}]\big)+\cdots,\\
\nabla\!\cdot\!\mathbf{F}[g]&=\nabla\!\cdot\!\mathbf{F}[g^{(0)}]
   +\Da\,\nabla\!\cdot\!\mathbf{F}[g^{(1)}]+\cdots,\\
\Da\,\partial_T g&=\Da\,\partial_T g^{(0)}+\Da^2\partial_T g^{(1)}+\cdots .
\end{align}
\end{subequations}
The time term begins at $O(\Da)$: that single fact is the discharge.

\paragraph{Collect powers of $\Da$.}
\begin{subequations}
\begin{align}
O(\Da^{-1}):&\quad \C[g^{(0)}]=0\ \Rightarrow\ g^{(0)}=g_\eq(r;\theta),\\
O(\Da^{0}):&\quad \J[g^{(1)}]=\nabla\!\cdot\!\mathbf{F}[g^{(0)}]
   \quad(\text{no }\partial_T g^{(0)}\text{ term}),\label{eq:G_first}\\
O(\Da^{1}):&\quad \J[g^{(2)}]+\tfrac12\C''[g^{(1)},g^{(1)}]
   =\partial_T g^{(0)}+\nabla\!\cdot\!\mathbf{F}[g^{(1)}].\label{eq:G_second}
\end{align}
\end{subequations}
The first-order equation~\eqref{eq:G_first},which is~\eqref{eq:first_order}
with the $O(\Da)$ time term absent and its source~\eqref{eq:source}
proportional to $\nabla\theta$,fixes $g^{(1)}$; the discharged derivative
sits one order higher, in~\eqref{eq:G_second}.  Equivalently, by the chain
rule, $\partial_t g^{(0)}=(\partial_\theta g_\eq)\,\partial_t\theta
=\Da\,(\partial_\theta g_\eq)\,\partial_T\theta=O(\Da)$, one order below the
$O(1)$ source.

\paragraph{Where it returns.}
Expand the exact balance~\eqref{eq:G_mass} on the slow clock,
$\mathbf{q}=\mathbf{q}^{(0)}+\Da\,\mathbf{q}^{(1)}+\cdots$:
\begin{equation}
O(\Da^0):\ \nabla\!\cdot\!\mathbf{q}^{(0)}=0,\qquad
O(\Da^1):\ \partial_T\theta+\nabla\!\cdot\!\mathbf{q}^{(1)}=0 .
\label{eq:G_return}
\end{equation}
The leading balance is the solvability condition itself: $\mathbf{q}^{(0)}$
is the $\kappa$-moment of the \emph{equilibrium} state $g^{(0)}$, and the
Fredholm condition of Sec.~\ref{sec:solvability} requires precisely that
its divergence carry no net water, $\nabla\!\cdot\!\mathbf{q}^{(0)}=0$;
equivalently, at $O(\Da^0)$ there is no slow time yet for $\theta$ to
evolve on.
With the response $g^{(1)}=-\boldsymbol{\chi}\!\cdot\!\nabla\psi$
[Sec.~\ref{sec:K}] and $\mathbf{q}^{(1)}=-K(\theta)\,\nabla\psi$, the
second reads $\partial_T\theta=\nabla\!\cdot\![K\nabla\psi]$,Richards'
equation.  The removed derivative did not vanish: it dropped one order and
became the macroscopic law.

\paragraph{The loop closes.}
Projecting~\eqref{eq:G_first} onto $\ker\J=\mathrm{span}\{1\}$ and using
self-adjointness ($\J[1]=0$) forces the source orthogonal to the mass
mode, $\int\nabla\!\cdot\!\mathbf{F}[g^{(0)}]\,f\,dr
=\nabla\!\cdot\!\mathbf{q}^{(0)}=0$.  The solvability of the equation from
which $\partial_t g^{(0)}$ was discharged is thus the leading balance
of~\eqref{eq:G_return}, and the discharged derivative is its next order:
the two are the same conservation statement read at consecutive orders.
The single-time (Euler-substitution) form of this computation is the one
carried out in App.~\ref{app:cercignani}.


\section{The reduction in one step: the Schur complement}
\label{app:schur}
The order-by-order hierarchy of Sec.~\ref{sec:hierarchy} has a compact
operator counterpart that uses the continuum form of $\J$ directly.
Write the closed continuum kinetic equation~\eqref{eq:contKE2} (sinks
suppressed) as 
\begin{equation}
\partial_t g=\Da^{-1}\C[g]-\mathsf{T}[g]
\end{equation}
with the
transport operator $\mathsf{T}:=\nabla\!\cdot\!\mathbf{F}$.  Linearizing
both currents about $g_\eq(r;\theta(\mathbf{x}))$ gives the two-scale
generator on $\mathcal{H}=L^2_f(r)\otimes L^2(\Omega)$,
\begin{equation}
\partial_t u=\mathbb{L}\,u-\partial_t g_\eq,\qquad
\mathbb{L}:=\tfrac{1}{\Da}\,\J-\mathsf{T}.
\label{eq:Lgen}
\end{equation}
Here $\J$ is block-diagonal in $\mathbf{x}$ (local, self-adjoint,
$\ker\J=\mathrm{span}\{1\}$, gap $\lambda_1$), so $\Da^{-1}\J$ is fast;
$\mathsf{T}$ couples REVs and is slow.  Let $\mathsf{P}(\mathbf{x})$
project onto $\ker\J$ in $\langle\cdot,\cdot\rangle_f$ and
$\mathsf{Q}=\mathbb{I}-\mathsf{P}$.  Slaving the fast part on the slow
time $T=\Da\,t$ (where $\J|_\mathsf{Q}$ is invertible while the gap is
open),
\begin{equation}
\mathsf{Q}u=\Da\,(\J|_\mathsf{Q})^{-1}\,
   \mathsf{Q}\mathsf{T}\mathsf{P}\,(\mathsf{P}u)+O(\Da^2)=\Da\,g^{(1)}+\cdots,
\label{eq:slave}
\end{equation}
where $\mathsf{Q}\mathsf{T}\mathsf{P}$ acting on $\theta$ is precisely the inter-REV source $S$ of Eq.~\eqref{eq:source}, so \eqref{eq:slave} is the response function~$\bm\chi$ by another route; substituting into the $\mathsf{P}$-equation closes the slow dynamics
with the \emph{Schur complement of the fast block},
\begin{equation}
\mathbb{L}_{\rm eff}=-\mathsf{P}\mathsf{T}\mathsf{P}
   -\Da\,\mathsf{P}\mathsf{T}\mathsf{Q}\,(\J|_\mathsf{Q})^{-1}\,
    \mathsf{Q}\mathsf{T}\mathsf{P}.
\label{eq:schur}
\end{equation}
With $\mathsf{P}u\leftrightarrow\theta$ this is Richards' equation: the
first term is advection/gravity, the second is diffusion with
coefficient
\begin{equation}
K=\phi\,\langle\kappa\,|\,\J^{-1}\,|\,S_0 C_w\rangle_f
 =\phi\sum_{n\ge1}\frac{\langle\kappa,\varphi_n\rangle_f\,
   \langle\varphi_n,S_0 C_w\rangle_f}{\lambda_n},
\label{eq:Kschur}
\end{equation}
identical to the response result~\eqref{eq:K_exact} and to the eigenmode
form of Sec.~\ref{sec:eigenmodes}.  The hierarchy expands the
\emph{equation} order by order; the Schur complement projects the
\emph{operator} once; they agree at leading order because both extract
the slow spectral branch of $\mathbb{L}$.  Two features are sharp in this
form.  The projector $\mathsf{P}$ depends on $\mathbf{x}$ through
$\theta$, so $[\mathsf{P},\nabla]\neq0$ is a connection whose curvature
$[\mathsf{T},\mathsf{P}]$ is the operator home of the non-Fickian
(dynamic-capillarity, hysteretic) corrections, vanishing for a spatially
uniform frontier.  And the reduction exists only while
$(\J|_\mathsf{Q})^{-1}$ is bounded: as $\theta\to\theta_c$ the gap closes,
the slow branch merges with the fast one, and~\eqref{eq:schur}
diverges, the breakdown of Prop.~\ref{prop:gap}, read as the loss of a
spectral gap in the global generator $\mathbb{L}$.


\end{document}


\title{Supplemental Material for\\
\emph{Richards' equation as a hydrodynamic limit}:\\
numerical support for the redistribution spectrum and its band structure}
\author{Riccardo Rigon}
\email{riccardo.rigon@unitn.it}
\affiliation{Centro Agricoltura Alimenti Ambiente (C3A), University of Trento,
38098 San Michele all'Adige (TN), Italy}
\date{September 3, 2026}

\begin{abstract}
This Supplement provides the numerical support for the three structural properties of the linearized redistribution operator $\J$ that the reduction of the main text relies on: (S1) $\J$ has a single invariant (water mass), a spectral gap, and radius-graded, localized modes; (S2) the gap closes exactly at the percolation threshold, where the hydraulic conductivity diverges; and (S3) a bimodal pore-size distribution splits the spectrum into two bands separated by a slow, sign-changing cross-band mode, the microscopic origin of the dual-permeability exchange coefficient. Every quantity is fixed by the pore-size distribution, the connectivity, and Hagen--Poiseuille; a single executed notebook reproduces all numbers, tables, and figures.
\end{abstract}

\maketitle

\noindent
This Supplement uses the notation of the main text; Table~\ref{tab:symbols}
collects the symbols and their units, and App.~\ref{app:glossary} defines
the terms of art (\emph{active band}, \emph{Fiedler value}, \emph{Galerkin
projection}, and so on) used below. Figures are drawn from the accompanying
notebook and are placed beside the tables they illustrate; every table is
referred to at the point in the text where its numbers are discussed.

\begin{table*}[t]
\centering
\renewcommand{\arraystretch}{1.25}
\begin{tabular}{@{}llll@{}}
\toprule
\textbf{symbol} & \textbf{meaning} & \textbf{units} & \textbf{where} \\
\midrule
$r,\,r'$ & pore-class radius & $\mu$m (SI: m) & S1--S3 \\
$f(r)$ & pore-size density (Kosugi/log-normal), $\int f\rd=1$ & m$^{-1}$ (per unit $r$) & S1--S3 \\
$r_m,\,\sigma$ & median radius and log-width of $f$ & $\mu$m,\, & S1--S3 \\
$w_C$ & connectivity width in $\ln r$, $C(r,r')=e^{-(\ln r-\ln r')^2/2w_C^2}$ &, (dimensionless) & S1, S3 \\
$g_\eq(r)$ & equilibrium filling fraction of class $r$ &, ($0\le g_\eq\le1$) & S1--S3 \\
$\psi_T$ & configurational ``temperature'' smoothing the packing step &, (in units of $\mu_w$) & S1 \\
$\theta,\,\phi$ & volumetric water content and porosity & m$^3$m$^{-3}$ & S1--S3 \\
$\theta_c$ & percolation threshold water content & m$^3$m$^{-3}$ & S2 \\
$\psi$ & matric head (suction), Young--Laplace of $r$ & m (or kPa) & S1--S3 \\
$\kappa(r)$ & Hagen--Poiseuille pore conductance, $\propto r^2$ & (rel.) & S1, S3 \\
$\kappa_s$ & harmonic-mean pair conductance & (rel.) & S1, S3 \\
$C(r,r')$ & connectivity kernel (dimensionless) &, & S1, S3 \\
$K(r,r')$ & symmetric exchange kernel, Eq.~\eqref{eq:Kdisc} & s$^{-1}$ & S1, S3 \\
$\Mr(r,r')$ & conductance prefactor of $K$ (geometric-mean gates $\times$ $\kappa_s C$) & s$^{-1}$ & S1, S3 \\
$m=g_\eq(1-g_\eq)$ & active-band weight; physical rate $=\A/m$ &, & S1, S3 \\
$\J$ & linearized redistribution operator, $\J=-\A+\mathcal{B}$ & s$^{-1}$ & S1--S3 \\
$\A(r)$ & local loss rate (diagonal of $\J$), row sum of $K$ & s$^{-1}$ & S1, S3 \\
$\mathcal{B}(r,r')$ & off-diagonal redistribution kernel of $\J$ & s$^{-1}$ & S1, S3 \\
$\varphi_n,\,\lambda_n$ & eigenmode and its relaxation rate, $\J\varphi_n=-\lambda_n\varphi_n$ &, ,\ s$^{-1}$ & S1--S3 \\
$\lambda_0=0$ & invariant (water mass) & s$^{-1}$ & S1 \\
$\lambda_1$ & spectral gap / Fiedler value & s$^{-1}$ & S1--S3 \\
$\langle\cdot,\cdot\rangle_f$ & mass inner product, $\int u\,v\,f\rd$ &, & S1, S3 \\
$\mathrm{PR}_n$ & participation ratio of mode $n$, Eq.~\eqref{eq:ipr} &, (no.\ of classes) & S1 \\
$L=D-W$ & combinatorial graph Laplacian &, & S2 \\
$\hat L=I-D^{-1/2}WD^{-1/2}$ & size-normalized graph Laplacian &, & S2 \\
$K_{\rm Stokes}$ & Stokes permeability of the spanning cluster & m$^2$ & S2 \\
$\lambda_{\rm 2band}$ & Galerkin two-domain exchange rate & s$^{-1}$ & S3 \\
$\Gamma_w$ & dual-permeability inter-domain exchange coefficient & s$^{-1}$ & S3 \\
\bottomrule
\end{tabular}
\caption{Table of symbols. Rates are reported in one convention
throughout (the mobility of the main text, Eq.~(M-explicit)), so the
$\lambda_n$ of S1 and S3 are directly comparable; only \emph{ratios} of
rates are independent of this choice. ``(rel.)'' marks quantities used only
in ratios, whose overall scale cancels. Radii are quoted in $\mu$m and
converted to suction $\psi$ by Young--Laplace, Eq.~\eqref{eq:YL}.}
\label{tab:symbols}
\end{table*}

\noindent\textbf{Scope and roadmap.}
The reduction in the main text rests on properties of the linearized
redistribution operator $\J$ that are argued analytically; this Supplement
tests them by direct computation, and exhibits the band structure behind
the dual-permeability result.  Every computation is parameter-free in the
physical sense, each quantity is fixed by the pore-size distribution, the
connectivity, and Hagen--Poiseuille, and the scripts are self-contained.

\begin{table*}[t]
\centering
\renewcommand{\arraystretch}{1.35}
\begin{tabular}{@{}lll@{}}
\toprule
\textbf{Section} & \textbf{main-text claim tested} & \textbf{result} \\
\midrule
S1 & $\J$ has one exact invariant, a gap, graded modes
   & confirmed; radius label to $1\%$ \\
S2 & the gap closes at percolation, where $K$ diverges
   & gap and permeability vanish together at $\theta_c$ \\
S3 & a bimodal PSD splits the spectrum into two bands
   & gap $\sim\!10\times$ scale separation; slow cross-band mode \\
\bottomrule
\end{tabular}
\caption{Scope and roadmap: the three main-text claims tested here and the
outcome of each.}
\label{tab:scope}
\end{table*}

\noindent
As Table~\ref{tab:scope} summarizes, s1 diagonalizes the operator of the paper itself; S2 builds a
three-dimensional pore network and measures the gap and the Stokes
permeability as the medium drains; S3 shows that the bimodal spectrum
underlying the multi-band section of the main text, the origin of the
dual- and multiple-permeability models, is a genuine gap, with the slow
mode the inter-band exchange whose coefficient the main text computes from
connectivity.

\paragraph{Which soils these examples represent.}
Every example below is specified by a pore-size distribution, so it is
worth saying at the outset which real soils those distributions describe.
The translation is Young--Laplace: a pore of radius $r$ drains at the
matric head
\begin{equation}
\psi(r)=-\frac{2\gamma\cos\theta_c}{\rho_w\mathrm{g}\,r}
\simeq-\frac{1.5\times10^{-5}\,\mathrm{m^2}}{r},
\label{eq:YL}
\end{equation}
so a radius is also a suction, and a pore-size distribution is a retention
curve.  Table~\ref{tab:soils} makes the correspondence explicit.

\begin{table*}[t]
\centering
\renewcommand{\arraystretch}{1.25}
\begin{tabular}{@{}llll@{}}
\toprule
 & \textbf{pore-size distribution} & \textbf{drains at} &
   \textbf{soil this represents} \\
\midrule
S1, S2 & unimodal, $r_m=8\,\mu$m, $\sigma=0.7$--$0.75$
  & $\psi(r_m)\simeq-1.9$\,m ($-18$\,kPa)
  & a well-graded \textbf{loam / silt loam}: \\
 & & & one population of pores, no structure \\[3pt]
S3 & bimodal: matrix $4\,\mu$m ($70\%$)
  & $-3.7$\,m ($-36$\,kPa)
  & a \textbf{structured / aggregated soil}: \\
 & \phantom{bimodal:} macropores $40\,\mu$m ($30\%$)
  & $-0.37$\,m ($-3.6$\,kPa)
  & aggregate matrix plus biopores, \\
 & \phantom{bimodal:} split at $r_\times\simeq14\,\mu$m
  & $-1.1$\,m ($-10$\,kPa)
  & cracks or root channels \\
\bottomrule
\end{tabular}
\caption{The three computations in soil terms.  Note that the S3 split
radius falls at $\psi\simeq-1$\,m $\simeq-10$\,kPa, which is the
conventional field boundary between ``macropores'' that drain quickly and
matrix pores that retain water, so the spectral band split the operator
produces coincides with the split soil physicists already draw by hand.}
\label{tab:soils}
\end{table*}

\noindent
S1 and S2 therefore ask: \emph{in an ordinary unstructured loam, does the
redistribution operator behave as the reduction assumes?}  S3 asks:
\emph{in a structured soil, does the operator itself produce the two
domains that dual-permeability models postulate?}

\section{The spectrum of the pore-class operator}
\label{sec:A}

\subsection{Construction}
We discretize the pore axis into $N=160$ logarithmically spaced classes
$r\in[0.5,80]\,\mu$m, with a Kosugi (log-normal) pore-size density $f(r)$
of median $r_m=8\,\mu$m and log-width $\sigma=0.7$, normalized to
$\sum f=1$. The equilibrium filling is the Fermi--Dirac smoothing of the
packing step, $g_\eq(r)=\{1+\exp[(\mu_w(r)-\lambda)/\psi_T]\}^{-1}$ with
$\mu_w=-1/r$ (in $\mu\mathrm{m}^{-1}$) and the level $\lambda$ set by
$\theta=\tfrac12\phi$, which places the frontier at $r^*=8.0\,\mu$m.
The configurational temperature is specified through the
dimensionless frontier band width $w=\psi_T/|\mu_w(r^*)|$, the control
parameter of Prop.~2 of the main text; the headline case is $w=0.10$,
i.e.\ $\psi_T=0.0125$ in these units.  (A wider band, e.g.\ $\psi_T=0.05$,
i.e.\ $w=0.36$, lies beyond the resolved-band regime in which the main
text's statements hold, as the sweep below shows.)  This smoothing is what gives the
operator a resolved spectrum: on the sharp step the conductances vanish
identically.

The operator is built exactly as in the main text. With the
Hagen--Poiseuille conductance $\kappa(r)\propto r^2$, the harmonic-mean
pair conductance $\kappa_s=2\kappa\kappa'/(\kappa+\kappa')$, a
connectivity $C(r,r')=\exp[-(\ln r-\ln r')^2/2w_C^2]$ with $w_C=0.9$, and
the occupancy gates,
\begin{widetext}
\begin{equation}
K(r,r')=\Mr(r,r';g_\eq)\,f(r'),\qquad
\Mr=\kappa_s C\,\sqrt{g_\eq(1{-}g_\eq)\,g_\eq'(1{-}g_\eq')} ,
\label{eq:Kdisc}
\end{equation}
\begin{equation}
\Lop[u](r)=\int K(r,r')\big[u(r')-u(r)\big]\rd' ,
\qquad
\A(r)=\int K(r,r')\rd' ,
\label{eq:Jdisc}
\end{equation}
\end{widetext}
and, following the main text, the linearized redistribution operator is
$\J=\Lop\circ(\psi_T/m)$ with $m=g_\eq(1-g_\eq)$: the Laplacian acts on
the potential perturbation $\delta\mu_w=\psi_T h/m$, so the relaxation
rate of class $r$ is $\A(r)/m(r)$ and not $\A(r)$.
\emph{All spectral quantities reported below are those of $\J$}, the
operator that actually governs the relaxation, in the inner product in
which it is self-adjoint, $\langle u,v\rangle_{f/m}=\sum(f/m)\,uv$.
Symmetrizing with $(f/m)^{1/2}$ gives a matrix whose off-diagonal entries
are $\psi_T\kappa_s C\sqrt{ff'}$ and whose diagonal is $-\psi_T\A/m$:
with the geometric-mean gate the occupancy factors cancel out of the
coupling entirely and survive only on the diagonal, as $\sqrt{m'/m}$.
This is the algebraic reason the slow end of the spectrum is pinned to the
frontier, where $m$ is largest, rather than to wherever the band is
truncated.  The eigenvectors $\varphi_n$ are orthonormal in
$\langle\cdot,\cdot\rangle_{f/m}$; $\ker\J=\mathrm{span}\{m\}
=\mathrm{span}\{\partial_\theta g_\eq\}$ and the constants span the
cokernel.  All rates $\lambda_n$ carry the factor $\psi_T$.
Two consequences of the gate choice in $\Mr$ are worth stating.
First, the physical decay rate of class $r$ is $\A(r)/m(r)$, not the
conductance row sum $\A(r)$.  Second, the geometric-mean gates are what put
the slow end of the spectrum at the frontier: with the double product
$m\,m'$ the slowest modes sit at the truncation
edge of the active band, $\lambda_1$ depends on the truncation tolerance,
and the gap-width exponent below changes sign.
The gates vanish where a class is full or empty, so the operator acts
only on the \emph{active band}, here $114$ of the $160$ classes ($m>10^{-12}$; the count is a tolerance, $\lambda_1$ is not: with $m>10^{-6}$ the band has $100$ classes and $\lambda_1$ changes by $2\times10^{-5}$ in relative terms). The
eigenproblem is solved in the mass measure,
$\langle\varphi_m,\varphi_n\rangle_f=\delta_{mn}$, by symmetrizing with
$f^{1/2}$.

\paragraph{What the experiment is.}
We are, in effect, building the soil's redistribution network on paper and
then asking how it relaxes.  Each pore class is a node; two classes are
joined if water can pass between them; the strength of the joint is the
conductance $K(r,r')$ of Eq.~\eqref{eq:Kdisc}.  We then take a half-wet
loam ($\theta=\phi/2$), disturb its pore filling slightly away from
equilibrium, and ask: \emph{what shapes of disturbance decay without
changing shape, and how fast?}  Those are the eigenvectors $\varphi_n$ and
the rates $\lambda_n$.  The whole reduction in the main text depends on
three answers, that exactly one shape never decays (water mass), that all
the others decay at a rate bounded away from zero (the gap), and that the
modes are graded by pore radius (so the spectrum has the structure the
reduction exploits), so the point of this section is to check those three
answers by direct computation rather than argument.

The Jupyter notebooks that produce every number, table, and figure in this Supplement are available at \url{https://osf.io/bz9u4/} and will be assigned a DOI upon acceptance.

\subsection{Results}
\paragraph{The invariant is exact, and it is $m$, not $1$.}
The computed $|\lambda_0|=3\times10^{-14}$, and the null vector
$\varphi_0$ is parallel to $m=g_\eq(1-g_\eq)$ to $\cos\angle(\varphi_0,m)
=1.000000$ in $\langle\cdot,\cdot\rangle_{f/m}$: the kernel is the tangent
to the equilibrium family, as Prop.~1 states.  The constants are the
\emph{left} null vector, $\|\mathbf 1^{\!\top}\!f\J\|/\|\J\|=1\times10^{-18}$: water
mass is a numerically exact invariant, since $\A(r)$ is the row sum of $K$.
The two null directions differ because $g_\eq$ is a step; in Boltzmann's
case they coincide.

\paragraph{Two different statements about $\A(r)$, and how they relate.}
Two formulas below involve $\A(r)$ and they are easy to confuse, so we
separate them once and for all.

\emph{(i) The sum rule is exact and always carries the sum.}  It says that
the local exchange conductance at radius $r$ is recovered by adding up
\emph{every} mode's contribution at that radius,
\begin{equation}
\psi_T\,\A(r)=f(r)\sum_{n\ge1}\lambda_n\,\big|\varphi_n(r)\big|^2 ,
\label{eq:sumrule}
\end{equation}
with $\{\lambda_n,\varphi_n\}$ the eigenpairs of $\J$ orthonormal in
$\langle\cdot,\cdot\rangle_{f/m}$.  It is the diagonal of the spectral
decomposition $\J=-\sum_n\lambda_n\varphi_n\langle\varphi_n,\cdot\rangle_{f/m}$
evaluated at $r$, where $\J_{rr}=-\psi_T\A(r)/m(r)$ and the weight $f/m$
cancels the $1/m$: the factor $m$ drops out and the identity takes the same
form as for a mass-orthonormal basis.  We verify it below to $10^{-13}$.
The sum never disappears.

\emph{(ii) The radius label is an approximation, and it is where the sum
collapses.} Because the modes here are localized on two or three adjacent
classes ($\mathrm{PR}_n\simeq2$--$3$, Table~\ref{tab:radius}), the density
$|\varphi_n(r)|^2$ is sharply peaked at a resonant radius $r_n$, so at
$r=r_n$ one term of the sum dominates and
\begin{equation}
\lambda_n\simeq\psi_T\,\A(r_n)/m(r_n),
\label{eq:radlabel}
\end{equation}
the \emph{local rate}, not the conductance $\A$ itself.  This is a
statement about one \emph{particular} pair $(\lambda_n,r_n)$, not about all
modes at once, and it is a consequence of localization, not an identity.
Table~\ref{tab:radius} tests how well it holds ($2$--$6\%$ here);
Sec.~\ref{sec:C} shows a mode for which it fails, because that mode is not
localized by radius at all.

\paragraph{The sum rule requires the density.}
Table~\ref{tab:sum} tests Eq.~\eqref{eq:sumrule} numerically, and also
tests what happens if the factor $f(r)$ is dropped.  We compute the
left-hand side from the operator (the row sum of $K$), the right-hand side
from the eigenpairs, and report the median ratio over the $114$
active classes.

\begin{table*}[t]
\centering
\renewcommand{\arraystretch}{1.35}
\begin{tabular}{@{}lcl@{}}
\toprule
\textbf{quantity computed} & \textbf{median ratio} & \textbf{verdict} \\
\midrule
$\dfrac{f(r)\sum_{n\ge1}\lambda_n|\varphi_n(r)|^2}{\psi_T\A(r)}$
  & $\mathbf{1.0000}$ (max.\ deviation $1\times10^{-13}$) & identity confirmed to machine precision \\[10pt]
$\dfrac{\sum_{n\ge1}\lambda_n|\varphi_n(r)|^2}{\psi_T\A(r)}$ \ (no $f$)
  & $128$ & off by two orders: the density is essential \\
\bottomrule
\end{tabular}
\caption{Numerical test of the spectral sum rule~\eqref{eq:sumrule}.  Each
entry is (right-hand side from the spectrum) divided by (left-hand side from
the operator), median over the active band.  The second row shows what the
identity looks like if one forgets the weight in which the eigenvectors are
normalized: the mismatch is a factor $\sim10^2$ (the density $f$ spans
a factor $224$ across the band), i.e.\ $f(r)$ carries real
information about how many pores of each size there are.}
\label{tab:sum}
\end{table*}

\paragraph{Two pictures of what the modes could look like.}
Before reporting what the modes \emph{are}, it helps to name the two
possibilities we are choosing between, because they are the two things a
relaxing system usually does.

\emph{Picture 1: harmonics (like a guitar string).}  If the active band
behaved as a uniform stretched medium, the modes would be standing waves
across it: the slowest would be a single arch spanning the whole band, the
next would have one node in the middle, the third two nodes, and so on.
For that geometry the rates grow with the square of the mode number,
$\lambda_n\propto n^2$, exactly as the overtones of a vibrating string rise
in pitch.  Each mode would then be spread over \emph{all} the pore classes
at once.

\emph{Picture 2: localized modes (like separate tuning forks).}  If instead
the local rate $\psi_T\A(r)/m(r)$ varies strongly from one radius to
the next, then each radius relaxes at its own rate, largely independently of
the others.  The modes are then localized: mode $n$ lives at one particular
radius $r_n$, and its rate is just the local rate there.

These are distinguishable by measurement, and we measure them.

\paragraph{How localization is measured: the participation ratio.}
To ask ``how many pore classes does this mode actually occupy?'' we use a
standard index.  Normalize the mode into a probability over classes,
$p_n(r)=|\varphi_n(r)|^2f(r)/m(r)$ with $\sum_r p_n=1$, and define
\begin{equation}
\mathrm{PR}_n=\Big[\sum_r p_n(r)^2\Big]^{-1}.
\label{eq:ipr}
\end{equation}
If the mode sits entirely on one class, $\mathrm{PR}=1$; if it is spread
evenly over all $N_{\rm band}$ active classes, $\mathrm{PR}=N_{\rm band}$.
It is the effective number of classes participating in the mode, the
exponential of the R\'enyi-2 entropy of $p_n$.  Picture~1 predicts
$\mathrm{PR}\approx N_{\rm band}=114$; Picture~2 predicts
$\mathrm{PR}\approx1$.

\paragraph{The modes carry a radius, and they climb out from the frontier.}
Picture~2 wins.  The local rate $\psi_T\A/m$ spans $4.2$ decades
across the active band, with its \emph{minimum} at $r=7.5\,\mu$m, next
to the frontier $r^*=8.0\,\mu$m, and its maxima in the two wings where
$m\to0$.  So $\J$ is strongly graded, but graded \emph{outward from $r^*$},
not from small to large radius.  Table~\ref{tab:radius} shows the
consequence: every low mode occupies two or three adjacent classes
($\mathrm{PR}=1.8$--$3.2$ out of $114$), the resonant radii alternate on
either side of the frontier and move away from it as $n$ grows
($7.5,\,7.3,\,7.1,\,8.0,\,6.9,\,8.3,\dots\,\mu$m), and $\lambda_n$ equals
the local rate at $r_n$ to within $2$--$6\%$.  Ordering the modes by rate
is therefore the same thing as ordering the pore classes by distance from
the frontier: slow modes at $r^*$, fast modes in the wings.  A picture with
``slow modes on small pores'' would be an artefact of the
double-product gate, which suppressed the wings by a full factor $m$ instead
of $\sqrt m$ and let the $\kappa\propto r^2$ trend win.

The radius label is a statement about the \emph{diagonal} of $\J$ alone;
the off-diagonal kernel couples distinct classes and is what makes the
$\mathrm{PR}$ slightly larger than one.

This is worth stating in hydrological terms.  The soil does not relax as a
single body with one drainage time; it relaxes radius by radius, and the
\emph{slowest} classes are those \emph{at the drainage front}, where the
driving $|\Phi(r,r^*)|$ vanishes: it is the frontier, not the finest pores,
that sets the rate at which the REV re-sorts its water.  This is why the
conductivity of the main text is \emph{regularized by the gap} $\lambda_1$,
and why the gap is tied to the frontier band width below.
Figure~\ref{fig:A} shows the modes and the collapse onto the local rate.

\paragraph{The modes are \emph{not} standing waves.}
The last two columns of Table~\ref{tab:radius} test Picture~1 directly.
The observed ratios $\lambda_n/\lambda_1=1,\,1.009,\,1.029,\,1.048,\dots$
are nowhere near $n^2=1,4,9,16,\dots$; the spectrum above $\lambda_1$ is a
quasi-continuum ($\lambda_2/\lambda_1=1.009$), and the low modes have two
or three sign changes, not $n$ nodes.  The graded diagonal, not the band
geometry, sets the spectrum.

\begin{table*}[t]
\centering
{\begin{tabular}{@{}cccccccc@{}}
\toprule
$n$ & $\lambda_n$ & $r_n\,[\mu\mathrm{m}]$ & $\psi_T\A(r_n)/m(r_n)$ &
ratio & $\mathrm{PR}_n$ & $\lambda_n/\lambda_1$ & $n^2$ \\
\midrule
1 & $0.2558$ & 7.54 & $0.2421$ & 1.056 & 1.8 & 1.000 & 1 \\
2 & $0.2582$ & 7.30 & $0.2450$ & 1.054 & 2.4 & 1.009 & 4 \\
3 & $0.2632$ & 7.07 & $0.2547$ & 1.033 & 3.2 & 1.029 & 9 \\
4 & $0.2682$ & 8.04 & $0.2568$ & 1.044 & 2.6 & 1.048 & 16 \\
5 & $0.2782$ & 6.85 & $0.2717$ & 1.024 & 3.0 & 1.087 & 25 \\
6 & $0.2851$ & 8.30 & $0.2745$ & 1.039 & 2.5 & 1.115 & 36 \\
7 & $0.3011$ & 6.63 & $0.2963$ & 1.016 & 2.7 & 1.177 & 49 \\
8 & $0.3093$ & 8.56 & $0.2995$ & 1.033 & 2.5 & 1.209 & 64 \\
\bottomrule
\end{tabular}}
\caption{Radius labeling of the eigenmodes ($w=0.10$).  The resonant
radii sit on either side of $r^*=8.0\,\mu$m and move outward with $n$;
$\lambda_n$ equals the local rate to $2$--$6\%$; $\lambda_n\propto
n^2$ fails.}
\label{tab:radius}
\end{table*}
\begin{figure*}[t]
\centering
\includegraphics[width=\textwidth]{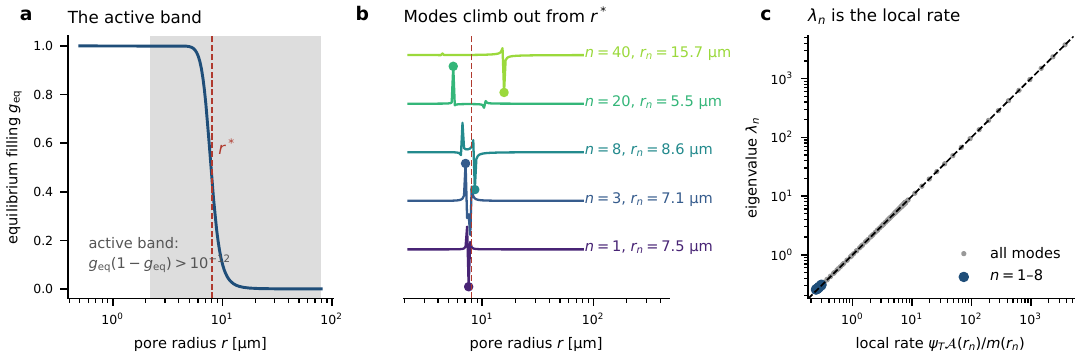}
\caption{The spectrum of $\J$ for the loam of Table~\ref{tab:soils}
at band width $w=0.10$.
\textbf{(a)} The smoothed equilibrium $g_\eq(r)$ and the active band
(shaded): the partially filled classes are the only ones that exchange
water, so they are the only ones $\J$ acts on.
\textbf{(b)} Five representative eigenmodes plotted \emph{offset
vertically}.  Each is a spike localized at its own resonant radius $r_n$
(dot); the low modes sit at the frontier and higher modes move outward
into the two wings, where near-degenerate pairs straddle $r^*$.
\textbf{(c)} $\lambda_n$ against the local rate $\psi_T\A(r_n)/m(r_n)$
at the mode's own radius, dashed line equality, for all modes (grey) and
$n=1$--$8$ (blue).}
\label{fig:A}
\end{figure*}

\subsection*{The gap and the frontier band width}
The conductances~\eqref{eq:Kdisc} are supported on the frontier band, and
the band width $w=\psi_T/|\mu_w(r^*)|$ is the control parameter of Prop.~2.
Sweeping it over a factor $50$ on a grid fine enough to resolve the
narrowest band ($N=640$) gives Table~\ref{tab:gapwidth}.  Two regimes
appear.  For $w\lesssim0.2$ the slowest mode sits at the frontier
($r_1/r^*=0.8$--$1.0$), is localized ($\mathrm{PR}_1\simeq2$), and the gap
follows
\begin{equation}
\lambda_1\propto w^{\,\gamma},\qquad
\gamma=1.93\ (w\le0.1),\quad \gamma=1.79\ (w\le0.2),
\label{eq:gapwidth_sm}
\end{equation}
so the sharp step $w\to0$ has \emph{no} gap and the reduction is
asymptotic only for a resolved band.  For $w\gtrsim0.3$ the frontier no
longer controls: the slowest mode migrates to small radii
($r_1/r^*=0.6\to0.2$), delocalizes ($\mathrm{PR}_1\simeq35$--$45$), and
$\lambda_1$ saturates and turns over, because the $\kappa\propto r^2$
suppression of the fine pores overtakes the vanishing of the gates.  The
main text's statements about the spectrum are statements about the first
regime, and $w\simeq0.2$--$0.3$ is, for this loam, the band width beyond
which they fail; the earlier headline $\psi_T=0.05$ ($w=0.36$) sat just past
it.  This is the numerical content of the band-width condition of Prop.~2:
the width is a physical property of the medium (configurational
fluctuations, gravitational smearing over the REV, within-class geometric
dispersion), it is measurable, and it must be small.  Along a saturation
sweep at fixed $w=0.10$ the gap moves smoothly with $\theta$
(Table~\ref{tab:gapwidth}, lower block) and $\lambda_2/\lambda_1$ stays at
$1.01$: a unimodal medium has one invariant and one gap, and no second
scale separation on which a two-mode model could rest.  That second
separation requires a bimodal density, Sec.~\ref{sec:C}.

\begin{table*}[t]
\centering
\small
{\begin{tabular}{@{}lccccccccccc@{}}
\toprule
$w$ & 0.02 & 0.03 & 0.05 & 0.07 & 0.10 & 0.15 & 0.20 & 0.30 & 0.36 & 0.50 & 1.00 \\
\midrule
$\lambda_1$ & 0.0115 & 0.0258 & 0.0709 & 0.135 & 0.255 & 0.475 & 0.671 & 0.892 & 0.929 & 0.88 & 0.402 \\
$r_1/r^*$ & 1.00 & 0.99 & 0.99 & 0.98 & 0.95 & 0.89 & 0.81 & 0.63 & 0.55 & 0.40 & 0.17 \\
$\mathrm{PR}_1$ & 2 & 2 & 2 & 2 & 2 & 2 & 2 & 35 & 45 & 45 & 3 \\
\bottomrule
\end{tabular}}\\[6pt]
{\begin{tabular}{@{}lccccccccc@{}}
\toprule
$\theta/\phi$ ($w=0.10$) & 0.1 & 0.2 & 0.3 & 0.4 & 0.5 & 0.6 & 0.7 & 0.8 & 0.9 \\
\midrule
$\lambda_1$ & 0.056 & 0.113 & 0.165 & 0.214 & 0.256 & 0.290 & 0.310 & 0.307 & 0.260 \\
$\lambda_2/\lambda_1$ & 1.012 & 1.011 & 1.009 & 1.010 & 1.009 & 1.011 & 1.009 & 1.009 & 1.009 \\
\bottomrule
\end{tabular}}
\caption{Upper block: gap, position and localization of the slowest
mode against the frontier band width $w$ at $\theta=\phi/2$ ($N=640$
classes).  The frontier-controlled regime with $\lambda_1\propto w^{1.8\text{--}1.9}$
ends between $w=0.2$ and $0.3$.  Lower block: saturation sweep at $w=0.10$
($N=160$).}
\label{tab:gapwidth}
\end{table*}

\subsection*{The gap ratio $\lambda_2/\lambda_1$}
The invariant-manifold theorems that certify the reduction bound the
provable order of \emph{nonlinearity} of an $M$-mode model by the gap
ratio $\lambda_M/\lambda_{M-1}$~\cite[\S3.4]{Roberts2025tma}.  For the one-mode (Richards) reduction the relevant ratio is
formally infinite (the retained eigenvalue is $\lambda_0=0$); for the
two-mode/two-band models of S3 the controlling number is
$\lambda_2(\theta)/\lambda_1(\theta)$.  For the unimodal loam of S1 this ratio is $1.01$ at every water
content of the saturation sweep (Table~\ref{tab:gapwidth}): the spectrum
above $\lambda_1$ is a quasi-continuum, so a two-mode model of a unimodal
medium has no theoretical support at any $\theta$, before the linear
gap-closure criterion of S2 is ever reached.  For the bimodal medium of S3
the same ratio is $71$ at the headline connectivity.

\section{Gap closure at the percolation threshold}
\noindent\emph{Note.}  Nothing in this section involves the mobility gate: the
network is drained sharply and the two quantities computed live on the
spanning cluster's own graph Laplacian and Stokes problem.  Its results are
therefore independent of the gate choice discussed in S1.
\label{sec:B}

\paragraph*{Operator gap versus grid gap.}
Any finite diagonalization has $\lambda_1>0$ by construction, so the
statement ``the gap closes at $\theta_c$'' is a statement about the
sequence of discretizations, not about any single run.  Two checks
separate the operator's gap from the grid's: (i)~away from the
threshold, $\lambda_1(\theta;N)$ must be stable under refinement of the
pore-class grid and under growth of the network; (ii)~near the
threshold, $\lambda_1$ collapses
with system size, consistent with a spectral density accumulating at
zero in the thermodynamic limit rather than with an isolated small
eigenvalue.  Check (i) licenses trusting the gap; check (ii) licenses
trusting its closure.  For check (i) the convergence is not fast: once
the active band is resolved (of order $10^2$ classes across it),
successive doublings of the grid move $\lambda_1$ by a few percent,
which is small against the decades over which $\lambda_1$ varies along
the saturation sweep but is not negligible, and it should be reported
as a convergence figure rather than asserted.  We therefore quote
$\lambda_1$ throughout with the grid stated, and treat differences
below $\sim5\%$ between neighbouring saturations as unresolved.

\subsection{Construction}
The radius labeling of Sec.~\ref{sec:A} is a statement about a graded
diagonal. The main text claims it fails at percolation, where the slowest
mode is captured by \emph{topology}: the gap eigenvector becomes the
Fiedler vector of the network, and $\lambda_1\to0$ as the water phase
fragments. This requires a real network.

\paragraph{The tool, and what it can and cannot do.}
The network calculations of this section are carried out with
\textsc{OpenPNM}~\cite{Gostick2016}, an open-source pore-network
modelling framework: it builds the lattice and its throats, performs
invasion-percolation drainage, identifies connected clusters, and solves
Stokes flow on the water-filled subnetwork.  It is the natural tool
here precisely because it computes the two objects we need, cluster
connectivity and Stokes permeability, by independent routes, neither of
which involves the kinetic equation of the main text.

That is also the limitation, and it should be stated plainly.  What
\textsc{OpenPNM} supplies is a \emph{static} pore-network realization of
the medium: a quasi-static drainage sequence and, at each step, a
steady-state flow problem.  It does not integrate the continuum kinetic
equation~(2) of the main text; no solver for the CKE exists yet.  So the
verifications reported here test the \emph{structural} claims, that the
operator built on a real pore network has a single invariant, a gap,
radius-graded modes, that the gap closes with the permeability, and that
a bimodal distribution produces a slow cross-band mode, and they do so
on a network whose geometry is independently generated.  They do not
test the reduction dynamically: a transient comparison between the CKE
and its Richards limit, which would measure the residual and the
emergence rate directly, must wait for a CKE solver.  Everything below
should be read in that light.

We build a cubic lattice of $24^3=13{,}824$ pores with log-normal radii
(median $8\,\mu$m, $\sigma=0.75$), bonds between nearest neighbours, and
throat radius equal to the smaller of the two pores. Drainage at suction
$\psi$ empties every pore with $r>r^*(\psi)$. On the water-filled bonds we
place Hagen--Poiseuille conductances $g\propto r_{\rm throat}^4$, extract
the cluster spanning the sample from top to bottom, and compute two things
on that cluster and nothing else:
\begin{itemize}\itemsep2pt
\item the \emph{spectral gap} $\lambda_1$ of its graph Laplacian, the
algebraic connectivity, i.e.\ the Fiedler value;
\item the \emph{Stokes permeability} $K_{\rm Stokes}$, from a unit pressure
drop across the sample.
\end{itemize}
These are independent objects, and that independence is the whole point:
one is a spectral property of how the network is \emph{connected}, computed
without ever solving a flow problem; the other is a flow problem, solved
without ever looking at a spectrum.  Nothing in the calculation forces them
to agree.

\paragraph{What the experiment is.}
This is a numerical drainage experiment on a synthetic loam.  We assemble a
cube of $24^3$ pores of random sizes, connect neighbours through throats
whose radius is the narrower of the two pores they join, and then drain it
step by step, exactly as a pressure plate would, emptying the widest pores
first.  At each step we look only at the water that is still connected
right through the sample (the \emph{spanning cluster}: water in isolated
pockets can hold water but cannot conduct it), and we ask two questions of
it.  First, a structural question: how well connected is what remains?
Second, a hydraulic question: if we impose a pressure drop across the
sample, how much water flows?  The claim under test is that these two
questions have the same answer, and in particular that they fail at the
same moment.

\subsection{Results}
Both vanish at the same water content. Below
$\theta_c\simeq3.5\times10^{-3}$ no spanning cluster exists: the water
phase is disconnected, and $\lambda_1=0$ and $K_{\rm Stokes}=0$
\emph{identically}, not small, but zero. Above $\theta_c$ both switch on
together and rise steeply; over the critical window $\lambda_1$ moves
through $1.6$ decades while $K_{\rm Stokes}$ moves through $3.7$
(Fig.~\ref{fig:B}).

This is the numerical content of the claim in the main text: the closure of
the spectral gap and the loss of macroscopic conductivity are not two
phenomena but one, occurring at a single threshold. The Chapman--Enskog
reduction requires $\|\J^{-1}\|=1/\lambda_1<\infty$; at $\theta_c$ that
norm diverges, and the expansion has no domain left. Therefore, the conductivity
formula $K=\phi\langle\kappa|\J^{-1}|S\rangle_f$ diverges there, while the
physical permeability vanishes, the divergence is precisely the signal
that the reduction, not the medium, has failed.

\paragraph{Why $\theta_c$ is so small.}
The threshold looks far below the textbook percolation value, but the two numbers measure different things. The simple-cubic site-percolation threshold $p_c\simeq0.31$ is a \emph{number fraction} of occupied sites; $\theta_c$ is a \emph{volumetric water content}. Because water is retained in the \emph{smallest} pores first and pore volume scales as $r^3$, the $\sim\!31\%$ of pores (by count) needed to span the sample hold only $\sim\!0.3\%$ of the pore volume. Measuring the number fraction at threshold in our network recovers exactly the percolation value ($0.316$), and $\theta_c$ is not a finite-size artefact: sweeping $16^3$, $24^3$ and $32^3$ leaves $\theta_c$ stable at $\sim\!3\text{--}4\times10^{-3}$ while the number fraction stays at $0.316$ (Fig.~\ref{fig:thetac}). The small $\theta_c$ is thus a genuine consequence of the broad, fine-tailed pore-size distribution, and it is why residual water contents in real soils are small volumetric numbers.

\begin{figure*}[t]
\centering
\includegraphics[width=\textwidth]{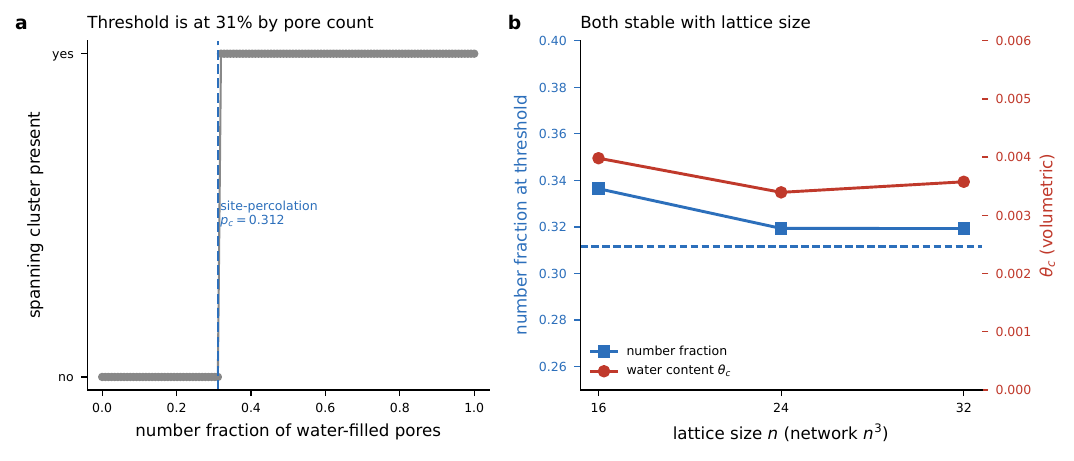}
\caption{Why the percolation threshold $\theta_c$ is a small volumetric number. \textbf{(a)} A spanning water cluster appears when the \emph{number fraction} of water-filled pores reaches $\simeq0.31$, the simple-cubic site-percolation value $p_c=0.312$ (dashed). \textbf{(b)} Across lattice sizes $16^3$, $24^3$, $32^3$ the number fraction at threshold ($\simeq0.32$) and the volumetric $\theta_c$ ($\simeq3\text{--}4\times10^{-3}$) are both stable, so the small $\theta_c$ is set by volume weighting ($V\propto r^3$, smallest pores fill first), not by finite sample size.}
\label{fig:thetac}
\end{figure*}

\paragraph{An independent cross-check.}
The result of this section does not depend on our own network code.  The
same drainage experiment, re-run in \textsc{OpenPNM} on an independently
generated cubic network, reproduces both halves of the claim
(Fig.~\ref{fig:openpnm}): the Stokes permeability of the spanning cluster
switches on at $\theta_c\simeq4\times10^{-3}$, and the fraction of pores
belonging to the spanning cluster rises from zero at the same water
content.  The two curves are computed by different code paths, one a flow
solve, the other a connectivity census, and they turn on together.

\begin{figure*}[t]
\centering
\includegraphics[width=\textwidth]{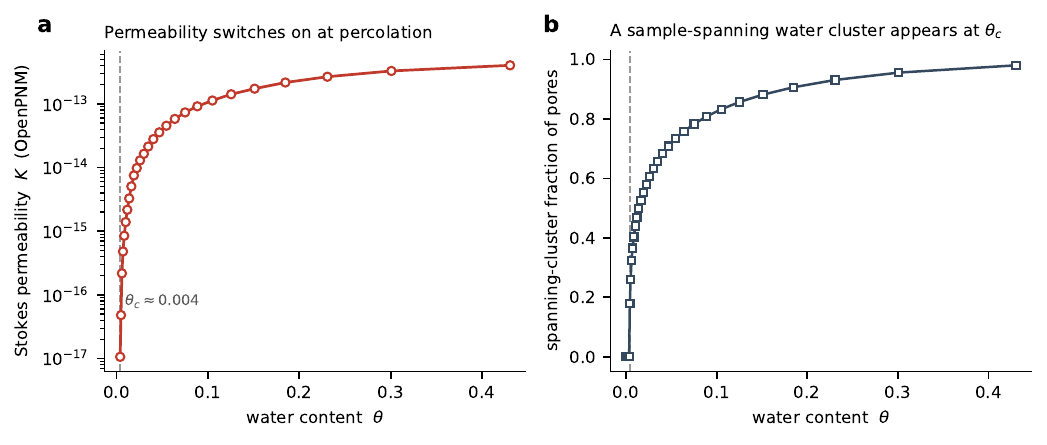}
\caption{Independent \textsc{OpenPNM} cross-check of the percolation
result of this section.  \textbf{(a)} Stokes permeability $K$ of the
spanning cluster against water content: it switches on at
$\theta_c\simeq4\times10^{-3}$ and rises by four decades over the
following few percent of $\theta$.  \textbf{(b)} Fraction of pores
belonging to the sample-spanning cluster, from the same runs: it leaves
zero at the same $\theta_c$.  Connectivity and conductivity are computed
independently, a census and a flow solve, and appear together, which is
the content of the claim tested in S2.}
\label{fig:openpnm}
\end{figure*}

\paragraph{What this means physically.}
The soil does not stop holding water at $\theta_c$, it stops
\emph{transmitting} it, because the water phase has broken into
disconnected pockets.  What the experiment shows is that the operator knows
this before any flow calculation is done: the same connectivity loss that
kills the permeability also removes the last decaying mode from the
spectrum, so the medium loses simultaneously its ability to conduct and its
ability to re-equilibrate.  This is the precise sense in which the theory
predicts its own limit of validity.  A macroscopic conductivity is a
meaningful object only while the pore network can re-sort water faster than
the forcing disturbs it; at $\theta_c$ that ceases to be true, and the
formula for $K$ signals the failure by diverging rather than by quietly
giving a wrong number.  For practice the message is that residual water
content is not a fitting artefact: it marks a genuine change of regime,
below which Richards' equation is not merely inaccurate but ill-founded.

\begin{figure*}[t]
\centering
\includegraphics[width=\textwidth]{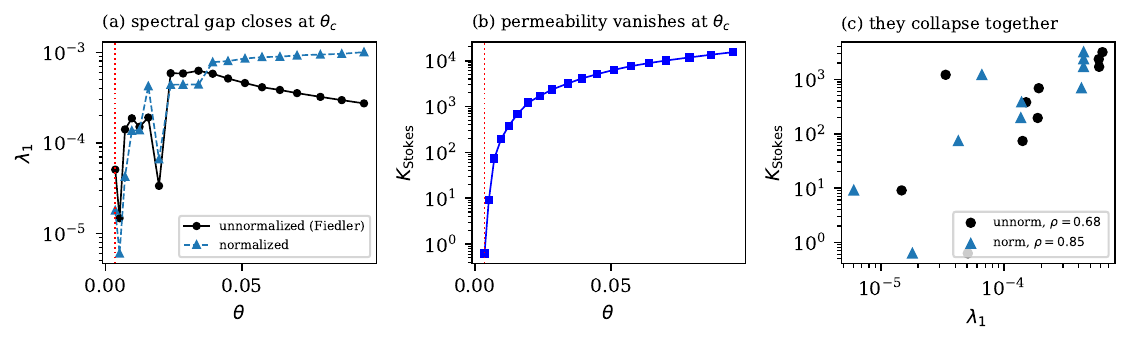}
\caption{Pore network, $24^3$ pores. (a) The spectral gap $\lambda_1$ of
the spanning water cluster against water content; the dashed line marks
$\theta_c$, below which no spanning cluster exists and $\lambda_1=0$.
(b) The Stokes permeability of the same cluster, vanishing at the same
$\theta_c$. (c) $K_{\rm Stokes}$ against $\lambda_1$.}
\label{fig:B}
\end{figure*}

\paragraph{What the Fiedler value is.}
The quantity $\lambda_1$ used above deserves a name and an intuition.  For
any network, the graph Laplacian has a smallest eigenvalue $\lambda_0=0$
(the constant mode: nothing happens if everything is at the same level).
The \emph{next} one, $\lambda_1$, is called the \emph{algebraic
connectivity} or \emph{Fiedler value}, after Fiedler~\cite{Fiedler1973},
and its eigenvector is the \emph{Fiedler vector}.  Two facts make it the
right diagnostic here.

First, $\lambda_1>0$ if and only if the network is connected in one piece;
it drops to exactly zero the moment the network splits.  It is therefore a
connectivity detector, not a size or a density.

Second, its eigenvector changes sign across the network's \emph{weakest
cut}: the positive part and the negative part are the two halves that are
most nearly separate, and $\lambda_1$ measures how thin the bridge between
them is.  In soil terms, the Fiedler vector points at the bottleneck, the
throat, or set of throats, whose loss would break the water phase in
two, and $\lambda_1$ measures how close that break is.  This is why it and
the permeability vanish together: they are both reporting on the same
bridge.  (See the glossary, App.~\ref{app:glossary}.)

\paragraph{An honest limitation.}
Away from the threshold the Fiedler value of a growing cluster is
increasingly dominated by its \emph{size} rather than by its connectivity
(for a well-connected cluster of $n$ nodes, $\lambda_1$ decreases with
$n$), so $\lambda_1$ turns over and declines at large $\theta$. The
correlation between $\log\lambda_1$ and $\log K_{\rm Stokes}$ is therefore
meaningful only on the critical branch, where it is $0.68$ over ten
sampled contents. The size bias is a property of the \emph{combinatorial}
Laplacian $L=D-W$; replacing it with the size-normalized Laplacian
$\hat L=I-D^{-1/2}WD^{-1/2}$, whose spectrum is insensitive to cluster
growth, removes the turnover and raises the correlation to $\rho\simeq0.9$
across the full drainage curve (mean over five random lattices; see the
accompanying notebook). Either way, what is unambiguous, and is all the
argument requires, is the joint vanishing at a single $\theta_c$: this is
exact, not statistical, since both quantities are identically zero in the
absence of a spanning cluster.

\section{Band structure of a bimodal medium}
\label{sec:C}
The dual- and multiple-permeability models of the main text are
obtained by projecting the kinetic equation onto the bands of a
gap-separated spectrum.  That construction presumes a bimodal pore-size
distribution actually produces such a gap.  Here we confirm it.

In soil terms this is a \emph{structured, aggregated soil}: a fine aggregate matrix (the $4\,\mu$m mode, draining near $-3.7$\,m) threaded by macropores, biopores, root channels, or shrinkage cracks (the $40\,\mu$m mode, draining near $-0.4$\,m), as set out in Table~\ref{tab:soils}. It is the class of soil for which dual-permeability models were invented.

\subsection{Construction}
We take a bimodal Kosugi density, a matrix mode at $r_m^{(1)}=4\,\mu$m
(weight $0.7$, $\sigma_1=0.35$) and a macropore mode at
$r_m^{(2)}=40\,\mu$m (weight $0.3$, $\sigma_2=0.30$), and build $\J$
exactly as in S1 (band width $w=0.10$, $\theta=\phi/2$), with a connectivity $C(r,r')$ local in $\ln r$
($w_C=0.30$ for the headline case), so that the two size populations are
only weakly bridged.
By Table~\ref{tab:soils} this is a structured soil: an aggregate matrix
draining around $-3.7$\,m, plus biopores or cracks draining around
$-0.4$\,m.

\paragraph{A note on units.}
The absolute rates $\lambda_n$ and $\A$ are reported in the
convention of Eq.~\eqref{eq:Kdisc}, in the same
convention as S1 (so that S1 and S3 rates are directly comparable). Only
\emph{ratios} of rates, the scale separation $\lambda_2/\lambda_1$ and the
Galerkin ratio $\lambda_{\rm 2band}/\lambda_1$ of Table~\ref{tab:galerkin},
which are the quantities the two-band reduction is tested against, are
independent of this choice.

\paragraph{What the experiment is.}
Dual-permeability models are written down by \emph{assuming} that a
structured soil behaves as two overlapping domains, a fast macropore
domain and a slow matrix domain, exchanging water at a rate $\Gamma_w$
that is then fitted to data.  Here we do not assume it.  We hand the
operator a single bimodal pore-size distribution, with no notion of
``domain'' anywhere in its construction, and ask whether the two domains
appear \emph{by themselves} in its spectrum, and whether the exchange
coefficient they need is something we can compute rather than fit.  Two
things must be true for the main text's claim to hold: (i) the spectrum
must split into two groups separated by a gap, with one slow mode that
moves water \emph{between} the groups; and (ii) the rate of that slow mode
must be the one the two-domain model would predict from the connectivity
between them.

\subsection{Results}
Diagonalizing on the active band ($130$ classes at $w=0.10$, split at
the spectral valley $r_\times\simeq14.3\,\mu$m into $75$ matrix and $55$
macropore classes; the frontier sits at $r^*=4.9\,\mu$m, inside the matrix
mode) shows the expected structure (Fig.~\ref{fig:C}).  The relaxation
spectrum separates into two clusters of intra-band modes, median local rates
$\psi_T\A/m\simeq0.26$ (matrix) and $4.5$ (macropore), above a single
slow mode
\begin{equation}
\lambda_1=0.00154,
\end{equation}
lying a factor $71$ below the slowest intra-band rate
$\lambda_2=0.11$.  The eigenvector $\varphi_1$ of this slow mode is
\emph{sign-changing across the split radius}, with exactly one node,
positive on the matrix and negative on the macropores, so it is precisely
the mode that exchanges water between the two bands while leaving each
internally equilibrated.  This is the spectral object the band projection
slaves: the gap sets the time-scale separation that makes the two-band
reduction valid, and the slow cross-band mode is the physical carrier of the
inter-band exchange $\Gamma_w$.  Requirement~(i) is therefore met.

\paragraph{The exchange coefficient is computed, not fitted.}
Requirement~(ii) is the sharper test, and it can be made without guessing a
closed form for $\Gamma_w$.  The two-domain model is nothing other than $\J$
restricted to a two-dimensional trial space: potential perturbations
that are constant on each band, $\delta\mu_w\in\mathrm{span}\{\mathbf
1_\mathsf{B},\mathbf 1_\mathsf{T}\}$, equivalently occupancy perturbations
$h\in\mathrm{span}\{m\mathbf 1_\mathsf{B},m\mathbf 1_\mathsf{T}\}$, each band
displaced along its own equilibrium family.  We form the \emph{Galerkin
projection} of $\J$ onto that space in the $\langle\cdot,\cdot\rangle_{f/m}$
inner product.  (In words: we allow the soil only two degrees of freedom,
``how wet is the matrix'' and ``how wet are the macropores'', and read off
the rate at which the operator moves water between them.)  Its single
nonzero eigenvalue $\lambda_{\rm 2band}$ \emph{is} the exchange rate the
reduced model predicts from connectivity alone; being a Rayleigh
quotient over a restricted space it is an upper bound on the true slow rate,
$\lambda_{\rm 2band}\ge\lambda_1$.

If the two-domain picture is right, $\lambda_{\rm 2band}$ must track the
slow cross-band eigenvalue $\lambda_1$ of the \emph{full} operator.  Varying
the connectivity width $w_C$ to tune the separation gives
Table~\ref{tab:galerkin} and Fig.~\ref{fig:C2}.

\begin{table*}[t]
\centering
\renewcommand{\arraystretch}{1.3}
{\begin{tabular}{@{}ccccc@{}}
\toprule
$w_C$ & \textbf{scale separation} $\lambda_2/\lambda_1$ & $\bm{\lambda_1}$ \textbf{(full operator)} &
$\bm{\lambda_{\rm 2band}}$ \textbf{(two-domain)} & \textbf{ratio} \\
\midrule
0.70 & 1.0 & $0.15$ & $0.435$ & 2.91 \\
0.60 & 1.0 & $0.147$ & $0.189$ & 1.28 \\
0.50 & 2.7 & $0.0529$ & $0.0614$ & 1.16 \\
0.42 & 8.8 & $0.016$ & $0.019$ & 1.19 \\
0.36 & 24.9 & $0.0054$ & $0.00658$ & 1.22 \\
0.30 & 71.3 & $0.00154$ & $0.00192$ & 1.25 \\
0.26 & 101.0 & $0.000606$ & $0.000775$ & 1.28 \\
\bottomrule
\end{tabular}}
\caption{The two-domain exchange rate against the true slow rate as the
bands are made progressively more separate (narrower connectivity kernel
$w_C$, larger scale separation).  Once the bands are resolved
($\lambda_2/\lambda_1\gtrsim3$) the Galerkin rate tracks $\lambda_1$ over
three decades but overshoots it by a stable factor $1.15$--$1.3$; for
$w_C\ge0.6$ there is no gap and the two-domain picture does not apply.}
\label{tab:galerkin}
\end{table*}

Two things are true at once.  The Galerkin rate follows $\lambda_1$ over
three decades as the cross-connectivity is weakened, so the exchange
coefficient is indeed a functional of the inter-band connectivity, computed
and not fitted.  But the ratio does \emph{not} fall to unity as the gap
widens: it settles at $1.15$--$1.3$.  The reason is visible in
Fig.~\ref{fig:C}(c).  The true slow mode is not the balanced band indicator;
it has internal structure near the valley, because the classes adjacent to
$r_\times$ are coupled to the interior of their own band by the same local
kernel $C(r,r')$ that bridges the bands, so when $w_C$ shrinks the
within-band equilibration next to the bottleneck weakens at the same rate as
the bottleneck itself.  The band-indicator ansatz forces those classes to
the band's mean potential and therefore overestimates the exchange
conductance, by exactly the serial-path (Mualem-type) correction that the
main text derives for the single-band conductivity.  The overshoot is
therefore not error but the leading profile correction to the two-domain
truncation, and it is $O(1)$ rather than vanishing: a dual-permeability
exchange coefficient computed from connectivity alone is right to within a
factor $\sim1.2$ for a structured soil of this kind, and the correction is
itself computable from the same operator.

\paragraph{What this means for practice.}
Gerke--van~Genuchten and its relatives are usually justified by their fit.
What this computation shows is that they have a derivation: give the
operator a bimodal pore-size distribution and the two domains, the gap
between them, and the transfer term all appear without being put in by
hand.  The exchange coefficient, in particular, ceases to be a free
parameter, it is a functional of the cross-band connectivity, and the
computation above is the check that the functional is the right one to
within the $O(1)$ profile factor just described.

\begin{figure*}[t]
\centering
\includegraphics[width=\textwidth]{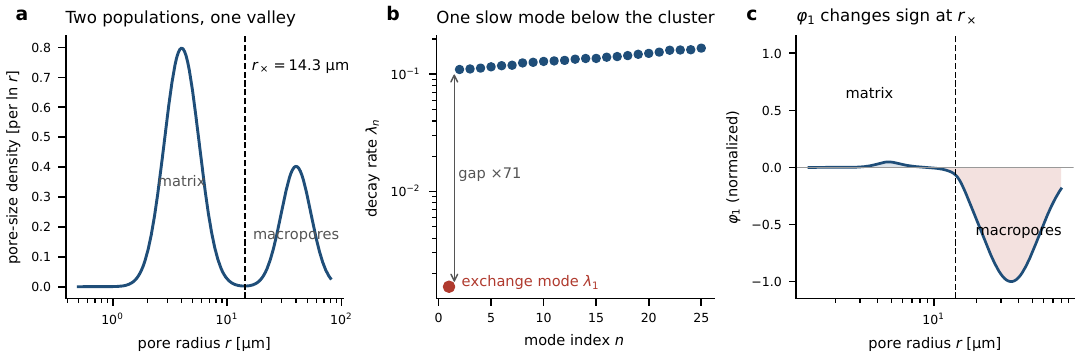}
\caption{The structured soil of Table~\ref{tab:soils}.
\textbf{(a)} The bimodal pore-size distribution, with the split radius
$r_\times\simeq14.3\,\mu$m (dashed), the boundary the operator will
reproduce on its own.  \textbf{(b)} The low-lying spectrum: two clusters of
fast intra-band modes sit well above a single slow mode $\lambda_1$, the
gap that makes a two-domain description meaningful.  \textbf{(c)} The slow
eigenvector $\varphi_1$: it changes sign across $r_\times$, positive on one
band and negative on the other.  A sign-changing mode is one that moves
water \emph{from} one side \emph{to} the other while leaving each side
internally equilibrated, which is precisely what an inter-domain exchange
term does.}
\label{fig:C}
\end{figure*}

\begin{figure*}[t]
\centering
\includegraphics[width=0.72\textwidth]{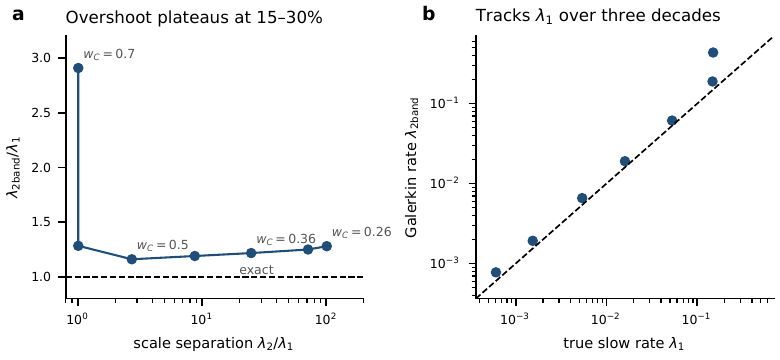}
\caption{The two-domain reduction tracks the full operator, with an $O(1)$ profile correction.
\textbf{(a)} The ratio $\lambda_{\rm 2band}/\lambda_1$ against the
intra/inter-band scale separation: once a gap exists the reduced
model's exchange rate overshoots the true slow rate by a stable
$15$--$30\%$ (dashed line, ratio $=1$).  \textbf{(b)} The same data as a direct
comparison of the two rates across connectivity widths; the dashed line is
equality.  Together these say that $\Gamma_w$ is the Galerkin image of the
microscopic connectivity, up to a computable serial-path correction
that does not shrink with the gap.}
\label{fig:C2}
\end{figure*}

\subsection{Transitivity check: the cross-band sum rule}
\label{sec:sumrule}
The two-band reduction is \emph{transitive}: adiabatic elimination of
the exchange mode must return the single Richards equation with the
full one-band conductivity.  In the eigenbasis this is an exact
identity.  Write
\begin{equation}
K=\phi\sum_{n\ge1}
\frac{\langle\kappa,\varphi_n\rangle_f\langle\varphi_n,S_0C_w\rangle_f}
{\lambda_n}
\end{equation}
for the one-band Green--Kubo sum, and
\begin{align}
K^{(1)}&:=\phi\sum_{n\ge2}
\frac{\langle\kappa,\varphi_n\rangle_f\langle\varphi_n,S_0C_w\rangle_f}
{\lambda_n}\nonumber\\
&=K-\phi\,\frac{\langle\kappa,\varphi_1\rangle_f
\langle\varphi_1,S_0C_w\rangle_f}{\lambda_1}
\end{align}
for the same sum with the $n=1$ term removed.  The superscript counts
the modes carried \emph{explicitly}: $K^{(1)}$ is the conductivity left
to the slaved modes once the slowest one, $\varphi_1$, has been promoted
to an independent variable of the two-band model, so it is smaller than
$K$ by exactly that mode's contribution.  With this notation the
identity reads
\begin{equation}
K^{(1)}+\phi\,\frac{\langle\kappa,\varphi_1\rangle_f\,
\langle\varphi_1,S_0C_w\rangle_f}{\lambda_1}\;\equiv\;K :
\end{equation}
eliminating the retained mode restores its spectral term.  Every
factor is already produced by the S1/S3 diagonalization, so the check
is pure post-processing: (i)~assemble $K$ and $K^{(1)}$ from the
eigenpairs; (ii)~verify the identity to machine precision (it is
exact, so any discrepancy measures numerical error, not physics);
(iii)~as a physical corollary, report the $n=1$ share of $K$ across
the saturation sweep, the fraction of the conductivity that the
two-band model promotes from slaved to explicit.  For the bimodal
medium of this section the same check, performed in the band basis
via the Galerkin $2\times2$ projection of Table~\ref{tab:galerkin},
quantifies how closely the band chart approximates the modal chart
(they agree exactly only in the limit in which $\varphi_1$ is the
balanced band indicator).  The notebook implements all three steps in
Part~C.

\section*{Reproducibility}
All three computations are self-contained: Sec.~\ref{sec:A} uses only the
pore-class operator~\eqref{eq:Kdisc}--\eqref{eq:Jdisc} and a dense
symmetric eigensolver; Sec.~\ref{sec:B} uses a cubic lattice, a
connected-components search and a sparse Laplacian solve;
Sec.~\ref{sec:C} repeats Sec.~\ref{sec:A} on a bimodal density and adds a
$2\times2$ generalized eigenproblem for the Galerkin projection;
the gap-ratio sweep of S1 and the sum-rule check of S3 are
post-processing of the same eigenpairs and add no new computation.  Random
seeds are fixed.  No parameter is fitted to any of the reported results.
A single executed notebook,
\texttt{SM\_PRE2\_numerical\_verification.ipynb}, reproduces every number,
table, and figure in this Supplement (Sections S1--S3 correspond to its
Parts A--C); it also carries the independent \textsc{OpenPNM} cross-check of
the S2 percolation result shown in Fig.~\ref{fig:openpnm}. All computations use the same operator
convention throughout, so the rates reported in S1 and S3 are directly
comparable; S1 and S3 were regenerated with the geometric-mean gate of
Eq.~\eqref{eq:Kdisc} and the band width $w=0.10$, and the OSF record will
carry the corresponding commit; Figs.~\ref{fig:A}, \ref{fig:C} and \ref{fig:C2}, and Fig.~1 of the main text, are the outputs of the notebook \texttt{SM\_RichardsLimit\_S1\_S3\_regeneration}, Figs.~\ref{fig:thetac} and \ref{fig:B} of the percolation notebook, and Fig.~\ref{fig:openpnm} of the companion paper's pore-network notebook.

\appendix
\section{Glossary}
\label{app:glossary}
Terms used here in a specific technical sense.

\begin{description}
\item[Active band] The pore classes that are \emph{partially} filled,
$g_\eq(1-g_\eq)>0$.  Full pores have nothing to give and empty pores
nothing to give away, so only these exchange water and only these carry
the operator's modes.  In S1 it is $114$ of $160$ classes.

\item[Algebraic connectivity] See \emph{Fiedler value}.

\item[Eigenmode $\varphi_n$, rate $\lambda_n$] A shape of disturbance that
decays without changing shape, $\propto e^{-\lambda_n t}$, and its decay
rate.  Any disturbance is a superposition of these, so they are the
soil's natural relaxation patterns.

\item[Fiedler value $\lambda_1$] The smallest nonzero eigenvalue of a graph
Laplacian, also called the \emph{algebraic connectivity}.  It is positive
exactly when the network is in one connected piece and drops to zero when
it splits, so it detects connectivity rather than size.  Its eigenvector,
the \emph{Fiedler vector}, changes sign across the network's weakest cut
and so points at the bottleneck.

\item[Galerkin projection] Restricting an operator to a small subspace by
sandwiching it between the subspace's basis vectors.  Used in S3 to build
the two-domain model from the full operator without assuming its form: the
subspace is spanned by the two band indicators.

\item[Gates] The occupancy factors $g_\eq(1-g_\eq)$ in the conductance.
They close when a class is full or empty and peak at half filling; they
are the exclusion rule (donor full, receiver empty) written smoothly.

\item[Graph Laplacian] For a network with edge weights $W$ and degrees
$D$, the matrix $L=D-W$.  Symmetric, negative semidefinite, and with the
constants as its kernel when the network is connected, which is the
algebraic statement that nothing leaks out.

\item[Participation ratio $\mathrm{PR}$] The effective number of pore
classes a mode occupies, Eq.~\eqref{eq:ipr}.  $\mathrm{PR}=1$ means the
mode lives on a single class (localized); $\mathrm{PR}=N_{\rm band}$ means
it is spread over the whole band (extended).

\item[Spanning cluster] The connected body of water that reaches from one
face of the sample to the opposite one.  Water in isolated pockets is
still water content, but it cannot conduct, so only the spanning cluster
enters a permeability.

\item[Spectral gap] The distance $\lambda_1$ between the one non-decaying
mode (water mass) and the slowest decaying one.  It sets the relaxation
time $1/\lambda_1$, bounds the inverse operator, and closes at
percolation.

\item[Stokes permeability] The permeability obtained by actually solving
viscous flow on the pore network under an imposed pressure drop, as
opposed to inferring it from a formula.  Used in S2 as the independent
check on the spectral prediction.
\end{description}